\definecolor{linkmuted}{RGB}{58,78,104}
\definecolor{citesky}{RGB}{0,0,139}
\definecolor{eqred}{RGB}{180,30,45}
\definecolor{weblight}{RGB}{110,150,190}
\let\NAT@fullfalse\NAT@fulltrue
\providecommand{\tabularnewline}{\\}
\newtheoremstyle{remarkbf}%
  {3pt}%
  {3pt}%
  {\normalfont}%
  {}%
  {\normalfont\bfseries}%
  {.}%
  {.5em}%
  {}%
\theoremstyle{remarkbf}
\newtheorem{rem}{\protect\remarkname}
\theoremstyle{plain}
\newtheorem{assumption}{\protect\assumptionname}
\theoremstyle{plain}
\newtheorem{lem}{\protect\lemmaname}
\theoremstyle{plain}
\newtheorem{thm}{\protect\theoremname}
\theoremstyle{plain}
\newtheorem{prop}{\protect\propositionname}
\theoremstyle{definition}
\newtheorem{Module}{Module}
\theoremstyle{plain}
\g@addto@macro\normalsize{%
 \abovedisplayskip=7pt plus 1pt minus 2pt
 \abovedisplayshortskip=5.5pt plus 1pt minus 2pt
 \belowdisplayskip=7pt plus 1pt minus 2pt
 \belowdisplayshortskip=6.5pt plus 1pt minus 2pt
}{}{}
\definecolor{pAlgae}{RGB}{0, 0, 102}
\date{}
\providecommand{\assumptionname}{Assumption}
\providecommand{\lemmaname}{Lemma}
\providecommand{\remarkname}{Remark}
\providecommand{\theoremname}{Theorem}
\providecommand{\propositionname}{Proposition}
\renewcommand{\hat}[1]{\widehat{#1}}
\renewcommand{\tilde}[1]{\widetilde{#1}}
\begin{document}
 
\global\long\def\qed{\ \qedsymbol}%
 \makeatletter \renewenvironment{proof}[1][\proofname]{\par\pushQED{\qed}\normalfont\topsep6\p@\@plus6\p@\relax\trivlist\item[\hskip\labelsep\normalfont\bfseries #1\@addpunct{.}]\ignorespaces}{\popQED\endtrivlist\@endpefalse}
\makeatother

\global\long\def\a{\alpha}%
\global\long\def\b{\beta}%
\global\long\def\g{\gamma}%
\global\long\def\d{\delta}%
\global\long\def\e{\epsilon}%
\global\long\def\l{\lambda}%
\global\long\def\t{\theta}%
\global\long\def\o{\omega}%
\global\long\def\s{\sigma}%
\global\long\def\G{\Gamma}%
\global\long\def\D{\Delta}%
\global\long\def\L{\Lambda}%
\global\long\def\T{\Theta}%
\global\long\def\O{\Omega}%
\global\long\def\R{\mathbb{R}}%
\global\long\def\N{\mathbb{N}}%
\global\long\def\Q{\mathbb{Q}}%
\global\long\def\I{\mathbb{I}}%
\global\long\def\P{P}%
\global\long\def\E{\mathbb{E}}%
\global\long\def\B{\mathbb{B}}%
\global\long\def\S{\mathbb{S}}%
\global\long\def\V{\mathbb{V}\text{ar}}%
\global\long\def\X{{\bf X}}%
\global\long\def\cX{\mathscr{X}}%
\global\long\def\cY{\mathscr{Y}}%
\global\long\def\cA{\mathscr{A}}%
\global\long\def\cB{\mathscr{B}}%
\global\long\def\cM{\mathscr{M}}%
\global\long\def\cN{\mathcal{N}}%
\global\long\def\cG{\mathcal{G}}%
\global\long\def\cC{\mathcal{C}}%
\global\long\def\sp{\,}%
\global\long\def\es{\emptyset}%
\global\long\def\mc#1{\mathscr{#1}}%
\global\long\def\ind{\mathbf{\mathds{1}}}%
\global\long\def\indep{\perp}%
\global\long\def\any{\forall}%
\global\long\def\ex{\exists}%
\global\long\def\p{\partial}%
\global\long\def\cd{\cdot}%
\global\long\def\Dif{\nabla}%
\global\long\def\imp{\Rightarrow}%
\global\long\def\iff{\Leftrightarrow}%
\global\long\def\up{\uparrow}%
\global\long\def\down{\downarrow}%
\global\long\def\arrow{\rightarrow}%
\global\long\def\rlarrow{\leftrightarrow}%
\global\long\def\lrarrow{\leftrightarrow}%
\global\long\def\abs#1{\left|#1\right|}%
\global\long\def\norm#1{\left\Vert #1\right\Vert }%
\global\long\def\rest#1{\left.#1\right|}%
\global\long\def\bracket#1#2{\left\langle #1\middle\vert#2\right\rangle }%
\global\long\def\sandvich#1#2#3{\left\langle #1\middle\vert#2\middle\vert#3\right\rangle }%
\global\long\def\third#1{\frac{#1}{3}}%
\global\long\def\ellipsis{\textellipsis}%
\global\long\def\sand#1{\left\lceil #1\right\vert }%
\global\long\def\wich#1{\left\vert #1\right\rfloor }%
\global\long\def\sandwich#1#2#3{\left\lceil #1\middle\vert#2\middle\vert#3\right\rfloor }%
\global\long\def\inprod#1{\left\langle #1\right\rangle }%
\global\long\def\ol#1{\overline{#1}}%
\global\long\def\ul#1{\underline{#1}}%
\global\long\def\td#1{\tilde{#1}}%
\global\long\def\bs#1{\boldsymbol{#1}}%
\global\long\def\upto{\nearrow}%
\global\long\def\downto{\searrow}%
\global\long\def\dto{\stackrel{d}{\to}}%
\global\long\def\asto{\rightarrow_{a.s.}}%
\global\long\def\gto{\rightarrow}%
\global\long\def\fto{\Rightarrow}%

\title{\vspace{-1.5cm}Bagging the Network\thanks{We thank Donald Andrews, Xiaohong Chen, Chih-Sheng Hsieh, Bryan Graham,
Matthew O. Jackson, Yuichi Kitamura, Oliver Linton, Shuyang Sheng, Martin Weidner, Weichen
Wang, Jun Yu, Yichong Zhang, and seminar participants at the University of Sydney, University of Macau, Singapore Management University, Singapore Workshop on Econometrics and Data Sciences, Asia Meeting of the Econometric Society (NYU Abu Dhabi), Cowles Conference on Econometrics (Yale), and Econometric Society World Congress for comments.}}
\author{Ming Li \qquad Zhentao Shi \qquad Yapeng Zheng\thanks{Li: Department of Economics and Risk Management Institute, National University of Singapore, \texttt{mli@nus.edu.sg}. Shi (corresponding author): Department of Economics, The Chinese University of Hong Kong, \texttt{zhentao.shi@cuhk.edu.hk}. Zheng: Department of Economics, The Chinese University of Hong Kong, \texttt{yapengzheng@link.cuhk.edu.hk}.}}
% \date{April 28, 2026}

\setlength{\skip\footins}{28pt plus 4pt minus 2pt}
\maketitle

\thispagestyle{empty}

%\vspace{-1cm}

\begin{abstract}
We develop a unified estimation and inference framework for dyadic network formation with individual fixed effects, covering both transferable-utility (TU) and nontransferable-utility (NTU) links under general link functions. Under NTU, bilateral consent makes the fixed effects non-additive and the log-likelihood non-concave in the high-dimensional fixed effects, so differencing and profile-likelihood methods fail. We combine a joint method-of-moments initial estimator, a Le Cam one-step refinement, and a split-network jackknife bagging step that removes the incidental parameter bias without inflating variance. The resulting homophily estimator is asymptotically normal, unbiased, and attains the Cram\'er--Rao lower bound without requiring the log-likelihood to be concave in the fixed effects; we extend the theory to average partial effects and establish robustness to link-function misspecification. Simulations under both TU and NTU designs confirm these predictions. Applied to Thai village networks (TU), kinship and wealth differences both increase linking; in the Nyakatoke risk-sharing network (NTU), wealth differences have no significant effect, mirroring the two regimes' distinct logics.

\bigskip{}

\noindent\textit{Keywords:} dyadic network formation, transferable utilities, nontransferable utilities,
one-step approximation, fixed effects.
\end{abstract}
\clearpage
\setlength{\skip\footins}{10.8pt plus 4pt minus 2pt}

\begin{bibunit}

\section{Introduction}\label{sec:introduction}

Dyadic network formation models, which describe how links arise between pairs of agents, are central to understanding social and economic phenomena such as friendship networks, risk-sharing arrangements, and inter-firm alliances. In such models, individual unobserved heterogeneity is
pervasive: agents differ in sociability, ability, or other latent traits
that simultaneously affect their desirability as partners and correlate
with observed covariates. Treating this heterogeneity as individual
fixed effects is therefore essential, yet it poses econometric
challenges, most notably the incidental parameter problem, which leads
to asymptotic bias and substantial computational burdens when the
number of fixed effects grows with the network size. Whether fixed-effects inference for dyadic network-formation models can be pushed to Cram\'er--Rao efficiency under \emph{general} link functions, particularly for bilateral-consent settings where fixed effects enter non-additively, has remained an open question, which this paper closes.

Whether a link involves transferable or nontransferable utilities has
direct consequences for econometric analysis. Under transferable
utilities (TU), the total surplus from a link is additively separable
in the two agents' fixed effects, and a range of methods exploit this
separability and specific distributional assumptions on the idiosyncratic shocks to eliminate or profile out the fixed effects
\citep{chatterjee2011random,graham2017econometric,
dzemski2019empirical}. A method that accommodates general link functions is still lacking in the literature. Furthermore, many economically
important networks involve nontransferable utilities (NTU):
friendships form when both individuals are willing, risk-sharing
arrangements require mutual consent, and bilateral trade links are
sustained only when each party individually benefits. Under NTU, each
agent's latent type enters the linking probability through a separate
agreement condition, so the fixed
effects are no longer additively separable and the Bernoulli log-likelihood is no longer concave in them. This non-additivity and non-concavity simultaneously break the differencing and profile-likelihood methods developed for TU, all of which rely on either additive bias structure or global concavity of the objective in the nuisance parameters. Despite the empirical prevalence of
NTU settings, no existing
framework delivers inference for homophily parameters or consistent
estimation of fixed effects under NTU with a general link function.
Bilateral consent without side payments characterizes a significant proportion of networks studied in applied research, including friendships, risk-sharing, collaboration, and information exchange. This gap is therefore consequential: applied researchers currently have no available procedure to compute standard errors and conduct hypothesis tests for precisely the class of networks they routinely encounter.

This paper proposes a unified parametric framework for estimation and
inference under both TU and NTU with general link functions, based on a single large network with
observed pairwise covariates and individual fixed effects.\footnote{We provide a more in-depth comparison between TU and NTU in
Remark~\ref{rem:tu_vs_ntu}.} Our estimation
strategy proceeds in three steps. First, we construct a joint method-of-moments (JMM) estimator for the homophily parameters $\b_0$ and
the fixed effects $\bm{\alpha}_0$, and establish consistency for both and normality for the homophily parameter. Second, we refine the JMM estimator of  $\beta_0$  via
\citet{le1969theorie}'s one-step approximation to the maximum
likelihood estimator (MLE), which requires merely a single Newton-type
update and avoids the computational burden and numerical instability
of full maximum likelihood optimization with high-dimensional fixed
effects. Third, we debias the one-step estimator, which inherits
the asymptotic bias of the MLE, through a split-network jackknife
combined with bootstrap aggregating (bagging)
\citep*{breiman1996bagging,hirano2017forecasting}, a technique that is novel in the network formation literature.

Our contributions are threefold. \emph{Theoretically}, we deliver the first asymptotic inference and efficiency theory for homophily parameters under NTU, with $\ell_{\infty}$-consistent estimation of the individual fixed effects; the bagging estimator $\widehat{\b}_{\text{BG}}$ is asymptotically normal, asymptotically unbiased, attains the Cram\'er--Rao lower bound \citep[CRLB,][]{rao1992information}, and is robust to the choice of random splits. \emph{Methodologically}, our approach sidesteps log-likelihood concavity altogether: JMM solves a degree-matching moment condition rather than maximizing the likelihood, while the one-step and bagging refinements use the outer-product Fisher information, which is positive definite by construction. Concavity fails structurally under NTU and under TU with non-log-concave shock densities; our framework is the first to deliver Cram\'er--Rao-efficient inference in both cases. We also establish asymptotic normality for the average partial effects (APEs), robustness to link-function misspecification, and a multi-network extension. \emph{Empirically}, we provide the first comparative evidence of how the TU vs.\ NTU distinction shapes estimated homophily: wealth differences drive linking under TU (Thai villages) but not under NTU (Nyakatoke).

Efficiency and asymptotic unbiasedness matter here because they determine whether applied work can distinguish economically meaningful effects from noise. An inefficient estimator yields unnecessarily wide confidence intervals; a biased estimator---even if $\sqrt{N}$-consistent---invalidates plug-in hypothesis tests. Our bagging estimator is simultaneously unbiased and CRLB-efficient, yielding valid confidence intervals as narrow as the model allows.

Simulation results confirm that the proposed estimators for the homophily
parameters, individual fixed effects, and APEs perform as predicted
by the theory under both the NTU baseline and a TU counterpart using either
logistic or probit link functions. In particular, the non-concavity of the log-likelihood
renders MLE unreliable for estimating the high-dimensional fixed effects
$\bm{\alpha}_{0}$: MLE produces substantially
larger root mean squared errors (RMSEs) than JMM, with the failure
most pronounced under bimodal or hub--periphery designs
(Figure~\ref{fig:alpha_rmse} in the Supplemental Material).
We present two empirical examples that exhibit the TU and NTU
regimes. First, under TU, we apply our method to the Townsend Thai
village networks \citep{kinnan2024propagation}, pooling across 16
villages to estimate link formation for financial, operations, and
labor transactions. We find that kinship is a strong and broadly
positive predictor across all three networks, and that greater
absolute net-worth differences between households are also associated
with more frequent linking, consistent with gains-from-trade in
transactional relationships under TU. Second,
under NTU, we revisit the Nyakatoke risk-sharing network
\citep*{deweerdt2004risk}, where our method indicates that wealth
differences have no statistically significant effect on link
formation. The contrast between these two settings is not incidental: TU surplus-sharing predicts that wealth heterogeneity \emph{fuels} linking in financial, operations, and labor transactions (as in the Thai estimates), while the bilateral-consent logic of risk pooling predicts that wealth heterogeneity does \emph{not} drive linking when both sides must consent (as in the Nyakatoke estimate). The same econometric framework recovers both patterns without imposing the separability that distinguishes them. A code-and-data demonstration
of our proposed methods is available at the GitHub repository \url{https://github.com/YapengZheng/bagging_network}.

\paragraph{Literature Review.} Our paper contributes to the literature
on dyadic network formation in a single large network. Most existing
work studies TU, which allows individual fixed effects to be eliminated
by arithmetic differencing (\citealp{graham2017econometric};
see \citealp{graham2020network}, for a review). \citet{gao2023logical}
study a semiparametric model under NTU using logical differencing, but without
inference for homophily parameters or estimators of fixed effects.
We complement their work by establishing inference for homophily parameters,
delivering $\ell_{\infty}$-consistent estimators of fixed effects,
and developing asymptotic results for the APEs.

Our paper also builds on \citet{graham2017econometric}, who introduces
a tetrad logit estimator and a joint MLE under TU and logistic
link functions. These methods, as well as functional differencing
\citep{bonhomme2012functional}, do not extend
to NTU because the fixed effects enter the linking probability
non-additively (as $p(\alpha_i,\alpha_j,x_{ij}^\top\beta)$ rather than $p(\alpha_i+\alpha_j+x_{ij}^\top\beta)$). Recent contributions
also remain within TU: \citet{hughes2026estimating} develops a jackknife
bias correction, \citet{Qu2025} propose a projection approach for
directed networks, and \citet*{gao2020nonparametric} studies semiparametric estimation. In
contrast, our estimator applies under both TU and NTU. An
earlier working paper by \citet{shi2016structural} analyzes the MLE for this model under NTU. 
The present paper subsumes and supersedes it by establishing formal efficiency via the Le Cam one-step approximation, 
developing a stable procedure that avoids the non-concavity of direct MLE (see Section~\ref{sec:model_and_computation} and Figure~\ref{fig:alpha_rmse} in the Supplemental Material), and introducing bagging for debiasing.

Methodologically, our work relates to the large-$T$ panel literature
on nonlinear fixed-effect models \citep{hahn2004jackknife,dhaene2015split,fernandez2016individual,fernandez2018fixed,honore2021identification}.
These methods rely on concavity of log-likelihood functions and/or
sparsity assumptions on certain derivatives of functionals of the fixed effects
that are hard to verify in our setting. Instead, we adapt the sample-splitting
idea (see \citealp{mei2026nickell,liao2024nickell} for tackling Nickell-type
biases in panel predictive regressions using split-sample strategies)
and establish formally that bagging delivers unbiased and efficient
estimation. Related work on orthogonalized
estimators \citep{bonhomme2024neyman} requires
an additive bias structure that fails under NTU. The Bernoulli log-likelihood is non-additive because the $\log(1-p_{ij})$ term involves the product $F(\alpha_i+x_{ij}^\top\beta)\,F(\alpha_j+x_{ij}^\top\beta)$, which violates the additive bias structure required by their Assumption~3.

Finally, there
is a line of work on strategic network formation and empirical games
based on pairwise stability \citep[e.g.,][]{jackson1996strategic,de2018identifying}.
These models incorporate externalities but typically impose restrictions
on heterogeneity or the degree distribution and often require TU;
\citet{GaoLiXu2026TIS} recently provide tractable identification in strategic models with unobserved heterogeneity under TU.
Our fixed-effects approach, which accommodates both TU and NTU and
permits arbitrary correlation between observables and the fixed effects,
is therefore complementary to and methodologically distinct from the
existing literature (for a review of the two approaches,
see \citealp{de2020econometric}).

\paragraph{Organization.} Section~\ref{sec:model_and_computation} states the model and estimation algorithm; Section~\ref{sec:estimation_and_inference} develops the asymptotic theory; Section~\ref{sec:extensions} covers APEs and link-function misspecification; Sections~\ref{sec:simulation} and~\ref{sec:empirical_applications} report simulations and empirical applications; Section~\ref{sec:conclusion} concludes. Appendices collect notation, supporting lemmas, and proofs; the Supplemental Material contains proofs of supporting lemmas, the multi-network extension, extended simulations, and data descriptions.

\paragraph{Notation.} Let ``$\coloneqq$'' denote a definition, and let the
superscript ``$\top$'' denote the transpose of a vector or a matrix.
We use boldface for variables of increasing dimension with $n$.
For example, the true fixed effects $\bm{\alpha}_{0}=\left(\a_{i0}\right)_{1\leq i\leq n}$
are $n\times1$. For an $n\times1$ vector $\mathbf{a}=(a_{1},\dots,a_{n})^{\top}$,
its $\ell_{1}$ norm is $\Vert\mathbf{a}\Vert_{1} \coloneqq \sum_{i=1}^{n}|a_{i}|$,
$\ell_{2}$ norm is $\Vert\mathbf{a}\Vert_{2} \coloneqq (\sum_{i=1}^{n}a_{i}^{2})^{1/2}$,
and $\ell_{\infty}$ norm is $\Vert\mathbf{a}\Vert_{\infty} \coloneqq \max_{1\leq i\leq n}|a_{i}|$.
When $O(\cdot)$ (and other notation for order) is written for a vector
(or matrix), it means that each element in the vector (or matrix)
is of the order in $O(\cdot)$. Here, ``$\text{plim}$'' denotes
the probability limit, ``$\stackrel{p}{\to}$'' convergence in probability,
and ``$\stackrel{d}{\to}$'' convergence in distribution. Unless
otherwise noted, for all convergence results we pass $n\to\infty$.
For an $n\times n$ matrix $\mathbf{A}$, we write $\Vert\mathbf{A}\Vert_{1} \coloneqq \max_{1\leq i\leq n}\Vert\mathbf{A}_{\cdot i}\Vert_{1}$,
$\Vert\mathbf{A}\Vert_{\infty} \coloneqq \max_{1\leq i\leq n}\Vert\mathbf{A}_{i\cdot}\Vert_{1}$
and $\Vert\mathbf{A}\Vert_{\max} \coloneqq \max_{1\leq i,j\leq n}|\mathbf{A}_{ij}|$,
where $\mathbf{A}_{\cdot i}$ and $\mathbf{A}_{i\cdot}$ are the $i$th
column and row of $\mathbf{A}$, respectively. To simplify notation, we write $p_{ij}(\bm{\alpha},\beta) \coloneqq p(\alpha_{i},\alpha_j,x_{ij}^{\top}\beta)$. The bare form $p_{ij}$ denotes this function with the arguments $(\bm{\alpha},\beta)$ suppressed; we write $p_{ij,0}\coloneqq p_{ij}(\bm{\alpha}_{0},\beta_{0})$ for its value at the true parameters. Evaluation at any other specific arguments is written out explicitly. The same convention applies to other objects that are functions of $(\bm{\alpha},\beta)$. Finally, the abbreviation ``w.p.a.1'' stands for ``with probability approaching one.''

\section{Model and Computation}\label{sec:model_and_computation}

We consider an undirected network formed among agents $i\in\mathcal{I}_{n} \coloneqq \{1,\dots,n\}$.
Hence, there are $N\coloneqq\binom{n}{2}$ dyads to be linked. 
An observed link $Y_{ij}$ between $i$ and $j$ is formed with probability:
\begin{equation}
\Pr(Y_{ij}=1|X_{ij},\alpha_{i0},\alpha_{j0}) \coloneqq p(\alpha_{i0},\alpha_{j0},X_{ij}^\top\beta_0)\text{ for }1\leq i\neq j\leq n,\label{eq:observed_link}
\end{equation}
where $p(\cdot):\R^3 \to (0,1)$ is a user-specified symmetric function in $\alpha_i$ and $\alpha_j$.
We rule out self-loops, i.e., $Y_{ii}=0,\ i\in\mathcal{I}_{n}$. The link probability $p_{ij,0}$ is jointly determined by the two agent-specific fixed effects $(\alpha_{i0},\alpha_{j0})$ and the dyad-specific
index $X_{ij}^{\top}\beta_{0}$ that captures the homophily effect
in the observable characteristics of each pair $(i,j)$, where $X_{ij}\in\mathbb{R}^{K}$
denotes the symmetric dyad-level covariates for all $i\neq j$. The log-likelihood
is 
\begin{align*}
\ell_{n}(\bm{\alpha},\beta) \coloneqq & \sum_{i=1}^{n}\sum_{j>i}\{Y_{ij}\log p_{ij}(\bm{\alpha},\beta)+(1-Y_{ij})\log\left(1-p_{ij}(\bm{\alpha},\beta)\right)\}.
\end{align*}

\begin{rem}
\label{rem:tu_vs_ntu}Our network formation model (\ref{eq:observed_link})
covers both TU and NTU:
\begin{align}
Y_{ij}= & \,\ind\left\{ \alpha_{i0}+\alpha_{j0}+X_{ij}^{\top}\beta_{0}-\epsilon_{ij}>0\right\} \text{ and}\tag{TU}\\
Y_{ij}= & \,\ind\left\{ \alpha_{i0}+X_{ij}^{\top}\beta_{0}-\epsilon_{ij}>0\right\} \ind\left\{ \alpha_{j0}+X_{ji}^{\top}\beta_{0}-\epsilon_{ji}>0\right\},\tag{NTU}
\end{align}
where $\epsilon_{ij}$ is an idiosyncratic error with a known distribution.
The model (TU) essentially asserts that, if the joint
 surplus generated by a bilateral link $\alpha_{i0}+\alpha_{j0}+X_{ij}^{\top}\beta_{0}-\epsilon_{ij}$
is positive, then the link between $i$ and $j$ is formed. An important
assumption behind the model (TU) is that the link
surplus can be freely distributed between $i$ and $j$, and that
bargaining efficiency is always achieved, which is a strong assumption in many
networks (e.g., risk-sharing networks and friendship networks). The
model (NTU), on the other hand, requires that
the utility surplus from the link for both $i$ and $j$ be strictly
positive in order to form a link, which is arguably more realistic
in the aforementioned networks. Furthermore, the model (NTU)
reflects the fact that the party with relatively lower utility is
the pivotal one in link formation. Finally, it can be shown (see \citealp{gao2023logical})
that the model (NTU) can accommodate homophily
effects in both observable and unobservable covariates.
\end{rem}
Given the model, we introduce the algorithm to estimate the homophily
coefficient $\b_{0}$. There are three sequential modules---JMM, one-step (OS), and Bagging (BG)---that lead
to $\widehat{\b}_{\text{BG}}$. Specifically, Module JMM provides
an initial consistent estimator, which, however, does not reach the
CRLB and is biased. We refine the JMM estimator with the one-step
adjustment to achieve the CRLB. Finally, we apply the bagged split-network
jackknife to debias the one-step estimator while preserving its efficiency.

\medskip{}

We define a few objects before each module. Let ${\bf Y}=\left(Y_{ij}\right)_{1\leq i,j\leq n}$
be the $n\times n$ adjacency matrix and ${\bf X}=\left(X_{ij}\right)_{1\leq i,j\leq n}$
be the $n\times n\times K$ random tensor of covariates. Denote their
realizations by $\mathbf{y}=(y_{ij})_{1\leq i,j\leq n}$ and ${\bf x}=\left(x_{ij}\right)_{1\leq i,j\leq n}$,
respectively. The \emph{degree} $d_{i} \coloneqq \sum_{j\neq i}y_{ij}$ is
defined for each $i\in\mathcal{I}_{n}$ of the observed network $\mathbf{Y}$.
Define a vector of moment functions $\mathbf{m}(\bm{\alpha},\beta) \coloneqq (\mathbf{m}_{1}^{\top}(\bm{\alpha},\beta),m_{2}^{\top}(\bm{\alpha},\beta))^{\top}$,
where $\mathbf{m}_{1}(\bm{\alpha},\beta) \coloneqq (d_{i}-\sum_{j\neq i}p_{ij}(\bm{\alpha},\beta))_{i=1}^{n}$
is an $n$-dimensional function that concerns the average degree of
each $i$, and $m_{2}(\bm{\alpha},\beta) \coloneqq \sum\limits_{i=1}^{n}\sum\limits_{j>i}[y_{ij}-p_{ij}(\bm{\alpha},\beta)]x_{ij}$
is a $K$-dimensional function.

\begin{Module}[JMM]\label{alg:mod_jmm} The \textbf{JMM estimator}
$(\widehat{\bm{\alpha}},\widehat{\beta})$ is the solution
to the $(n+K)$-equation system $\mathbf{m}(\bm{\alpha},\beta)=0$.

\end{Module}
The JMM estimator is just-identified with $n+K$ moment conditions for $n+K$ unknowns, and we choose degree-based equations in $\mathbf{m}_{1}$ for their simplicity and stability \citep{graham2017econometric}.
To find the solution to $\mathbf{m}(\bm{\alpha},\beta)=0$, for each
$\b$ we let 
\begin{equation}
r_{i}(\bm{\alpha},\beta)=\alpha_{i}+(n-1)^{-1}\bigg(d_{i}-\sum_{j\neq i}p_{ij}(\bm{\alpha},\beta)\bigg),\qquad i\in\mathcal{I}_{n}\label{eq:iter}
\end{equation}
and $\mathbf{r}(\bm{\alpha},\beta)=\big(r_{1}(\bm{\alpha},\beta),\dots,r_{n}(\bm{\alpha},\beta)\big)^{\top}$.
The intuition is, for any $i$ when $d_{i}$ is strictly larger than
$\sum_{j\neq i}p_{ij}(\bm{\alpha},\beta)$, we would like to increase
$\a_{i}$ such that each $p_{ij}(\bm{\alpha},\beta)$ for $j\neq i$
is larger, and vice versa. Starting with an initial value $\bm{\alpha}^{0}$,
we iterate $\bm{\alpha}^{k+1}(\beta)=\mathbf{r}(\bm{\alpha}^{k}(\beta),\beta)$
until convergence to obtain $\widehat{\bm{\alpha}}\left(\b\right)$,
and then we solve the finite dimensional equations $m_{2}(\widehat{\bm{\alpha}}\left(\b\right),\beta)=0$.
Unlike existing methods (e.g., Theorem 1.5 of \citealp{chatterjee2011random}; equation (17) of \citealp{graham2017econometric}), we do not rely on a specific functional-form assumption on the idiosyncratic shocks (e.g., logit or probit): $\widehat{\bm\alpha}(\beta)$ is recovered purely through a contraction-type fixed-point argument on (\ref{eq:iter}).

\medskip{}

The OS module involves the score and information matrix. Define
the score of $\ell_{n}$ as $\mathbf{s}(\bm{\alpha},\beta)=\big(\mathbf{s}_{1}^{\top}(\bm{\alpha},\beta),s_{2}^{\top}(\bm{\alpha},\beta)\big)^{\top}=\left(\partial\ell_{n}/\partial\bm{\alpha}^{\top},\partial\ell_{n}/\partial\beta^{\top}\right)^{\top}$,
and partition the information matrix 
\begin{equation}
\mathbf{I}(\bm{\alpha},\beta)=\mathbb{E}[\mathbf{s}(\bm{\alpha},\beta)\mathbf{s}(\bm{\alpha},\beta)^{\top}|\mathbf{x},\bm{\alpha}]\eqqcolon\begin{pmatrix}\mathbf{I}_{11}(\bm{\alpha},\beta) & \mathbf{I}_{12}(\bm{\alpha},\beta)\\
\mathbf{I}_{12}^{\top}(\bm{\alpha},\beta) & \mathrm{I}_{22}(\bm{\alpha},\beta)
\end{pmatrix}\label{eq:info_mat}
\end{equation}
into four compatible blocks. Define the concentrated score function
and information matrix of $\beta$ as 
\begin{align}
s_{n}(\bm{\alpha},\beta) & =s_{2}(\bm{\alpha},\beta)-\mathbf{I}_{12}(\bm{\alpha},\beta)^{\top}\mathbf{I}_{11}(\bm{\alpha},\beta)^{-1}\mathbf{s}_{1}(\bm{\alpha},\beta)\text{ and}\nonumber\\
\mathrm{I}_{n}(\bm{\alpha},\beta) & =\mathrm{I}_{22}(\bm{\alpha},\beta)-\mathbf{I}_{12}(\bm{\alpha},\beta)^{\top}\mathbf{I}_{11}(\bm{\alpha},\beta)^{-1}\mathbf{I}_{12}(\bm{\alpha},\beta),\label{eq:concentrated_info}
\end{align}
respectively.
Here, $s_n(\bm{\alpha},\beta)$ and $\mathrm{I}_n(\bm{\alpha},\beta)$ are the concentrated counterparts of $s_{2}(\bm\alpha,\beta)$ and $\mathrm{I}_{22}(\bm\alpha,\beta)$
after concentrating out the influence from estimating $\bm{\alpha}$, respectively.

\begin{Module}[OS]\label{alg:mod_os} Substitute the JMM estimator
$(\widehat{\bm{\alpha}},\hat{\beta})$ into 
\begin{equation}
\hat{\beta}_{\mathrm{OS}} \coloneqq \hat{\beta}+\mathrm{I}_{n}(\widehat{\bm{\alpha}},\hat{\beta})^{-1}s_{n}(\widehat{\bm{\alpha}},\hat{\beta}).\label{eq:beta_os}
\end{equation}
\end{Module}

Module OS is \citet{le1969theorie}'s one-step approximation of the MLE: starting from the $\sqrt{N}$-consistent JMM estimate, a single Newton update along the efficient-score direction closes the efficiency gap and delivers the CRLB asymptotically. In (\ref{eq:beta_os}), $s_n$ is the slope of the log-likelihood in $\beta$ after profiling out $\bm\alpha$, and $\mathrm{I}_n^{-1}$ rescales the step by the inverse curvature, so the adjustment is larger when the likelihood is flat in $\beta$ and smaller when it is sharply curved. Module OS sidesteps the non-concavity and numerical instability of direct MLE optimization (see Remark~\ref{rem:os_no_concavity}).

\medskip{}

Finally, the bagging step introduces randomization. Assume an even integer
$n$ for convenience. Let $t=1,2,\ldots,\tilde{T}_{n}$, for some
$\tilde{T}_{n}\leq\binom{n}{n/2}$, index an equal-sized random partition
of $\mathcal{I}_{n}$ into $\mathcal{I}_{1,n}^{(t)}$ and $\mathcal{I}_{2,n}^{(t)}$
such that $\mathcal{I}_{1,n}^{(t)}\cup\mathcal{I}_{2,n}^{(t)}=\mathcal{I}_{n}$,
$\mathcal{I}_{1,n}^{(t)}\cap\mathcal{I}_{2,n}^{(t)}=\emptyset$,
and the splits are independent over $t$.

\begin{Module}[BG]\label{alg:mod_bg} For each $t=1,\dots,\tilde{T}_{n}$,
hold the full-sample JMM estimator $\hat{\beta}$ fixed and, on the
subnetwork indexed by $\mathcal{I}_{1,n}^{(t)}$, update only $\widehat{\bm{\alpha}}$
via the fixed-point iteration in (\ref{eq:iter}). Apply the OS step
on that split to obtain $\hat{\beta}_{\mathrm{OS,1}}^{(t)}$ using the new $\widehat{\bm\alpha}$ and full-sample $\widehat{\beta}$. Repeat
the same procedure on $\mathcal{I}_{2,n}^{(t)}$ to obtain $\hat{\beta}_{\mathrm{OS,2}}^{(t)}$.
Apply the bagged jackknife to obtain the \textbf{BG estimator} 
\begin{equation*}
\hat{\beta}_{\mathrm{BG}} \coloneqq 2\hat{\beta}_{\mathrm{OS}}-(2\tilde{T}_{n})^{-1}\sum_{t=1}^{\tilde{T}_{n}}(\hat{\beta}_{\mathrm{OS,1}}^{(t)}+\hat{\beta}_{\mathrm{OS,2}}^{(t)}).
\end{equation*}
\end{Module}

We employ split-network jackknife to debias $\hat{\beta}_{\mathrm{OS}}$.
Due to the equal splits of nodes, each of $\hat{\beta}_{\mathrm{OS,1}}^{(t)}$
and $\hat{\beta}_{\mathrm{OS,2}}^{(t)}$ incurs twice the leading
bias in the asymptotic expansion. If we apply the split-network jackknife
only once, then 
\[
\hat{\beta}_{\mathrm{OS-SJ}}^{\left(t\right)} \coloneqq 2\hat{\beta}_{\mathrm{OS}}-\dfrac{1}{2}\left(\hat{\beta}_{\mathrm{OS,1}}^{(t)}+\hat{\beta}_{\mathrm{OS,2}}^{(t)}\right)
\]
self-cancels the leading bias. However, the variance of $\hat{\beta}_{\mathrm{OS-SJ}}^{\left(t\right)}$
is doubled, since splitting the network in half causes the links between
nodes belonging to different subnetworks to be ignored. Furthermore,
splitting the whole network randomly makes the estimator computationally
unstable. To deal with these issues, we let $\tilde{T}_{n}\to\infty$
and indeed $\hat{\beta}_{\mathrm{BG}}=\tilde{T}_{n}^{-1}\sum_{t=1}^{\tilde{T}_{n}}\hat{\beta}_{\mathrm{OS-SJ}}^{\left(t\right)}$
averages $\hat{\beta}_{\mathrm{OS-SJ}}^{\left(t\right)}$ over $\tilde{T}_{n}$
independent splits.

\begin{rem}
It is asymptotically valid if we split the nodes into $G \geq 2$ equal parts and form a jackknife-type correction with appropriate weights, as discussed in \cite{dhaene2015split} for panel data. However, increasing $G$ reduces the number of nodes in each subsample, which degrades the quality of fixed-effect estimates and can cause numerical instability. Our approach instead fixes $G=2$, the minimal split needed for first-order bias correction, and aggregates over $\tilde{T}_n \to \infty$ independent random splits. This bagging strategy is crucial: a single split-network jackknife eliminates the leading bias but doubles the asymptotic variance, whereas averaging over many splits deflates the variance back to the Cram\'er--Rao lower bound (Theorem~\ref{thm:bagging_normality}).
\end{rem}

When computing $\hat{\beta}_{\mathrm{OS,1}}^{(t)}$ and $\hat{\beta}_{\mathrm{OS,2}}^{(t)}$
for each random split $t$, we do not recompute a split-specific initial
JMM estimate of $\beta$. Instead, the full-sample JMM estimator $\widehat{\b}$
is retained, $\widehat{\bm{\alpha}}$ is re-estimated on each split
via (\ref{eq:iter}), and the OS update then produces the split-specific
estimators $\hat{\beta}_{\mathrm{OS,1}}^{(t)}$ and $\hat{\beta}_{\mathrm{OS,2}}^{(t)}$.
The rationale is that the incidental parameter bias of the OS estimator
comes from the estimation of high-dimensional fixed effects: $\hat{\alpha}_i$
converges at rate $\sqrt{(\log n)/n}$, which is slower than $\sqrt{N}\asymp n$
for $\hat{\beta}$, and it is this slower rate that generates the
$O(N^{-1/2})$ bias requiring correction.
Consequently, the procedure remains
computationally efficient for moderate values of $\tilde{T}_{n}$,
such as 200 or 400 in our simulations.

\begin{figure}[htbp]
\centering
\includegraphics[width=\textwidth,height=\textheight,keepaspectratio]{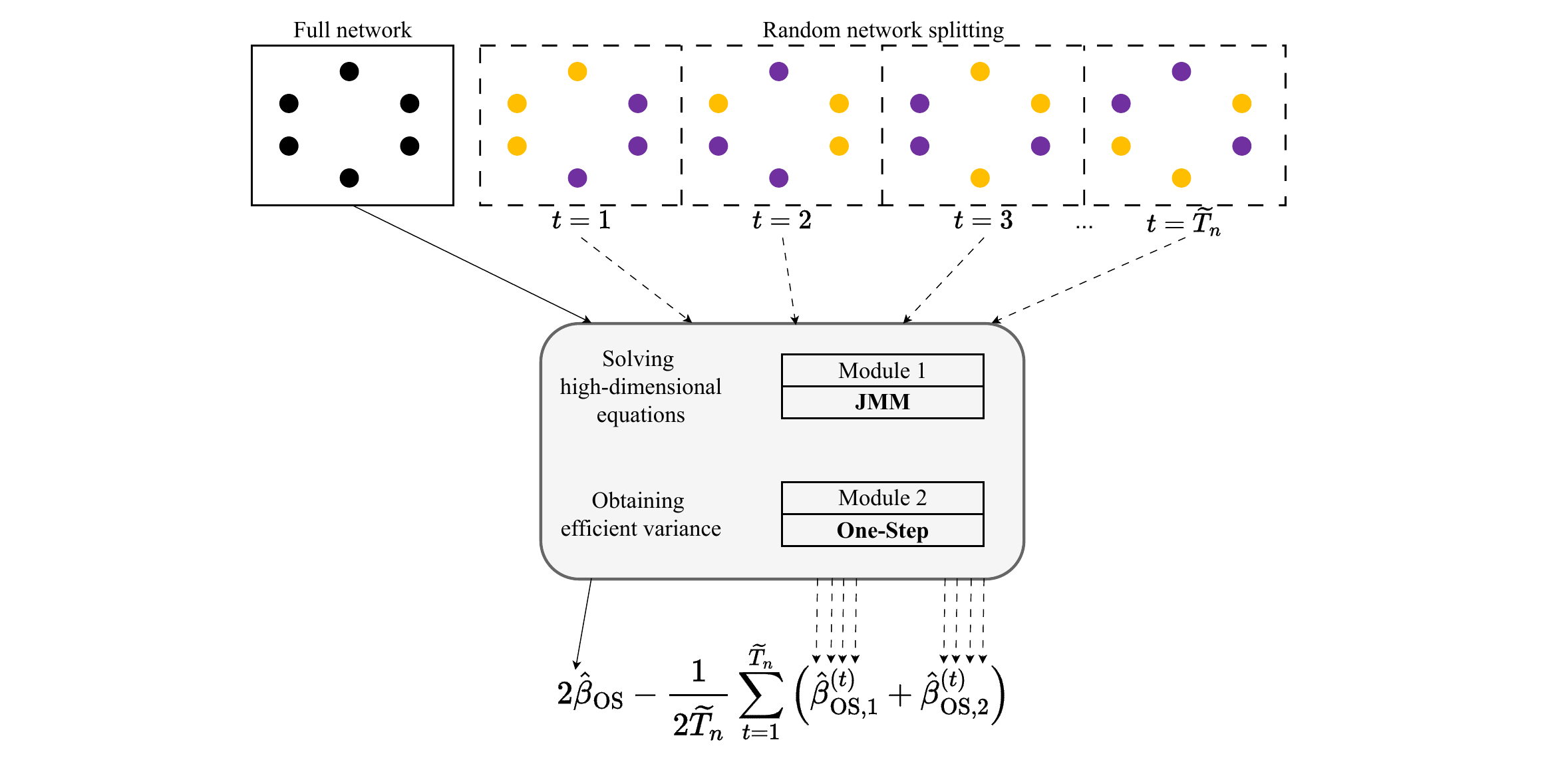}
\caption{Flowchart of the estimation procedure}
\label{fig:summary_of_bagging}
\end{figure}

Figure~\ref{fig:summary_of_bagging} summarizes the procedure. Modules JMM and OS are first applied to the full network (black dots) to obtain $\widehat{\b}_{\text{OS}}$. For each random split $t=1,\dots,\tilde{T}_{n}$, the nodes are divided into two halves (e.g., yellow and purple dots in split $t=1$). On each half, JMM re-estimates $\bm{\alpha}$ using the full-sample $\widehat{\b}$, and the OS step updates $\hat{\beta}$ to yield $\hat{\beta}_{\mathrm{OS,1}}^{(t)}$ and $\hat{\beta}_{\mathrm{OS,2}}^{(t)}$. Averaging the bagged jackknife over all splits produces $\hat{\beta}_{\mathrm{BG}}$.

\section{Large Sample Theory}\label{sec:estimation_and_inference}

This section proves that the bagging estimator $\widehat{\b}_{\mathrm{BG}}$ is asymptotically normal, unbiased, and CRLB-efficient. We reach this result in three steps. Section~\ref{subsec:jmm_estimator} shows that the JMM estimator is $\sqrt{N}$-consistent for $\b_{0}$ and that the plug-in $\widehat{\bm\alpha}(\widehat{\b})$ converges to $\bm\alpha_{0}$ in the $\ell_{\infty}$ norm, enough to enable the one-step correction. Section~\ref{subsec:one_step_estimator} shows that the one-step estimator $\widehat{\b}_{\mathrm{OS}}$ attains the CRLB but carries an $O(n^{-1})$ incidental-parameter bias. Section~\ref{subsec:bagging} then shows that bagging removes this bias without inflating the asymptotic variance.

\bigskip

We first state four baseline assumptions that underlie the theoretical
results.
\begin{assumption}[Correctly Specified Model]
\label{assu:model}The conditional likelihood of $\mathbf{Y}=\mathbf{y}$
given $\mathbf{X}=\mathbf{x}$ and $\bm{\alpha}=\bm{\alpha}_{0}$
is
\[
\Pr(\mathbf{Y}=\mathbf{y}|{\bf X}=\mathbf{x},\bm{\alpha}=\bm{\alpha}_{0})=\prod_{i=1}^{n}\prod_{j>i}p_{ij,0}^{y_{ij}}\left(1-p_{ij,0}\right)^{1-y_{ij}}.
\]
\end{assumption}
Assumption~\ref{assu:model} is similar to Assumption 1 of \citet{graham2017econometric},
except for two important differences. First, under potentially NTU $\alpha_{i0}$
and $\alpha_{j0}$ are not additively separable in the linking probability
between $i$ and $j$, and thus the tetrad logit estimator of \citet{graham2017econometric}
does not apply in our setting. Second, the functional form of $p\left(\cd\right)$
is general (subject to Assumption~\ref{assu:ass_f} below)
and includes the commonly used logistic (e.g., \citealp{chatterjee2011random,graham2017econometric,Qu2025})
and probit as special cases. The multiplicative form of the joint
likelihood implies that the links are formed independently of one another
conditional on the agent attributes. It is suitable in settings such
as risk-sharing networks, online friendships, trade networks, and
conflicts between nation-states. Conversely, settings with strategic link formation, where agents' linking decisions depend on other links in the network, fall outside this framework.

Assumption~\ref{assu:model} also requires the link function $p\left(\cd\right)$
to be correctly specified. It is well-known that, under regularity
conditions, the MLE converges to the parameter value that
minimizes the Kullback-Leibler divergence between the true and the
misspecified model. The issue is complicated
by the high-dimensional individual fixed effects and has not been investigated in the network formation literature. In the Supplemental Material, we discuss the impact of link function misspecification
on the theoretical results in Section~\ref{sec:sm_misspecification} 
and provide supporting simulation evidence in Section~\ref{sec:sm_simulations}.

\begin{assumption}[Bounded Support]
\label{assu:compact_support_and_sampling} $\mbox{}$

\begin{enumerate}[label=(\roman*), font=\upshape]
\item \label{assu:bounded_support:i} $\bm{\alpha}_{0}$ lies in the interior of a compact set $\mathbb{A}\subset\mathbb{R}^{n}$.
\item \label{assu:bounded_support:ii} $\beta_{0}$ lies in the interior of a compact set $\mathbb{B}\subset\mathbb{R}^{K}$.
\item \label{assu:bounded_support:iii} $X_{ij}$ satisfy $X_{ij}\in\mathbb{X}\subset\mathbb{R}^{K}$ for some compact set $\mathbb{X}$.
\end{enumerate}
\end{assumption}
Assumption~\ref{assu:compact_support_and_sampling}
collects and combines \citet{graham2017econometric}'s Assumptions
2 and 5(i). Together with Assumption~\ref{assu:ass_f} below, it implies that the
probability of a link forming between dyad $(i,j)$ is uniformly
bounded within $[\kappa,1-\kappa]$ for some $\kappa\in(0,1/2)$,
which requires the network to be dense.\footnote{Density of an undirected network is defined as $\rho_{n}=N^{-1}\sum_{i=1}^{n}\sum_{j>i}y_{ij}$. A network is dense if $\lim_{n\to\infty}\rho_{n}\in\left[c_{1},c_{2}\right]$
for some constant $0<c_{1}\leq c_{2}<1$.} The dense network makes it possible to estimate $\a_{i0}$ consistently
for each $i$.

Note that our theory in principle can allow the support of $X_{ij}$
to be unbounded; however, it would add little theoretical insight
but incur more technical complexity in the rates of convergence via
the Bernstein inequalities to bound the tail probabilities of random
variables. Assumption~\ref{assu:compact_support_and_sampling}\ref{assu:bounded_support:iii},
which is similar to \citet[ Assumption~2(ii)]{graham2017econometric},
allows us to focus on the main idea.

\begin{assumption}[Random Sampling]\label{assu:iid_sampling}
The sequence $\{(\alpha_{i0}, X_i)\}_{i=1}^n$ is an i.i.d.~sample
from a Borel probability measure $\nu$ on $\mathbb{A}\times\mathbb{X}_0$,
where $\mathbb{X}_0$ is a compact subset of a finite-dimensional
Euclidean space. The dyadic covariate satisfies $X_{ij}=h(X_i,X_j)$
for a fixed measurable symmetric function
$h:\mathbb{X}_0\times\mathbb{X}_0\to\mathbb{X}$, where $\mathbb{X}$
is the compact set in
Assumption~\ref{assu:compact_support_and_sampling}\ref{assu:bounded_support:iii}.
\end{assumption}
Assumption~\ref{assu:iid_sampling} parallels
\citet[Assumption~3]{graham2017econometric}: an agent is an i.i.d.~draw
from a population with unrestricted joint distribution over
$(\alpha_{i0}, X_i)$, accommodating arbitrary correlation between
unobserved heterogeneity and observed covariates. This is a
``fixed-effects'' treatment in the sense of \citet{chamberlain1984panel},
and provides the regularity needed for the limiting objects in
Lemma~\ref{lem:limits_exist} of Appendix~\ref{sec:supporting_lemmas}
to be well-defined.

For $r_{1},r_{2},r_{3}\in\mathbb{N}\cup\{0\}$, define
\[
p^{(r_{1},r_{2},r_{3})}(\alpha_{i},\alpha_{j},t)\coloneqq\frac{\partial^{r_{1}+r_{2}+r_{3}}p(\alpha_{i},\alpha_{j},t)}{\partial \alpha_{i}^{r_{1}}\partial \alpha_{j}^{r_{2}}\partial t^{r_{3}}}.
\]
Analogously to $p_{ij}(\bm{\alpha},\beta)$, we write $p_{ij}^{(r_{1},r_{2},r_{3})}(\bm{\alpha},\beta)\coloneqq p^{(r_{1},r_{2},r_{3})}(\alpha_{i},\alpha_{j},x_{ij}^{\top}\beta)$.
\begin{assumption}[Restrictions on $p(\cdot)$]
\label{assu:ass_f} The link-probability function $p(\alpha_{i},\alpha_{j},t)$ is symmetric in $(\alpha_i,\alpha_j)$ and three times continuously differentiable
in $(\alpha_{i},\alpha_{j},t)$. There exist constants $c_{1}\in(0,1/2]$
and $c_{2},c_{3}>0$ such that, for all $(\bm{\alpha},\beta)\in\mathbb{A}\times\mathbb{B}$,
$x_{ij}\in\mathbb{X}$, and $1\leq i\neq j\leq n$:
\[
\begin{aligned}
& p_{ij}\in[c_{1},1-c_{1}],\quad
p_{ij}^{(1,0,0)},\ p_{ij}^{(0,1,0)}\in[c_{2},1-c_{2}],\text{ and}\\
& \left|p_{ij}^{(r_{1},r_{2},r_{3})}\right|\leq c_{3}\text{ for any }1\leq r_{1}+r_{2}+r_{3}\leq 3.
\end{aligned}
\]
\end{assumption}
Assumption~\ref{assu:ass_f} lower-bounds the link probability and the
first derivatives with respect to the individual effects, while uniformly
upper-bounding all derivatives up to total order three.
In conjunction
with Assumption~\ref{assu:compact_support_and_sampling},
it is satisfied by common TU/NTU specifications with smooth link functions
(e.g., the cumulative distribution function of the logistic or the normal). This assumption is similar to \citet[Assumption 4.3(v)]{fernandez2016individual},
which regulates the smoothness of the likelihood functions.\footnote{The $p_{ij}\in[c_1,1-c_1]$ bound rules out, in population, nodes with zero or unit linking probabilities. In finite samples, a node $i$ may have degree $d_i=0$ (isolated) or $d_i=n-1$ (fully connected), causing the fixed-point iteration~(\ref{eq:iter}) to push $\hat{\alpha}_i$ to $\pm\infty$. We recommend removing such nodes before estimation. Under the dense-network assumption, the probability of any node being isolated or fully connected vanishes exponentially in $n$, so this trimming does not alter the asymptotic theory.}

\medskip{}

\begin{rem}[Role of symmetry in the proofs]
\label{rem:symmetry_proofs}
Our theoretical results are derived under the general symmetric link
function $p(\alpha_i,\alpha_j,x_{ij}^{\top}\beta)$ in
Assumption~\ref{assu:ass_f}, without distinguishing between TU and
NTU. The key structural property exploited throughout the proofs is
symmetry in the first two arguments (i.e.,
$p(\alpha_i,\alpha_j,t)=p(\alpha_j,\alpha_i,t)$), which holds under
both TU (where $p$ depends on $\alpha_i+\alpha_j$) and NTU (where
bilateral consent ensures the linking probability is symmetric). Thus, we do not rely on TU nor a specific link function to prove the theory.
\end{rem}

\subsection{JMM}\label{subsec:jmm_estimator}

Recall that the JMM module gives $(\widehat{\bm{\alpha}},\widehat{\b})$
to start with. The next lemma concerns the existence and uniqueness
of $\widehat{\bm{\alpha}}(\beta)$, as well as the convergence of
$\bm{\alpha}^{k}(\beta)$ to $\widehat{\bm{\alpha}}(\beta)$ via (\ref{eq:iter}).
\begin{lem}
\label{lem:alpha_conv_rate} If Assumptions \ref{assu:model}, \ref{assu:compact_support_and_sampling}, and \ref{assu:ass_f}
hold, then there exists a unique $\widehat{\bm{\alpha}}(\beta)$ w.p.a.1
for each $\beta\in\left\{ \beta\in\mathbb{B}|\left\Vert \beta-\beta_{0}\right\Vert _{2}<c\right\} $
in a neighborhood around $\beta_{0}$, where $c>0$ is small but fixed.
Moreover, uniformly across all $k$, we have 
\[
\Vert\bm{\alpha}^{k+2}(\beta)-\widehat{\bm{\alpha}}(\beta)\Vert_{1} \leq\delta\Vert\bm{\alpha}^{k}(\beta)-\widehat{\bm{\alpha}}(\beta)\Vert_{1},
\]
for some fixed constant $\delta\in(0,1)$.
\end{lem}
Lemma~\ref{lem:alpha_conv_rate} guarantees that
$\widehat{\bm{\alpha}}(\beta)=\lim\limits_{k\to\infty}\bm{\alpha}^{k}(\beta)$
and that the $\ell_{1}$-distance between $\widehat{\bm{\alpha}}(\beta)$
and $\bm{\alpha}^{k}(\beta)$ decreases geometrically after every
two iterations. Computing $\widehat{\bm{\alpha}}(\beta)$ is fast
in the simulations, which is another advantage of our iterative algorithm.
It is worth mentioning that we deviate from the existing methods (e.g.,
Theorem 1.5 of \citealp{chatterjee2011random} or the fixed point
equation (17) of \citealp{graham2017econometric}) in this step by
not requiring $p_{ij}$ to be logistic or
the link formation process to be TU. Instead, we use a gradient-descent-type
iterative algorithm (\ref{eq:iter}) to compute $\widehat{\bm{\alpha}}\left(\b\right)$
as a function of $\b$ and show that it is a contraction mapping.
As a result, it can accommodate general non-logistic link functions
and NTU.

Although $\widehat{\bm{\alpha}}(\beta)$ is unique by Lemma~\ref{lem:alpha_conv_rate}
for any $\b$ that lies within a distance of $c$ of $\b_{0}$, in
principle there could be multiple solutions to $m_{2}(\widehat{\bm{\alpha}}(\beta),\beta)=0$.
The next identification condition guarantees that any such $\widehat{\beta}$
is consistent for $\beta_{0}$. To state the assumption, we define
the concentrated moment equation for $\b$ as 
\begin{equation}
\label{eq:def_S_n_bar}
\bar{S}_{n}(\beta) \coloneqq N^{-1}\mathbb{E}[m_{2}(\bm{\alpha}(\beta),\beta)|\mathbf{x},\bm{\alpha}_{0}],
\end{equation}
where $\bm{\alpha}(\beta)$ is the unique solution to $\mathbb{E}[\mathbf{m}_{1}(\bm{\alpha},\beta)|\mathbf{x},\bm{\alpha}_{0}]=\bm{0}_{n}$,
a result from the proof of Lemma~\ref{lem:alpha_conv_rate}.
\begin{assumption}[Identification of $\b_{0}$]
\label{assu:mm_iden}Suppose for all $\delta>0$ and for $n$ large
enough,
\[
\inf_{\beta\in\mathbb{B}:\left\Vert \beta-\beta_{0}\right\Vert _{2}\geq\delta}\left\Vert \bar{S}_{n}(\beta)\right\Vert _{2}>0\text{ and }(\partial \bar{S}_n(\beta)/\partial\beta^\top)|_{\beta=\beta_0}\text{ has full rank.}
\]
\end{assumption}
Assumption~\ref{assu:mm_iden} identifies the low-dimensional parameter
$\b_{0}$, as discussed in \citet*{chen2014local} for nonlinear models
with high-dimensional nuisance parameters. The first part is the
standard ``unique minimizer'' condition that makes $\beta_{0}$ the
unique zero of $\bar{S}_{n}$ \citep[Page 45]{van2000asymptotic}; the
second part is a local rank condition---equivalently, the concentrated
Jacobian $\mathrm J_{22}-\mathbf J_{21}\mathbf J_{11}^{-1}\mathbf J_{12}$
is nonsingular at the truth. Substantively, Assumption~\ref{assu:mm_iden}
rules out cancellations in which the effect of $\beta$ on the
$x$-weighted link residuals can be exactly offset by adjusting the node
fixed effects while preserving all degree moments.

Assumption~\ref{assu:mm_iden} is the dyadic-network analogue of the
within-variation condition in panel models, but residualization is
against \emph{additive node effects} rather than within-individual means.
Let $e_{i}\in\mathbb{R}^{n}$ denote the $i$th standard basis vector,
$D$ the $N\times n$ dyad-incidence matrix whose row corresponding to
dyad $(i,j)$ equals $e_{i}+e_{j}$, $X$ the $N\times K$ matrix stacking
$x_{ij}^{\top}$ as rows, and $W_{0}=\mathrm{diag}\{p_{ij,0}(1-p_{ij,0})\}$.
Under the additive TU-logit specification $p_{ij}=\Lambda(\alpha_{i}+\alpha_{j}+x_{ij}^{\top}\beta)$,
differentiating the profiled moment at $\beta_{0}$ gives
\[
\left.\frac{\partial \bar{S}_{n}(\beta)}{\partial \beta^{\top}}\right|_{\beta_{0}}
=-N^{-1}(M_{D}X)^{\top} W_{0}\,(M_{D}X),
\]
where $M_{D}=I-D(D^{\top}W_{0}D)^{-1}D^{\top}W_{0}$, so the rank condition holds iff $X$ has nontrivial component orthogonal
to the additive-node subspace $\mathrm{col}(D)$. Non-additive covariates
such as $\lvert x_{i}-x_{j}\rvert$ or $x_{i}\cdot x_{j}$ pass this test
trivially; $x_{ij}=x_{i}+x_{j}$ is the canonical violation, since
$\beta$ then absorbs into relabeled fixed effects
$\tilde{\alpha}_{i}\coloneqq\alpha_{i}+x_{i}\beta$. The global
single-zero requirement is automatically satisfied under TU-logit from strict
concavity of the population log-likelihood. Under NTU, where this
concavity is unavailable, it reduces to parametric identifiability of the map
$(\bm\alpha,\beta)\mapsto\{F(\alpha_{i}+x_{ij}\beta)F(\alpha_{j}+x_{ij}\beta)\}_{i<j}$,
which holds under the same non-additivity. As a concrete check,
in a two-group NTU-logit model with $x_{ij}=1$ for cross-group dyads
and $x_{ij}=0$ within-group, the within-group subgraph pins down
$\bm\alpha_{0}$ via $\Lambda(\alpha_{i})\Lambda(\alpha_{j})$ and any cross-group
dyad then identifies $\beta_{0}$ by strict monotonicity of $\Lambda$. % See \cite{shi2016structural} for more examples.

In the next theorem, we prove that $\hat{\beta}$ is consistent for
$\beta_{0}$ and that $\widehat{\bm{\alpha}}$ is uniformly consistent
for $\bm{\alpha}_{0}$ in the sup norm. Furthermore, we establish
asymptotic normality for $\hat{\beta}$. To state the result, define the concentrated Jacobian $\mathrm{J}_{n}(\bm{\alpha},\beta)$ in (\ref{eq:def_J_n}) and its probability limit $\mathrm{J}_{0} \coloneqq \mathrm{plim}_{n\to\infty}N^{-1}\mathrm{J}_{n,0}$. The bias vector $B_{0}=(B_{10},\dots,B_{K0})^{\top}$ is given in~(\ref{eq:bias_B0}), and the asymptotic sandwich variance $\Omega_{0}$ is given in~(\ref{limit_var_mm}); both are defined in Appendix~\ref{sec:appendix_defs}. Lemma~\ref{lem:limits_exist} in Appendix~\ref{sec:supporting_lemmas} establishes the existence and positive-definiteness of these limits.

\begin{thm}
\label{thm:jmm_alpha_beta_consistency}If
Assumptions \ref{assu:model}--\ref{assu:mm_iden} hold, then
\[
\hat{\beta}\stackrel{p}{\to}\beta_{0}\quad\text{ and }\quad\Vert\widehat{\bm{\alpha}}-\bm{\alpha}_{0}\Vert_{\infty}\stackrel{p}{\to}0.
\]
Furthermore, we have
\[
\sqrt{N}(\hat{\beta}-\beta_{0})-\mathrm{J}_{0}^{-1}B_{0}\stackrel{d}{\to}\mathcal{N}(0,\Omega_{0}).
\]
\end{thm}
Theorem~\ref{thm:jmm_alpha_beta_consistency} shows that $\bm{\widehat{\alpha}}$
is uniformly consistent for $\bm{\alpha_0}$ and the JMM estimator $\hat{\beta}$ is asymptotically normal. However, the
limiting distribution of  $\hat{\beta}$  does not center around $\beta_0$. The bias term $\mathrm{J}_{0}^{-1}B_{0}$
arises from estimating $\bm{\alpha}_{0}$. The incidental parameter
problem is common in the literature on nonlinear panel fixed effects
regression with large $N$ and $T$. Moreover, as $\Omega_0$ is generally greater than $\mathrm{I_0}^{-1}$  in the semi-definiteness sense,  $\hat{\beta}$  does not achieve the CRLB. We refine $\hat{\beta}$  by the following modules.

\subsection{One-Step Estimator}\label{subsec:one_step_estimator}

The first refinement on $\widehat{\beta}$ concerns achieving the CRLB. We follow \citet{le1969theorie}'s
one-step adjustment as specified in Module \ref{alg:mod_os}. Algebra shows
\[
\mathbb{E}\left[\frac{\partial s_{n}(\bm{\alpha},\beta_{0})}{\partial\bm{\alpha}}\Big|\mathbf{x},\bm{\alpha}_{0}\right]=\mathbf{0}_{K\times n}\text{ and }\mathbb{E}\left[\frac{\partial s_{n}(\bm{\alpha},\beta_{0})}{\partial\beta}\Big|\mathbf{x},\bm{\alpha}_{0}\right]=-\mathrm{I}_{n,0}.
\]
Therefore, a Taylor expansion on the right-hand side of (\ref{eq:beta_os})
yields 
\begin{equation}
\hat{\beta}_{\mathrm{OS}}-\b_{0}\approx\mathrm{I}_{n,0}^{-1}s_{n,0}\label{eq:taylor_expansion_beta_os}
\end{equation}
in large samples, where $s_{n}$ and $\mathrm{I}_{n}$ are the concentrated
score and information matrix defined in (\ref{eq:concentrated_info}).

\begin{rem}[OS and non-concavity in the fixed effects]\label{rem:os_no_concavity}
The normalizer in~(\ref{eq:beta_os}) is the outer-product Fisher
information $\mathbf{I}_{11}=\mathbb{E}[\mathbf{s}_1\mathbf{s}_1^{\top}\mid\mathbf{x},\bm{\alpha}]$,
not the negative concentrated Hessian
$-\partial^{2}\ell_{n}/\partial\bm{\alpha}\partial\bm{\alpha}^{\top}$.
The two agree in expectation at the truth by the information identity, but
only $\mathbf{I}_{11}$ is positive semidefinite for every
$(\bm{\alpha},\beta)$ by construction; its invertibility and diagonal approximation (Lemma~\ref{lem:I11_diag_approx}) rest on the positive
per-dyad Fisher information guaranteed by
Assumption~\ref{assu:ass_f}, not on concavity of $\ell_{n}$ in $\bm{\alpha}$.
Module~OS therefore remains well-defined and first-order efficient
under global non-concavity in the fixed effects, as arises structurally
under NTU and under TU whenever the shock density is not log-concave
(e.g., Student's $t$). A Newton update using
$-\partial^{2}\ell_{n}/\partial\bm{\alpha}\partial\bm{\alpha}^{\top}$
would be ill-conditioned or undefined here.
\end{rem}

To establish (\ref{eq:taylor_expansion_beta_os})
rigorously and hence the asymptotic normality of $\hat{\beta}_{\mathrm{OS}}$,
we impose an additional assumption on the conditioning of the
information matrix (\ref{eq:info_mat}). Define $\mathbf{D}(\bm{\alpha},\beta)\coloneqq\mathrm{diag}(\mathbf{I}_{11}(\bm{\alpha},\beta))$
and $\mathbf{Q}(\bm{\alpha},\beta)\coloneqq\mathbf{D}(\bm{\alpha},\beta)^{-1/2}\mathbf{I}_{11}(\bm{\alpha},\beta)\mathbf{D}(\bm{\alpha},\beta)^{-1/2}$.

\begin{assumption}[Information Matrix Conditioning]
\label{assu:res_order} 
\[
\sup_{(\bm{\alpha},\beta)\in\mathbb{A}\times\mathbb{B}}\left\Vert \mathbf{Q}(\bm{\alpha},\beta)^{-1}\right\Vert _{1}<c\text{ for some positive constant }c.
\]
\end{assumption}
The concentrated score $s_n$ is constructed by projecting the
fixed-effect scores out of the $\beta$ score. Define the profiling
weights
\begin{equation}
w_{ki}(\bm{\alpha},\beta) \coloneqq
\left[\mathbf{I}_{12}(\bm{\alpha},\beta)^{\top}
\mathbf{I}_{11}(\bm{\alpha},\beta)^{-1}\right]_{ki},
\end{equation}
which determine how much of node $i$'s fixed-effect score is
subtracted from the raw score for $\beta_{k}$. In the additive
TU-logit benchmark, the concentrated score takes the form
$s_{n,k}(\bm{\alpha},\beta)=\sum_{i<j}[y_{ij}-p_{ij}(\bm{\alpha},
\beta)](x_{ij,k}-w_{ki}-w_{kj})$, where $w_{ki}+w_{kj}$ is the best
weighted least-squares approximation to $x_{ij,k}$ with additive node
terms, a Frisch--Waugh--Lovell residualization.
Assumption~\ref{assu:res_order} is a stability condition on this
projection: the standardized matrix $\mathbf{Q}$ normalizes
$\mathbf{I}_{11}$ to unit diagonal, and bounding
$\lVert\mathbf{Q}^{-1}\rVert_{1}$ prevents the node-level information
contributions from becoming near-collinear as $n$ grows. For additive models
$p(\alpha_{i},\alpha_{j},x_{ij}^{\top}\beta)=F(\alpha_{i}+\alpha_{j}+x_{ij}^{\top}\beta)$
with smooth $F$, $p^{(1,0,0)}=p^{(0,1,0)}=F'$, which makes $\mathbf{I}_{11}$
diagonally dominant with off-diagonal entries of order $O(1)$.
In this case, $\mathbf{I}_{11}(\bm\alpha,\beta)^{-1}$ can be approximated by $\mathbf{D}(\bm\alpha,\beta)^{-1}$ with maximum entrywise error $O(n^{-2})$ by Lemma \ref{lem:appro_inverse}, then Assumption \ref{assu:res_order} is not needed (see Remark~\ref{rem:assu5_additive} in the Supplemental Material).

\begin{rem}
Lemma~\ref{prop:assu5_primitive} in Appendix~\ref{sec:supporting_lemmas}
shows that Assumption~\ref{assu:res_order}, together with
Assumptions~\ref{assu:model}--\ref{assu:ass_f},
implies the order bounds on $w_{ki}$ and its derivatives that are
used in the proof of Theorem~\ref{thm:os_norm}: $\sup_{k,i}|w_{ki}|=O(1)$,
$\sup_{k,i}\|\partial w_{ki}/\partial\beta\|=O(1)$,
$\sup_{k,i}|\partial w_{ki}/\partial\alpha_{i}|=O(1)$, and
$\sup_{k,i\neq j}|\partial w_{ki}/\partial\alpha_{j}|=O(n^{-1})$.
Intuitively, the last bound says that no single foreign node $j$ has
order-1 leverage on node $i$'s profiling weight, which is natural
in dense networks where each node averages information over $n-1$
dyads.
\end{rem}

The next theorem establishes the limit distribution of $\hat{\beta}_{\mathrm{OS}}$. Define the concentrated information limit
\begin{equation}
\mathrm{I}_{0} \coloneqq \mathrm{plim}_{n\to\infty}N^{-1}\mathrm{I}_{n,0}\label{eq:info_mat_i0_def}
\end{equation}
and the asymptotic bias $b_0 = (b_{10},\dots,b_{K0})^\top$ with
\[
b_{k0}=\lim_{n\to\infty}\frac{1}{\sqrt{N}}\mathrm{Tr}[\mathbf{J}_{11,0}^{-1}\text{Cov}(\mathbf{m}_{1,0},\mathbf{s}_{1,0})\mathbf{W}_{k,0}],\quad k=1,\dots,K,
\]
where $[\mathbf{W}_{k}(\bm{\alpha},\beta)]_{ij}=\frac{\partial w_{ki}(\bm{\alpha},\beta)}{\partial\alpha_{j}}$ and the comma-zero subscript denotes evaluation at $(\bm{\alpha}_{0},\beta_{0})$.
\begin{thm}
\label{thm:os_norm} If Assumptions \ref{assu:model}--\ref{assu:res_order}
hold, then
\[
\sqrt{N}(\hat{\beta}_{\mathrm{OS}}-\beta_{0})-\mathrm{I}_{0}^{-1}b_{0}\stackrel{d}{\to}\mathcal{N}(0,\mathrm{I}_{0}^{-1}).
\]
\end{thm}
Theorem~\ref{thm:os_norm} shows $\hat{\beta}_{\mathrm{OS}}$ achieves
the CRLB asymptotically. In the proof of Theorem~\ref{thm:os_norm},
we show that $b_{0}$ is $O(1)$ and depends on the covariance matrix
between $\mathbf{m}_{1}$ and $\mathbf{s}_{1}$. This is because our
plug-in estimator for $\bm{\alpha}$ is obtained from the moment estimating
equation $\mathbf{m}_{1}$, and the one-step estimator (\ref{eq:beta_os})
uses information from $\mathbf{s}_{1}$ to concentrate out $\bm{\alpha}$.
As a result, the covariance between $\mathbf{m}_{1}$ and $\mathbf{s}_{1}$
determines the magnitude of the term $b_{0}$ in the asymptotic bias
of Theorem~\ref{thm:os_norm}.

\subsection{Bagging}\label{subsec:bagging}

While reaching the CRLB, Theorem~\ref{thm:os_norm} reveals that $\hat{\beta}_{\mathrm{OS}}$
incurs an asymptotic bias. As discussed in Module \ref{alg:mod_bg},
one way to debias $\hat{\beta}_{\mathrm{OS}}$ is to use the split-network
jackknife to self-cancel the leading bias. However, it doubles the
asymptotic variance, making the confidence interval wider and less informative. Additionally, it suffers from computational instability because the network is split only once by random. The solution we propose is to split the network randomly many times, compute the split-network jackknife estimator from each split, then average them up. This is equivalent to bagging \citep{breiman1996bagging} on a split-network jackknife estimator. 

To motivate the bagging method, in theory there are a total of $T_{n} \coloneqq \binom{n}{n/2}$
possible ways to divide the network. However, $T_{n}$ can be very
large for a moderate sample size $n.$ For example, $n=100$ produces
$T_{n}=\binom{100}{50}\simeq1.009\times10^{29}$, which is an astronomical
number. We solve this problem by choosing $\tilde{T}_{n}\ll T_{n}$
in the BG module. In the simulations, we set $\tilde{T}_{n}=2n$ and
find that the results are robust to this choice.

The next theorem shows that when $n$ and $\tilde{T}_{n}$ go to infinity,
$\hat{\beta}_{\mathrm{BG}}$ is asymptotically normal, unbiased, and
efficient.
\begin{thm}
\label{thm:bagging_normality}If Assumptions \ref{assu:model}--\ref{assu:res_order}
hold, then 
\[
\sqrt{N}(\hat{\beta}_{\mathrm{BG}}-\beta_{0})\stackrel{d}{\to}\mathcal{N}(0,\mathrm{I}_{0}^{-1})
\]
 as $n\to\infty$ and $\tilde{T}_{n}\to\infty$.
\end{thm}
Theorem~\ref{thm:bagging_normality} is the main theoretical result
of this paper. A few remarks are in order to discuss its implications
and connections with the literature. First, $\hat{\beta}_{\mathrm{BG}}$
involves three modules---JMM, OS, and BG---which play different roles.
Module JMM provides an initial consistent yet biased estimator, which
is fed into Module OS to achieve the CRLB. Module BG corrects for
the bias in the OS estimator via split-network jackknife while maintaining
the efficiency through bagging.

Second, a similar idea to $\hat{\beta}_{\mathrm{OS-SJ}}$ in a panel
setting with fixed effects is presented in \citet{dhaene2015split}.
Although related, $\hat{\beta}_{\mathrm{BG}}$ is preferred over $\hat{\beta}_{\mathrm{OS-SJ}}$
because $\hat{\beta}_{\mathrm{OS-SJ}}$ has an asymptotic variance
of $2\mathrm{I}_{0}^{-1}$ while $\hat{\beta}_{\mathrm{BG}}$'s is
$\mathrm{I}_{0}^{-1}$.

Third, one may be inclined to apply BG to the initial JMM estimator
directly and bypass the one-step approximation. Indeed, BG can correct
for the asymptotic bias of the JMM estimator. However, the JMM-BG
estimator is not efficient because the asymptotic variance of the
initial JMM estimator is preserved through bagging.

Finally, sample splitting across individuals introduces a degree of
extra randomness, which motivates \citet[Footnote 8]{fernandez2016individual}
to suggest averaging of all possible $T_{n}$ partitions and point
out that the average over $\tilde{T}_{n}\ll T_{n}$ splits is sufficient.
The BG estimator in our context not only eliminates randomness from
sample splitting but also simultaneously achieves efficiency and bias
correction. Furthermore, our Theorem~\ref{thm:bagging_normality}
provides formal asymptotic results to justify the use of the BG estimator.

\begin{rem}[Multiple networks]\label{rem:multi-networks}
The estimator extends to the multi-network setting. Consider $V$ independent networks indexed by $v=1,\dots,V$ ($V$ fixed) sharing a common slope $\b_0$ but with network-specific node fixed effects:
\[
\Pr\!\left(Y_{ij,v}=1\mid X_{ij,v},\alpha_{i0,v},\alpha_{j0,v}\right)
= p\!\left(\alpha_{i0,v},\alpha_{j0,v},X_{ij,v}^{\top}\b_0\right),\quad 1\le i\neq j\le n_v,\ 1\le v\le V.
\]
The estimator pools across networks at the $\b$ level while keeping $\bm\alpha_v$ network-specific: the JMM module solves $\widehat{\bm\alpha}_v(\b)$ within each network from its own degree equations, and $\widehat{\b}$ solves the pooled moment $\sum_v m_{2,v}(\widehat{\bm\alpha}_v(\b),\b)=0$; the one-step refinement uses the pooled concentrated information $\sum_v\mathrm{I}_{n_v}(\hat{\bm\alpha}_v,\hat\beta)$ and score $\sum_v s_{n_v}(\hat{\bm\alpha}_v,\hat\beta)$; bagging draws independent equal partitions $\{\mathcal{I}_{1,n_v}^{(t)},\mathcal{I}_{2,n_v}^{(t)}\}$ of each $\mathcal{I}_{n_v}$ for $t=1,\dots,\widetilde T_n$, applies the pooled one-step update on each half-pool $\bigcup_v \mathcal{I}_{h,n_v}^{(t)}$ (using $\sum_v \mathrm{I}_{n_v/2,v}$ and $\sum_v s_{n_v/2,v}$), and averages the resulting split-network jackknife corrections. Assumptions \ref{assu:model}--\ref{assu:ass_f} and~\ref{assu:res_order} hold per network, and Assumption \ref{assu:mm_iden} is imposed on the pooled $\bar S_n$. Theorems~\ref{thm:jmm_alpha_beta_consistency}--\ref{thm:bagging_normality} extend with $N=\sum_v\binom{n_v}{2}$ as $\min_v n_v\to\infty$ (Proposition~\ref{prop:multinetwork} in the Supplement).
\end{rem}

\section{Extensions}\label{sec:extensions}

This section extends our framework in two directions: average partial effects for the binary outcome, and the behavior of our estimators when the link function $p(\cdot)$ is misspecified.

\paragraph{Average partial effects.}
Beyond the homophily parameters themselves, researchers and
policy makers may also be interested in population averages over the
distribution of exogenous regressors and fixed effects. One leading
example concerns the conditional mean of the outcome given covariates
and fixed effects
\begin{equation}
\E\left[\rest{Y_{ij}}X_{ij}=x_{ij},\bm{\alpha}\right]=p_{ij}(\bm\alpha,\beta_0).\label{eq:condl_exp_y}
\end{equation}
Here, the partial effects are defined as the differences or derivatives
of (\ref{eq:condl_exp_y}) with respect to components
of $X_{ij}$, say $X_{ij,k}$, the $k$th coordinate of
$X_{ij}$. We suppress its dependence on $Y$ and $X$ and define
the partial effect of $x_{ij,k}$ for the dyad $(i,j)$ as
\[
\Delta_{ij,k}(\alpha_{i},\alpha_{j},\beta)=\begin{cases}
p(\alpha_{i},\alpha_{j},\beta_{k}+x_{ij,-k}^{\top}\beta_{-k})-p(\alpha_{i},\alpha_{j},x_{ij,-k}^{\top}\beta_{-k}) & (b)\\
\beta_{k}p^{(0,0,1)}(\alpha_i,\alpha_j,x_{ij}^\top\beta) & (c)
\end{cases}
\]
where ``$(b)$'' corresponds to binary $x_{ij,k}$ while ``$(c)$''
refers to continuous $x_{ij,k}$. Define $\Delta_{ij}=(\Delta_{ij,1},\dots,\Delta_{ij,K})^{\top}$.
Then, the unconditional APEs are 
\begin{equation}
\delta_{0}=\mathbb{E}\left[\frac{1}{N}\sum_{i=1}^{n}\sum_{j>i}\Delta_{ij}(\alpha_{i},\alpha_{j},\beta_{0})\right].\label{eq:ape}
\end{equation}
Plugging the JMM estimator $(\widehat{\bm{\alpha}},\widehat{\b})$
into (\ref{eq:ape}) yields an estimator for the APEs 
\[
\hat{\delta}=\frac{1}{N}\sum_{i=1}^{n}\sum_{j>i}\Delta_{ij}(\hat{\alpha}_{i},\hat{\alpha}_{j},\hat{\beta}).
\]

Define an (infeasible) $\bar{\Delta}_{n}=\frac{1}{N}\sum_{i=1}^{n}\sum_{j>i}\Delta_{ij}(\alpha_{i0},\alpha_{j0},\beta_{0})$.
Let the split-jackknife estimator and the bagging estimator of the
APE be
\[
\hat{\delta}_{\mathrm{SJ}} \coloneqq 2\hat{\delta}-\frac{1}{2}(\hat{\delta}_{1}+\hat{\delta}_{2})\quad\text{and}\quad\hat{\delta}_{\mathrm{BG}} \coloneqq \frac{1}{\tilde{T}_{n}}\sum_{t=1}^{\tilde{T}_{n}}\hat{\delta}_{\mathrm{SJ}}^{(t)},
\]
respectively. Here, $(\hat{\delta}_{1},\hat{\delta}_{2})$ are the
plug-in estimators based on two sub-networks after a random split
of the nodes and $\{\hat{\delta}_{\mathrm{SJ}}^{(t)}\}_{t=1}^{\tilde{T}_{n}}$
are split-network jackknife estimators based on $\tilde{T}_{n}$ random
splits. The next theorem shows that the bias of $\hat{\delta}$ is asymptotically
negligible. We use a central limit theorem for U-statistics \citep[Theorem 12.3]{van2000asymptotic}
to prove it. To state the result precisely, we incorporate the asymptotically
vanishing bias terms, as in \citet[Theorem 4.2]{fernandez2016individual},
and establish that $\hat{\delta}_{\mathrm{SJ}}$ and $\hat{\delta}_{\mathrm{BG}}$
are equivalent to  $\hat{\delta}$  asymptotically. Section \ref{sec:simulation}  presents numerical evidence
that supports this claim.

To state the next theorem, we define $\Sigma_{\delta,n} \coloneqq \frac{\Sigma_{\Delta}}{N}+\frac{4\Sigma_{\delta}}{n}$,
where $\Sigma_{\Delta}$ is defined in (\ref{eq:sigma_delta}) and
$\Sigma_{\delta}=\mathrm{Cov}\left(\Delta_{12}(\alpha_{1},\alpha_{2},\beta_{0}),\Delta_{13}(\alpha_{1},\alpha_{3},\beta_{0})\right)$.\footnote{We acknowledge the presence of high-order variance terms arising from U-statistics, but omit these terms for simplicity. Our simulation results confirm that this omission does not compromise robustness.} The two bias terms $B_\beta$ and $B_\alpha$ in the next theorem are defined in (\ref{eq:b_alpha_b_beta}).

\begin{thm}
\label{thm:norm_ape}If Assumptions \ref{assu:model}--\ref{assu:mm_iden}
hold and $\bar{\Delta}_{n}$ is a non-degenerate U-statistic, then
\begin{align*}
\Sigma_{\delta,n}^{-1/2}\left(\hat{\delta}-\delta_{0}-\frac{1}{\sqrt{N}}B_{\beta}-\frac{1}{\sqrt{N}}B_{\alpha}\right) & \stackrel{d}{\to}\mathcal{N}(0,I_{K})\text{ and}\\
\Sigma_{\delta,n}^{-1/2}\left(\hat{\delta}_{\mathrm{BG}}-\delta_{0}\right) & \stackrel{d}{\to}\mathcal{N}(0,I_{K}).
\end{align*}
\end{thm}
In Theorem~\ref{thm:norm_ape}, the rate of convergence of $\hat{\delta}$
(and $\hat{\delta}_{\mathrm{BG}}$) is $\sqrt{n}$ instead of $\sqrt{N}$.
The slower convergence rate in Theorem~\ref{thm:norm_ape} makes
the bias terms introduced by estimating $\bm{\alpha_0}$
asymptotically negligible. Note that $B_{\beta}$ reflects
the bias of the plug-in estimator $\hat{\beta}$ whereas $B_{\alpha}$
arises from the incidental parameter bias of the plug-in estimator
$\widehat{\bm{\alpha}}$. 

For the components of $\Sigma_{\delta,n}$, $\Sigma_{\Delta}$ is the
asymptotic variance of $\sqrt{N}(\hat{\delta}-\bar{\Delta}_{n})$
and $4\Sigma_{\delta}$ is the asymptotic variance of $\sqrt{n}(\bar{\Delta}_{n}-\delta_{0})$.
$\Sigma_{\delta}$ can be estimated by
\[
\hat{\Sigma}_{\delta}=\frac{1}{n(n-1)(n-2)}\sum_{i\neq j\neq k}u_{ij}u_{ik}^{\top},
\]
where $u_{ij}\coloneqq\Delta_{ij}(\hat{\alpha}_{i},\hat{\alpha}_{j},\hat{\beta})-\hat{\delta}$ and $\sum_{i\neq j\neq k}$ denotes the sum over all triads $(i,j,k)$ with distinct indices.
This symmetrized estimator is permutation-invariant in finite samples and is consistent by the law of large numbers for U-statistics.
Although the variance term $\Sigma_{\D}/N$ is dominated asymptotically
by $4\Sigma_{\delta}/n$ in Theorem~\ref{thm:norm_ape}, we find
in simulations that including it improves the coverage probabilities.
This is particularly relevant when unobserved heterogeneity is limited (i.e., $\alpha_i\approx\alpha_j$ for most nodes), in which case the U-statistic leading term $\Sigma_{\delta}$ may be small and the higher-order term $\Sigma_{\Delta}/N$ contributes non-negligibly to the finite-sample variance. We therefore recommend including both terms in practice.
If one instead targets $\bar{\Delta}_{n}$, the asymptotic result becomes $\sqrt{N}(\hat{\delta}-\bar{\Delta}_{n})-B_{\beta}-B_{\alpha}\stackrel{d}{\to}\mathcal{N}(0,\Sigma_{\Delta})$, generalizing Theorem 2 of \citet{chen2021nonlinear} to our setting.

\paragraph{Link function misspecification.}\label{subsec:model_misspecification}
The preceding results assume that the link function $p(\cdot)$ is correctly specified.
In Section~\ref{sec:sm_misspecification} of the Supplemental Material, we show that our estimators remain well-behaved when $p(\cdot)$ is replaced by a misspecified link function $q(\cdot)$. Specifically, the JMM estimator $\widehat{\b}$ converges to a well-defined pseudo-true value $\beta_{n*}$ that solves a pseudo-population moment equation, with a sandwich-form asymptotic variance that can be consistently estimated. The one-step and BG estimators center on a projected pseudo-true value $\beta_{n\sharp}$ that accounts for the nonzero population concentrated score under misspecification. Crucially, the BG estimator retains its bias-correction property: $\sqrt{N}(\widehat{\b}_{\mathrm{BG}}-\beta_{n\sharp})\stackrel{d}{\to}\mathcal{N}(0,\Gamma_*)$, where the sandwich covariance $\Gamma_*$ reflects the discrepancy between the true and misspecified likelihoods. Section~\ref{sec:sm_simulations} of the Supplemental Material provides supporting numerical evidence.

\section{Monte Carlo Simulations}\label{sec:simulation}

We conduct Monte Carlo simulations to evaluate the finite-sample performance of our estimators under both the TU and NTU specifications. Additional exercises---the non-concavity challenge of direct MLE, fixed-effect recovery, APE estimation, link-function misspecification, sparser networks, and extended TU results---are reported in Section~\ref{sec:sm_simulations} of the Supplemental Material.

The data generating process (DGP) is as follows. We set $\beta_{0}=(1,-1)^{\top}$,
and draw the first covariate of $X_{ij}$ as $X_{1,ij}\stackrel{\text{i.i.d.}}{\sim}\text{Bernoulli}(0.3),\ X_{1,ij}=X_{1,ji}$.
In this way, we allow for a discrete variable in $X_{ij}$. For the second
covariate of $X_{ij}$, we draw $X_{i}\stackrel{\text{i.i.d.}}{\sim}U(-0.5,0.5)$
and let $X_{2,ij}=|X_{i}-X_{j}|$. Next, we generate the individual
fixed effects by setting $\a_{i}=0.75\times\xi_{i}+0.25\times X_{i}$,
where $\xi_{i}\stackrel{\text{i.i.d.}}{\sim}U(-0.5,0.5)$ and is independent
of all other variables so that $\a_{i}$ and $X_{ij}$ are correlated
via $X_{i}$.

We consider two specifications for the link function. Under \emph{TU}, a dyadic surplus determines the link:
\[
Y_{ij}^{\mathrm{TU}}=\ind\{\alpha_{i}+\alpha_{j}+X_{ij}^{\top}\beta_{0}-\epsilon_{ij}>0\},
\]
with $\epsilon_{ij}$ drawn i.i.d.\ from either the standard logistic or standard normal distribution, yielding the logit and probit links respectively. Under \emph{NTU}, link formation requires bilateral consent:
\[
Y_{ij}^{\mathrm{NTU}}=\ind\{\alpha_{i}+X_{ij}^{\top}\beta_{0}-\epsilon_{ij}>0\}\cdot \ind\{\alpha_{j}+X_{ij}^{\top}\beta_{0}-\epsilon_{ji}>0\},
\]
with $\epsilon_{ij}$ drawn i.i.d.\ from the standard logistic distribution.

For all simulations in this paper, we run $1{,}000$ replications. For the baseline results, we set $n=100$ and $200$, which are comparable to the sizes of the data used in our empirical applications.\footnote{We also run the same exercise for $n=50$, and the conclusions remain largely the same.} We report the mean and median bias, standard deviation, mean and median absolute bias, and the root mean squared error (RMSE) across replications.

\begin{table}[!htbp]
\centering
\begin{centering}
\caption{TU baseline estimation results for $\protect\b_0$ ($n=100$)}\label{tab:tu_sim_beta_n100}
\par\end{centering}
\begin{threeparttable}

\begin{tabular}{lrrrrrr}
\toprule 
\multirow{2}{*}{\emph{Logit}} & \multicolumn{2}{c}{JMM} & \multicolumn{2}{c}{OS} & \multicolumn{2}{c}{BG}\tabularnewline
\cmidrule{2-7}
 & $\b_{1}$ & $\b_{2}$ & $\b_{1}$ & $\b_{2}$ & $\b_{1}$ & $\b_{2}$\tabularnewline
\midrule
Mean Bias & 2.32 & -1.97 & 2.44 & -1.99 & 0.15 & 0.21\tabularnewline
Median Bias & 2.33 & -2.25 & 2.43 & -2.35 & 0.14 & -0.20\tabularnewline
Standard Deviation & 6.96 & 14.67 & 6.97 & 14.74 & 6.79 & 14.46\tabularnewline
Mean Standard Error & 6.74 & 14.56 & 6.75 & 14.56 & 6.73 & 14.54\tabularnewline
Mean Absolute Bias & 5.82 & 11.87 & 5.87 & 11.93 & 5.39 & 11.54\tabularnewline
Median Absolute Bias & 4.87 & 9.98 & 4.90 & 10.09 & 4.52 & 9.90\tabularnewline
RMSE & 7.33 & 14.80 & 7.38 & 14.87 & 6.79 & 14.46\tabularnewline
90\% Coverage Rate & 86.6 & 90.1 & 86.0 & 89.9 & 89.1 & 89.5\tabularnewline
95\% Coverage Rate & 92.1 & 95.1 & 92.1 & 95.1 & 94.5 & 95.3\tabularnewline
\midrule
\multirow{2}{*}{\emph{Probit}} & \multicolumn{2}{c}{JMM} & \multicolumn{2}{c}{OS} & \multicolumn{2}{c}{BG}\tabularnewline
\cmidrule{2-7}
 & $\b_{1}$ & $\b_{2}$ & $\b_{1}$ & $\b_{2}$ & $\b_{1}$ & $\b_{2}$\tabularnewline
\midrule
Mean Bias & 1.95 & -1.67 & 2.06 & -1.68 & -0.05 & 0.43\tabularnewline
Median Bias & 1.81 & -1.88 & 1.93 & -1.94 & -0.21 & 0.32\tabularnewline
Standard Deviation & 4.44 & 9.46 & 4.44 & 9.50 & 4.32 & 9.35\tabularnewline
Mean Standard Error & 4.37 & 9.35 & 4.36 & 9.34 & 4.35 & 9.32\tabularnewline
Mean Absolute Bias & 3.88 & 7.77 & 3.92 & 7.80 & 3.49 & 7.53\tabularnewline
Median Absolute Bias & 3.41 & 6.75 & 3.40 & 6.83 & 2.90 & 6.51\tabularnewline
RMSE & 4.84 & 9.61 & 4.89 & 9.65 & 4.32 & 9.36\tabularnewline
90\% Coverage Rate & 85.8 & 89.4 & 85.5 & 89.9 & 90.2 & 89.7\tabularnewline
95\% Coverage Rate & 91.6 & 94.6 & 91.3 & 94.2 & 95.5 & 95.4\tabularnewline
\bottomrule
\end{tabular}

\begin{tablenotes}
\setlength{\itemindent}{-0.5cm}
\item \footnotesize \textit{Note:} All values have been multiplied by 100.
\end{tablenotes}
\end{threeparttable}

\end{table}

Table~\ref{tab:tu_sim_beta_n100} reports the TU baseline results at $n=100$ under both link functions. The BG estimator substantially reduces the bias of JMM and OS with essentially the same dispersion, achieving the lowest RMSE, and its coverage probabilities lie close to their nominal levels. The patterns are robust across logit and probit links, consistent with Theorem~\ref{thm:bagging_normality}.

\begin{table}[!htbp]
\centering
\begin{centering}
\caption{NTU baseline estimation results for $\protect\b_{0}$}\label{tab:sim_beta}
\par\end{centering}
\begin{threeparttable}

\begin{tabular}{lrrrrrr}
\toprule 
\multirow{2}{*}{\emph{$n=100$}} & \multicolumn{2}{c}{JMM} & \multicolumn{2}{c}{OS} & \multicolumn{2}{c}{BG}\tabularnewline
\cmidrule{2-7}
 & $\b_{1}$ & $\b_{2}$ & $\b_{1}$ & $\b_{2}$ & $\b_{1}$ & $\b_{2}$\tabularnewline
\midrule 
Mean Bias & 2.95 & -2.91 & 2.77 & -2.71 & -0.37 & 0.33\tabularnewline
Median Bias & 2.90 & -3.01 & 2.77 & -2.79 & -0.40 & 0.24\tabularnewline
Standard Deviation & 5.71 & 13.05 & 5.67 & 13.02 & 5.51 & 12.69\tabularnewline
Mean Standard Error & 5.67 & 12.98 & 5.66 & 12.92 & 5.66 & 12.92\tabularnewline
Mean Absolute Bias & 5.17 & 10.68 & 5.07 & 10.61 & 4.45 & 10.12\tabularnewline
Median Absolute Bias & 4.35 & 8.89 & 4.21 & 8.92 & 3.92 & 8.53\tabularnewline
RMSE & 6.42 & 13.37 & 6.31 & 13.30 & 5.52 & 12.70\tabularnewline
90\% Coverage Rate & 83.7 & 89.7 & 84.0 & 89.5 & 91.2 & 90.9\tabularnewline
95\% Coverage Rate & 91.8 & 94.1 & 92.5 & 94.4 & 95.6 & 95.5\tabularnewline
\midrule
\multirow{2}{*}{\emph{$n=200$}} & \multicolumn{2}{c}{JMM} & \multicolumn{2}{c}{OS} & \multicolumn{2}{c}{BG}\tabularnewline
\cmidrule{2-7}
 & $\b_{1}$ & $\b_{2}$ & $\b_{1}$ & $\b_{2}$ & $\b_{1}$ & $\b_{2}$\tabularnewline
\midrule
Mean Bias & 1.54 & -1.71 & 1.47 & -1.61 & -0.06 & -0.12\tabularnewline
Median Bias & 1.43 & -1.73 & 1.33 & -1.71 & -0.20 & -0.28\tabularnewline
Standard Deviation & 2.85 & 6.41 & 2.85 & 6.41 & 2.80 & 6.33\tabularnewline
Mean Standard Error & 2.78 & 6.35 & 2.78 & 6.32 & 2.78 & 6.32\tabularnewline
Mean Absolute Bias & 2.57 & 5.36 & 2.54 & 5.34 & 2.28 & 5.12\tabularnewline
Median Absolute Bias & 2.29 & 4.66 & 2.23 & 4.72 & 1.96 & 4.47\tabularnewline
RMSE & 3.24 & 6.63 & 3.20 & 6.61 & 2.80 & 6.33\tabularnewline
90\% Coverage Rate & 84.8 & 88.5 & 85.3 & 88.0 & 90.2 & 90.0\tabularnewline
95\% Coverage Rate & 90.8 & 93.9 & 90.9 & 94.3 & 95.3 & 95.0\tabularnewline
\bottomrule
\end{tabular}

\begin{tablenotes}
\setlength{\itemindent}{-0.5cm}
\item \footnotesize \textit{Note:} All values have been multiplied by 100.
\end{tablenotes}
\end{threeparttable}

\end{table}

Table~\ref{tab:sim_beta} reports the NTU results for $n=100$ and $200$ when the network has a density of 25\%. Here are the main observations when $n=100$. First, in terms of the bias, the BG estimator performs significantly better than JMM and OS, which is consistent with Theorem~\ref{thm:bagging_normality}. Second, BG works very well in simultaneously achieving bias-correction and low standard deviation, leading to the lowest RMSE.\footnote{Though not reported in the tables, we find that SJ (without BG) doubles the variance of JMM and OS estimators in the simulations, which is in line with the theory.} Third, the coverage probabilities of the BG confidence intervals are close to their nominal levels, while JMM and OS exhibit systematic undercoverage due to incidental parameter bias, precisely the problem that BG is designed to correct. Finally, the mean standard errors implied by the asymptotic theory are close to the standard deviations computed from the Monte Carlo simulations across all estimators. We also find that the quantiles of the empirical distributions for all estimators are well approximated by the same quantiles of the corresponding asymptotic normal distributions. These results further support our theoretical findings. When $n=200$, the performance of all the estimators improves. The RMSEs, for example, are about half the size of those when $n=100$, which is expected given the $\sqrt{N}$-convergence rate and $\sqrt{N}=O\left(n\right)$. The coverage probabilities also improve.\footnote{We have checked the stability of $\hat{\beta}_{\text{BG}}$ for $\tilde{T}_n$ chosen from a wide range, $\{50,100,200,400,600,800\}$. The bias, RMSE and coverage are almost identical across different choices of $\tilde{T}_n$.}

Finite-sample evidence for the APE estimator of Theorem~\ref{thm:norm_ape} is reported in Section~\ref{sec:sm_simulations} of the Supplemental Material. The plug-in estimator exhibits near-zero bias and coverage close to the nominal level under both TU and NTU, consistent with the $O(n^{-1})$ bias bound. Extended results on fixed-effect recovery, link-function misspecification, and sparser networks are also provided there.

\section{Empirical Applications}\label{sec:empirical_applications}

This section presents two empirical applications that together exhibit the TU and NTU regimes, respectively. First, under TU, we apply our method to the Townsend Thai village networks \citep{kinnan2024propagation}, pooling across 16 villages to estimate link formation for financial, operations, and labor transactions. We find that kinship has a strong and broadly positive effect across all three networks, and that greater absolute net-worth differences between households are also associated with more frequent linking, a pattern consistent with gains-from-trade in transactional relationships under TU. Second, under NTU, we revisit the Nyakatoke risk-sharing network \citep*{deweerdt2004risk}, where our method indicates that wealth differences have no statistically significant effect on link formation, and we provide economic intuition for this result.

The two applications illustrate the framework's ability to accommodate both modeling regimes. For the Thai village networks, households transact in goods, labor, and financial instruments that carry transferable value, so we adopt the TU specification \citep{de2020trade}. For the Nyakatoke risk-sharing network, links record mutual acknowledgements without explicit transfers or side payments, so we adopt the NTU specification. We use logit and probit links for the Thai TU application and the logit link for Nyakatoke. In both cases, we report the JMM, OS, and BG estimates together with the plug-in and bagging APEs.

The BG estimators and standard errors in the two empirical applications are robust to different choices of $\tilde{T}_n$, consistent with our simulation results.
Simulations calibrated to comparable empirical densities (Section~\ref{sec:sm_simulations} of the Supplemental Material) show that the estimators retain their bias-correction and coverage properties.

\subsection{Townsend Thai Village Networks}\label{subsec:emp_thai}

We apply our method to the Townsend Thai monthly panel \citep{kinnan2024propagation}, covering 16 rural and peri-urban Thai villages with an average of 44 households per village (pooled sample of $N=15{,}641$ dyads). We construct three binary network indicators from the monthly transaction records, aggregated to the annual level: a \emph{financial} link (exchange of gifts or informal loans/repayments), an \emph{operations} link (transactions in production inputs, intermediate goods, or output sales), and a \emph{labor} link (hiring or labor exchange). Our dyadic covariates are (i) the (ln) demographic difference between two households, based on household composition and head characteristics; (ii) the (ln) net-worth difference; and (iii) an indicator for kinship. A detailed data description and summary statistics are provided in Section~\ref{sec:sm_thai} of the Supplemental Material.

To handle the multi-village structure, we estimate a pooled TU model with common slope $\b_0$ and village-specific node fixed effects following Remark~\ref{rem:multi-networks}:
\[
\Pr\!\left(Y_{ij,v}=1\mid X_{ij},\alpha_{i0,v},\alpha_{j0,v}\right)
= F\!\left(\alpha_{i0,v}+\alpha_{j0,v}+X_{ij}^{\top}\b_0\right),\quad 1\le v\le V=16,
\]
where $F$ is the logistic or normal CDF. 

\subsubsection{Year Selection and Estimation Results}

Because our asymptotic theory assumes a dense-network regime, we select for each network type the year with the highest pooled density: year 3 for the financial network (1.46\%), year 2 for the operations network (5.36\%), and year 3 for the labor network (7.79\%). Table~\ref{tab:thai_results} reports the pooled BG estimates and APEs under both the logit and probit link functions.

\begin{table}[!htbp]
\centering
\caption{BG estimates and APEs for the Townsend Thai village networks}\label{tab:thai_results}

\begin{threeparttable}
\begin{tabular}{lrrrr}
\toprule
\addlinespace
& \multicolumn{2}{c}{Coefficients} & \multicolumn{2}{c}{APEs} \\
\cmidrule(lr){2-3}\cmidrule(lr){4-5}
Variables & Logit & Probit & Logit & Probit \\
\midrule
\multicolumn{5}{l}{\textbf{Financial network}} \\
(ln) Demographic difference & 0.2629 & 0.1249 & 0.0029 & 0.0024 \\
 & (0.1862) & (0.0967) & (0.0017) & (0.0016) \\
(ln) Net-worth difference & 0.2567 & 0.1293 & 0.0028 & 0.0024 \\
 & (0.1059) & (0.0632) & (0.0011) & (0.0010) \\
Kinship & 3.2259 & 1.9318 & 0.0773 & 0.0896 \\
 & (0.2676) & (0.1512) & (0.0097) & (0.0101) \\
\addlinespace
\multicolumn{5}{l}{\textbf{Operations network}} \\
(ln) Demographic difference & 0.1553 & 0.1006 & 0.0050 & 0.0061 \\
 & (0.1021) & (0.0532) & (0.0029) & (0.0029) \\
(ln) Net-worth difference & 0.1221 & 0.0658 & 0.0040 & 0.0040 \\
 & (0.0563) & (0.0294) & (0.0016) & (0.0016) \\
Kinship & 1.9651 & 1.0820 & 0.0973 & 0.0998 \\
 & (0.1778) & (0.0964) & (0.0119) & (0.0121) \\
\addlinespace
\multicolumn{5}{l}{\textbf{Labor network}} \\
(ln) Demographic difference & 0.0569 & 0.0316 & 0.0029 & 0.0029 \\
 & (0.0771) & (0.0418) & (0.0035) & (0.0035) \\
(ln) Net-worth difference & 0.1691 & 0.1056 & 0.0085 & 0.0096 \\
 & (0.0456) & (0.0251) & (0.0022) & (0.0022) \\
Kinship & 2.2562 & 1.3040 & 0.1711 & 0.1774 \\
 & (0.1532) & (0.0856) & (0.0143) & (0.0147) \\
\bottomrule
\end{tabular}

\begin{tablenotes}
\setlength{\itemindent}{-0.5cm}
\item \footnotesize \textit{Note:} Pooled BG estimates of the common slope $\b_0$ in the TU specification with village-specific node fixed effects, using the logit and probit link functions. For each network type, we select the year with the highest pooled density to stay close to the dense-network regime assumed by the theory. Standard errors in parentheses.
\end{tablenotes}
\end{threeparttable}
\end{table}

Three findings stand out. First, \emph{kinship} is a strong and highly significant predictor of link formation across all three networks, with the largest APE in the labor network (an increase in linking probability of about 17 percentage points). This is consistent with \citet{kinnan2024propagation}, who document that kinship ties are persistent predictors of transactions in these villages over a long time horizon. Second, the coefficient on \emph{(ln) net-worth difference} is positive and statistically significant in all three networks, indicating that households with larger wealth disparities transact more frequently under the TU regime. Economically, this pattern is consistent with gains-from-trade in dyadic transactions: wealthier households can supply credit, labor demand, or production inputs that poorer households consume or provide, yielding bilateral surplus that the TU framework predicts. The contrast with the Nyakatoke finding below, where absolute wealth differences have no significant effect under NTU, highlights the role of the modeling regime: when linking requires mutual consent without transfers, the gains-from-trade channel is muted. Third, the coefficient on \emph{(ln) demographic difference} is positive (directionally consistent with heterophily: demographically distant households transacting more often) but statistically insignificant, suggesting that household composition is not a primary driver of transactional linking after controlling for kinship and wealth. The logit and probit specifications yield qualitatively identical patterns, and the implied APEs are nearly identical in sign and magnitude.

\subsection{Nyakatoke Risk-Sharing Network}

We apply our method to the Nyakatoke risk-sharing network in Tanzania \citep{deweerdt2004risk}, which covers 114 households with $N=6{,}441$ dyadic observations. We estimate the model using three covariates: absolute (ln) wealth difference, (ln) distance, and a categorical kinship/religion tie variable. A detailed data description and summary statistics are provided in Section~\ref{sec:sm_nyakatoke} of the Supplemental Material.

\subsubsection{Results and Discussion}\label{subsec:emp_results_and_discussion}

Table~\ref{tab:real_data_results} presents the estimation results for the homophily coefficients and the APEs. The estimated coefficient on wealth difference is negative under all three methods, but the null of zero cannot be rejected under the BG asymptotic distribution. The NTU bilateral-consent logic provides a natural interpretation: similar-wealth pairs lack the capacity to insure each other against large shocks, while substantially unequal pairs face veto by the wealthier household whose expected surplus from the arrangement is typically negative. Both effects push the average wealth-difference effect toward zero. \citet{gao2023logical} also obtain a negative coefficient but without inference; our framework complements theirs by quantifying statistical uncertainty through the parametric link function.

In addition to the wealth difference, under BG the coefficient for
\textit{distance} is significantly negative at -0.8098, and that of
\textit{tie} is significantly positive at 0.5875. The results are
intuitive. We further report the APEs in the last two columns of Table
\ref{tab:real_data_results}. We find that the APE of wealth difference
is not significant at the 5\% significance level based on either the plug-in or the bagging estimator.
Distances between households and social ties, on the other hand, matter
more significantly in terms of the APE.
\begin{table}[!htbp]
\centering
\caption{Estimation results for the Nyakatoke network}\label{tab:real_data_results}

\begin{threeparttable}

\begin{tabular}{lrrrrr}
\toprule 
\addlinespace
\multirow{2}{*}{Variables} & \multicolumn{3}{c}{Coefficients} & \multicolumn{2}{c}{APEs}\tabularnewline\addlinespace
\cmidrule{2-6}
\addlinespace
 & JMM & OS & BG & Plug-in & Bagging\tabularnewline
\midrule
(ln) wealth difference & -0.0882 & -0.0974 & -0.0783 & -0.0065 & -0.0096\tabularnewline
 & (0.0680) & (0.0641) & (0.0641) & (0.0052) & (0.0052)\tabularnewline
(ln) distance & -0.7824 & -0.8636 & -0.8098 & -0.0576 & -0.0638\tabularnewline
 & (0.0530) & (0.0536) & (0.0536) & (0.0066) & (0.0064)\tabularnewline
Tie & 0.6714 & 0.6287 & 0.5875 & 0.0514 & 0.0500\tabularnewline
 & (0.0548) & (0.0556) & (0.0556) & (0.0060) & (0.0059)\tabularnewline
\bottomrule
\end{tabular}

\begin{tablenotes}
\setlength{\itemindent}{-0.5cm}
\item \footnotesize \textit{Note:} Standard errors are reported in the parentheses.
\end{tablenotes}
\end{threeparttable}
\end{table}

\section{Conclusion}\label{sec:conclusion}

This paper develops a unified estimation and inference framework for
dyadic network formation models with individual fixed effects under
both transferable and nontransferable utilities. The key methodological
innovation is a bagging estimator that combines a one-step
approximation to the MLE with a split-network jackknife,
delivering asymptotically unbiased and efficient inference for the
homophily parameters without requiring concavity of the log-likelihood in high-dimensional fixed effects or specific distributional assumptions on the link function. We also obtain $\ell_{\infty}$-consistent estimators of the individual fixed effects, asymptotically normal estimators of the average partial effects, an extension to multiple networks with a common slope, and a robustness analysis under link-function misspecification.
Two empirical applications, the Townsend Thai village networks (TU)
and the Nyakatoke risk-sharing network (NTU), illustrate that the
method is straightforward to implement under both modeling regimes
and yields interpretable results.

Several extensions merit future research. First, extending the framework to \emph{directed networks} would capture asymmetric relationships such as trade flows or citations by introducing separate in- and out-degree fixed effects. Second, establishing \emph{sparse-network} asymptotics, where the average degree grows slower than $n$, would relax the dense-network regime assumed by our theory and broaden the method's applicability. Third, incorporating \emph{strategic interactions}, where link decisions depend on others' anticipated choices, would bring the framework closer to equilibrium models of network formation.

\onehalfspacing

%\newpage{}

\appendix

\begin{center}
{\LARGE\bfseries Appendix}
\end{center}

\section{Notation and Definitions}\label{sec:appendix_defs}

We collect additional matrix definitions used in the main text and appendix.

\paragraph{Jacobian of the moment equations.} The Jacobian matrix $\mathbf{J}(\bm{\alpha},\beta) \coloneqq \nabla\mathbf{m}(\bm{\alpha},\beta)$ is partitioned as
\[
\mathbf{J}(\bm{\alpha},\beta)=\begin{pmatrix}\mathbf{J}_{11}(\bm{\alpha},\beta) & \mathbf{J}_{12}(\bm{\alpha},\beta)\\
\mathbf{J}_{21}(\bm{\alpha},\beta) & \mathrm{J}_{22}(\bm{\alpha},\beta)
\end{pmatrix},
\]
where $\mathbf{J}_{11}$ is $n\times n$ with $[\mathbf{J}_{11}]_{ij}=-p_{ij}^{(0,1,0)}$ for $i\neq j$ and $[\mathbf{J}_{11}]_{ii}=-\sum_{j\neq i}p_{ij}^{(1,0,0)}$; the $i$th row of $\mathbf{J}_{12}$ ($n\times K$) is $-\sum_{j\neq i}p_{ij}^{(0,0,1)}x_{ij}^{\top}$; the $i$th column of $\mathbf{J}_{21}$ ($K\times n$) is $-\sum_{j\neq i}p_{ij}^{(1,0,0)}x_{ij}$; and $\mathrm{J}_{22}=-\sum_{i<j}p_{ij}^{(0,0,1)}x_{ij}x_{ij}^{\top}$. The concentrated Jacobian for $\beta$ is
\begin{equation}
\label{eq:def_J_n}
\mathrm{J}_{n}(\bm{\alpha},\beta) \coloneqq \mathrm{J}_{22}-\mathbf{J}_{21}\mathbf{J}_{11}^{-1}\mathbf{J}_{12}.
\end{equation}

\paragraph{Variance of the moment equations.} Define the variance blocks $\mathbf{V}_{11}(\bm{\alpha},\beta)$ ($n\times n$), $\mathbf{V}_{12}(\bm{\alpha},\beta)$ ($n\times K$), and $\mathrm{V}_{22}(\bm{\alpha},\beta)$ ($K\times K$) by $[\mathbf{V}_{11}]_{ij}=p_{ij}(1-p_{ij})$ for $i\neq j$ and $[\mathbf{V}_{11}]_{ii}=\sum_{j\neq i}p_{ij}(1-p_{ij})$; the $i$th row of $\mathbf{V}_{12}$ is $\sum_{j\neq i}p_{ij}(1-p_{ij})x_{ij}^{\top}$; and $\mathrm{V}_{22}=\sum_{i<j}p_{ij}(1-p_{ij})x_{ij}x_{ij}^{\top}$.

\paragraph{Hessian of the log-likelihood.} The Hessian blocks $\mathbf{H}_{11}(\bm{\alpha},\beta)$, $\mathbf{H}_{12}(\bm{\alpha},\beta)$, and $\mathrm{H}_{22}(\bm{\alpha},\beta)$ are the sub-blocks of $\nabla^2\ell_n(\bm{\alpha},\beta)$ partitioned conformably with $(\bm{\alpha},\beta)$.

\paragraph{Bias and variance of $\hat\beta$.} For each $i\in\mathcal{I}_n$, define the symmetric matrix $\mathbf{M}_{i,n}(\bm{\alpha},\beta)$ with entries
\begin{align}
(\mathbf{M}_{i,n})_{ij}&=-(p_{ij}^{(1,1,0)}+p_{ji}^{(2,0,0)})\ (j\neq i),\
(\mathbf{M}_{i,n})_{ii}=-\sum_{j\neq i}p_{ij}^{(2,0,0)},\nonumber\\
(\mathbf{M}_{i,n})_{ll}&=-p_{li}^{(2,0,0)}\ (l\neq i),\text{ and }
(\mathbf{M}_{i,n})_{ab}=0\ \text{otherwise}.\label{eq:M_in}
\end{align}
Let $\varphi_{k}(\bm{\alpha},\beta)^\top \coloneqq e_k^\top\mathbf{J}_{21}\mathbf{J}_{11}^{-1}$ and $\mathbf{G}_{k}(\bm{\alpha},\beta)\coloneqq \sum_{i=1}^n \varphi_{k,i}\mathbf{M}_{i,n}$, and define $\mathbf{R}_{k}(\bm{\alpha},\beta)$ with entries
\begin{equation}
\begin{aligned}
\left(\mathbf{R}_{k}\right)_{ij} & =p_{ij}^{(1,1,0)}x_{ij,k},\ 1\leq i\neq j\leq n,\\
\left(\mathbf{R}_{k}\right)_{ii} & =\sum_{j\neq i}p_{ij}^{(2,0,0)}x_{ij,k},\ i\in\mathcal{I}_n.
\end{aligned}
\label{bias_r}
\end{equation}
The bias components are
\begin{equation}
B_{k0}=B_{k0}^{(1)}+B_{k0}^{(2)}\label{eq:bias_B0}
\end{equation}
with
\begin{equation*}
\begin{aligned}
B_{k0}^{(1)} & \coloneqq \lim_{n\to\infty}\tfrac{1}{2\sqrt{N}}\mathrm{Tr}[\mathbf{J}_{11,0}^{-1}\mathbf{V}_{11,0}(\mathbf{J}_{11,0}^{-1})^\top\mathbf{G}_{k,0}],\\
B_{k0}^{(2)} & \coloneqq \lim_{n\to\infty}\tfrac{1}{2\sqrt{N}}\mathrm{Tr}[\mathbf{J}_{11,0}^{-1}\mathbf{V}_{11,0}(\mathbf{J}_{11,0}^{-1})^\top\mathbf{R}_{k,0}],
\end{aligned}
\end{equation*}
and $B_{0}=(B_{10},\dots,B_{K0})^{\top}$. The sandwich-form asymptotic variance of $\hat\beta$ is
\begin{equation}
\Omega_{0} \coloneqq \lim_{n\to\infty}N^{-1}\mathrm{J}_{0}^{-1}\left[
\begin{aligned}
&\mathrm{V}_{22,0}+\mathbf{J}_{21,0}\mathbf{J}_{11,0}^{-1}\mathbf{V}_{11,0}(\mathbf{J}_{21,0}\mathbf{J}_{11,0}^{-1})^{\top}\\
&-\mathbf{J}_{21,0}\mathbf{J}_{11,0}^{-1}\mathbf{V}_{12,0}-(\mathbf{J}_{21,0}\mathbf{J}_{11,0}^{-1}\mathbf{V}_{12,0})^{\top}
\end{aligned}\right](\mathrm{J}_{0}^{-1})^{\top}.\label{limit_var_mm}
\end{equation}

\paragraph{Bias and variance of $\hat\delta$.} Define
\begin{align*}
\Delta_{\beta}(\bm{\alpha},\beta) &\coloneqq \frac{1}{N}\sum_{i=1}^{n}\sum_{j>i}\frac{\partial\Delta_{ij}}{\partial\beta}(\alpha_{i},\alpha_{j},\beta),\quad
\Delta_{\bm{\alpha}}(\bm{\alpha},\beta) \coloneqq \frac{1}{N}\Big(\sum_{j\neq1}\frac{\partial\Delta_{1j}}{\partial\alpha_{1}}, \dots, \sum_{j\neq n}\frac{\partial\Delta_{nj}}{\partial\alpha_{n}}\Big)^\top,
\end{align*}
and the second-derivative matrices $\mathbf{R}_{k}^{\mu}$ and $\mathbf{G}_{k}^{\mu}$ analogously to $\mathbf{R}_{k}$ and $\mathbf{G}_{k}$ but with $\Delta_{ij,k}$ replacing $p_{ij}x_{ij,k}$. The bias terms are
\begin{equation}
\begin{aligned}
B_{\alpha,k} &\coloneqq \lim_{n\to\infty}\frac{1}{2\sqrt{N}}\mathrm{Tr}\left[\mathbf{J}_{11,0}^{-1}\mathbf{V}_{11,0}\left(\mathbf{J}_{11,0}^{-1}\right)^{\top}(\mathbf{R}_{k,0}^{\mu}+\mathbf{G}_{k,0}^\mu)\right],\\
B_{\beta} &\coloneqq \lim_{n\to\infty}(\Delta_{\beta,0}^{\top}-\Delta_{\bm{\alpha},0}^{\top}\mathbf{J}_{11,0}^{-1}\mathbf{J}_{12,0})\mathrm{J}_{0}^{-1}B_{0},
\end{aligned}\label{eq:b_alpha_b_beta}
\end{equation}
and $\Sigma_{\Delta}$ is the asymptotic variance of $\sqrt{N}(\hat\delta-\bar\Delta_n)$, given by
\begin{equation}
\Sigma_{\Delta} = \lim_{n\to\infty}\frac{1}{N}\left\{
\begin{aligned}
&\mathbf{A}_{0}\mathrm{V}_{22,0}\mathbf{A}_{0}^{\top}+\mathbf{C}_{0}\mathbf{J}_{11,0}^{-1}\mathbf{V}_{11,0}(\mathbf{C}_{0}\mathbf{J}_{11,0}^{-1})^{\top}\\
&-\mathbf{C}_{0}\mathbf{J}_{11,0}^{-1}\mathbf{V}_{12,0}\mathbf{A}_{0}^{\top}-(\mathbf{C}_{0}\mathbf{J}_{11,0}^{-1}\mathbf{V}_{12,0}\mathbf{A}_{0}^{\top})^{\top}
\end{aligned}\right\},\label{eq:sigma_delta}
\end{equation}
where $\mathbf{A}_{0}\coloneqq(\Delta_{\beta,0}^{\top}-\Delta_{\bm{\alpha},0}^{\top}\mathbf{J}_{11,0}^{-1}\mathbf{J}_{12,0})\mathrm{J}_{0}^{-1}$ and $\mathbf{C}_{0}\coloneqq\mathbf{A}_{0}\mathbf{J}_{21,0}-N\Delta_{\bm{\alpha},0}^{\top}$.

\section{Supporting Lemmas}\label{sec:supporting_lemmas}

We state auxiliary results used in the proofs below. Their proofs are collected in the Supplemental Material.

\begin{lem}[Inverse approximation {\citep{yan2019approximating}}]
\label{lem:appro_inverse} Suppose an $n\times n$ matrix $\mathbf{A}=(a_{ij})_{n\times n}$
is invertible with all entries positive and $a_{ii}\geq\sum_{j\neq i}a_{ji}.$
Let $\mathbf{B}=[\mathrm{diag}(a_{11},\dots,a_{nn})]^{-1}$,
$\Delta_{i}=a_{ii}-\sum_{j\neq i}a_{ji}$, $M=\max\{\max_{i\neq j}a_{ij},\max_{i}\Delta_{i}\}$,
and $m=\min_{i\neq j}a_{ij}$. If $M\asymp1$ and $m\asymp1$, then
$\Vert\mathbf{A}^{-1}-\mathbf{B}\Vert_{\max}=O(n^{-2})$.
\end{lem}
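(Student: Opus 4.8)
The statement coincides with Theorem~1 of \citet{yan2019approximating}, so the plan is to reproduce that argument. The governing idea is to \emph{peel off the common background}: with $m=\min_{i\ne j}a_{ij}$, set $\mathbf{A}^{\flat}:=\mathbf{A}-m\mathbf{1}_{n}\mathbf{1}_{n}^{\top}$, where $\mathbf{1}_{n}$ is the $n$-vector of ones. First I would record that $a_{ij}^{\flat}=a_{ij}-m\ge 0$, that $a_{ii}^{\flat}=a_{ii}-m\asymp n$, and---the key structural gain---that the \emph{weak} column diagonal dominance of $\mathbf{A}$ becomes \emph{strict} column diagonal dominance of $\mathbf{A}^{\flat}$ with a diverging gap, since $\Delta_{i}^{\flat}:=a_{ii}^{\flat}-\sum_{j\ne i}a_{ji}^{\flat}=\Delta_{i}+(n-2)m\ge(n-2)m\asymp n$. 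This is the step that makes the analysis tractable: under only the hypothesis $a_{ii}\ge\sum_{j\ne i}a_{ji}$ a naive Neumann expansion of $\mathbf{A}^{-1}$ need not converge (the relevant operator norm can equal $1$), and subtracting the constant matrix $m\mathbf{1}_{n}\mathbf{1}_{n}^{\top}$ is precisely what buys the strict gap.

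Next I would exploit the strict dominance. Writing $\mathbf{S}^{\flat}=\mathrm{diag}(a^{\flat}_{11},\dots,a^{\flat}_{nn})$, $\mathbf{B}^{\flat}=(\mathbf{S}^{\flat})^{-1}$, and $\mathbf{H}^{\flat}$ for the off-diagonal part of $\mathbf{A}^{\flat}$, one has $\mathbf{A}^{\flat}=(\mathbf{I}+\mathbf{H}^{\flat}\mathbf{B}^{\flat})\mathbf{S}^{\flat}$, and the nonnegative matrix $\mathbf{H}^{\flat}\mathbf{B}^{\flat}$ has $j$-th column sum $1-\Delta^{\flat}_{j}/a^{\flat}_{jj}\le 1-\frac{(n-2)m}{nM}$, which is bounded by some fixed $\gamma<1$ once $n$ is large. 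Hence $\norm{\mathbf{H}^{\flat}\mathbf{B}^{\flat}}_{1}\le\gamma$, the Neumann series converges with $\norm{(\mathbf{I}+\mathbf{H}^{\flat}\mathbf{B}^{\flat})^{-1}}_{1}\le(1-\gamma)^{-1}=O(1)$, and $(\mathbf{A}^{\flat})^{-1}=\mathbf{B}^{\flat}(\mathbf{I}+\mathbf{H}^{\flat}\mathbf{B}^{\flat})^{-1}$. From the identity $(\mathbf{A}^{\flat})^{-1}-\mathbf{B}^{\flat}=-\mathbf{B}^{\flat}\mathbf{H}^{\flat}\mathbf{B}^{\flat}(\mathbf{I}+\mathbf{H}^{\flat}\mathbf{B}^{\flat})^{-1}$, bounding the $(i,j)$ entry by (the largest entry of $\mathbf{B}^{\flat}\mathbf{H}^{\flat}\mathbf{B}^{\flat}$, which is $O(n^{-2})$ because $a^{\flat}_{ij}\le M$ and $a^{\flat}_{ii}\asymp n$) times $\norm{(\mathbf{I}+\mathbf{H}^{\flat}\mathbf{B}^{\flat})^{-1}}_{1}$ gives $\norm{(\mathbf{A}^{\flat})^{-1}-\mathbf{B}^{\flat}}_{\max}=O(n^{-2})$, and therefore $\norm{(\mathbf{A}^{\flat})^{-1}}_{\infty}=\norm{(\mathbf{A}^{\flat})^{-1}}_{1}=O(n^{-1})$.

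Finally I would undo the background by Sherman--Morrison: $\mathbf{A}^{-1}=(\mathbf{A}^{\flat})^{-1}-\theta^{-1}m\,(\mathbf{A}^{\flat})^{-1}\mathbf{1}_{n}\mathbf{1}_{n}^{\top}(\mathbf{A}^{\flat})^{-1}$ with $\theta=1+m\,\mathbf{1}_{n}^{\top}(\mathbf{A}^{\flat})^{-1}\mathbf{1}_{n}$. Each entry of the rank-one correction is at most $\theta^{-1}m\,\norm{(\mathbf{A}^{\flat})^{-1}}_{\infty}\norm{(\mathbf{A}^{\flat})^{-1}}_{1}=O(n^{-2})$ once $\abs{\theta}$ is bounded away from $0$, and $\norm{\mathbf{B}^{\flat}-\mathbf{B}}_{\max}=\max_{i}m/\!\left(a_{ii}(a_{ii}-m)\right)=O(n^{-2})$; chaining these with the previous bound yields $\norm{\mathbf{A}^{-1}-\mathbf{B}}_{\max}=O(n^{-2})$. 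The hard part will be the lower bound on $\abs{\theta}$---equivalently, ruling out $\mathbf{1}_{n}^{\top}(\mathbf{A}^{\flat})^{-1}\mathbf{1}_{n}\to-1/m$; I would obtain it from the fact (already established above) that $(\mathbf{A}^{\flat})^{-1}$ agrees with the positive diagonal matrix $\mathbf{B}^{\flat}$ up to $O(n^{-2})$ entrywise, keeping careful track of the off-diagonal contributions as in \citet{yan2019approximating}, the remaining estimates being routine given $m\asymp M\asymp 1$ and $a_{ii}\asymp n$. (When this lemma is later applied to the NTU Jacobian, its hypotheses are verified directly: for $\mathbf{A}=-\mathbf{J}_{11}(\bm{\alpha},\beta)$ one has $a_{ij}=F_{ij}f_{ji}>0$, $a_{ii}=\sum_{j\ne i}f_{ij}F_{ji}=\sum_{j\ne i}a_{ji}$---so $\Delta_{i}=0$---while Assumption~\ref{assu:ass_F} together with compactness forces $m\ge c_{1}c_{2}$ and $M\le 1$.)
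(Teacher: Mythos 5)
Your architecture is genuinely different from the paper's. The paper (following \citet{yan2019approximating}) works directly with $\mathbf{F}=\mathbf{A}^{-1}-\mathbf{B}$ via the recursion $\mathbf{F}=\mathbf{F}\mathbf{U}+\mathbf{W}$ and controls the oscillation $f_{i\theta}-f_{i\xi}$ of each row of $\mathbf{F}$ by a contraction argument, using the sign information $f_{i\xi}<0<f_{i\theta}$ extracted from $\sum_{k}f_{ik}a_{ki}=0$; crucially, that route never divides by a quantity whose non-degeneracy is in doubt. Your first two stages are correct and arguably cleaner: subtracting $m\mathbf{1}_{n}\mathbf{1}_{n}^{\top}$ does convert weak column dominance into strict dominance with gap $\Delta_{i}^{\flat}=\Delta_{i}+(n-2)m\asymp n$; the column sums of $\mathbf{H}^{\flat}\mathbf{B}^{\flat}$ are indeed at most $1-(n-2)m/(nM)\le\gamma<1$ for large $n$ (note $a_{ii}\le(n-1)M+\Delta_{i}\le nM$); and the identity $(\mathbf{A}^{\flat})^{-1}-\mathbf{B}^{\flat}=-\mathbf{B}^{\flat}\mathbf{H}^{\flat}\mathbf{B}^{\flat}(\mathbf{I}+\mathbf{H}^{\flat}\mathbf{B}^{\flat})^{-1}$ together with $\Vert\mathbf{C}\mathbf{D}\Vert_{\max}\le\Vert\mathbf{C}\Vert_{\max}\Vert\mathbf{D}\Vert_{1}$ delivers $\Vert(\mathbf{A}^{\flat})^{-1}-\mathbf{B}^{\flat}\Vert_{\max}=O(n^{-2})$ rigorously.

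The gap is exactly where you flag it, and the fix you sketch does not close it. You need $\theta=1+m\,\mathbf{1}_{n}^{\top}(\mathbf{A}^{\flat})^{-1}\mathbf{1}_{n}$ bounded away from zero \emph{uniformly in $n$}, but the only tools on offer are (i) invertibility of $\mathbf{A}$, which gives $\theta=\det(\mathbf{A})/\det(\mathbf{A}^{\flat})\neq0$ and nothing quantitative, and (ii) the entrywise bound $\Vert(\mathbf{A}^{\flat})^{-1}-\mathbf{B}^{\flat}\Vert_{\max}\le Cn^{-2}$. Summing (ii) over all $n^{2}$ entries contributes $m\,\mathbf{1}_{n}^{\top}[(\mathbf{A}^{\flat})^{-1}-\mathbf{B}^{\flat}]\mathbf{1}_{n}=O(1)$ with uncontrolled sign, and the constant your chain of bounds actually produces is of order $mC\approx M^{2}/m^{2}\ge1$, which is not dominated by the positive main term $1+m\sum_{i}(a_{ii}^{\flat})^{-1}$; so cancellation down to $\theta\approx0$ cannot be excluded this way. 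The same obstruction blocks the natural alternative of showing $(\mathbf{A}^{\flat})^{-1}\mathbf{1}_{n}\ge0$ entrywise and then using $\sum_{j}c_{j}^{\flat}u_{j}=n$ with positive column sums $c_{j}^{\flat}$: the Neumann series for $(\mathbf{I}+\mathbf{H}^{\flat}\mathbf{B}^{\flat})^{-1}$ alternates in sign, so positivity of $(\mathbf{A}^{\flat})^{-1}\mathbf{1}_{n}$ is not free either. Closing the gap requires a genuinely quantitative lower bound on the smallest singular value of $\mathbf{A}$ (for instance a signless-Laplacian-type eigenvalue bound exploiting that every off-diagonal entry is at least $m\asymp1$), or else abandoning Sherman--Morrison in favor of the paper's oscillation argument, which obtains $\Vert\mathbf{A}^{-1}\Vert_{\max}=O(n^{-1})$ — hence well-conditioning — as a byproduct rather than requiring it as an input.
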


\begin{lem}[Diagonal approximation for $\mathbf{I}_{11}^{-1}$]
\label{lem:I11_diag_approx}
If Assumptions~\ref{assu:model}--\ref{assu:res_order} hold, then uniformly over $(\bm{\alpha},\beta)\in\mathbb{A}\times\mathbb{B}$,
$\mathbf{D}_{ii}(\bm{\alpha},\beta)\asymp n$, $\lVert\mathbf{I}_{11}^{-1}(\bm{\alpha},\beta)\rVert_{\infty}=O(n^{-1})$, and $\lVert\mathbf{I}_{11}^{-1}(\bm{\alpha},\beta)-\mathbf{D}^{-1}(\bm{\alpha},\beta)\rVert_{\mathrm{max}}=O(n^{-2})$.
\end{lem}

\begin{lem}[Profiling weight bounds]\label{prop:assu5_primitive}
If Assumptions~\ref{assu:model}--\ref{assu:res_order} hold, then uniformly over $(\bm{\alpha},\beta)\in\mathbb{A}\times\mathbb{B}$ and all $k$,
$\sup_{i}|w_{ki}|=O(1)$, $\sup_{i}\|\partial w_{ki}/\partial\beta\|_{2}=O(1)$, $\sup_{i}|\partial w_{ki}/\partial\alpha_{i}|=O(1)$, and $\sup_{i\neq j}|\partial w_{ki}/\partial\alpha_{j}|=O(n^{-1})$.
\end{lem}

\begin{lem}[Well-definedness of limiting objects]\label{lem:limits_exist}
If Assumptions~\ref{assu:model}--\ref{assu:mm_iden} hold, then the limits $\mathrm{J}_0$, $\Omega_0$, and $B_0$ exist as finite quantities, $\mathrm{J}_0$ is invertible, and $\Omega_0$ is positive definite.
If Assumptions~\ref{assu:model}--\ref{assu:res_order} hold, then the analogous limits $\mathrm{I}_0$ and $b_0$ satisfy the same properties, with $\mathrm{I}_0$ positive definite.
\end{lem}

We say an $n\times n$ matrix $\mathbf{A}$ belongs to the class $\mathcal{G}_{n}(\delta)$ if $\Vert\mathbf{A}\Vert_{1}\leq1$ and, for each $1\leq i\neq j\leq n$, $\mathbf{A}_{ii}\ge\delta$ and $\mathbf{A}_{ij}\le-\delta/(n-1)$.

\begin{lem}[Matrix contraction]
\label{matrix_inequality} If $\mathbf{A},\mathbf{B}\in\mathcal{G}_{n}(\delta)$, then
\begin{equation*}
\Vert\mathbf{A}\mathbf{B}\Vert_{1}\leq1-2(n-2)(n-1)^{-1}\delta^{2}.
\end{equation*}
\end{lem}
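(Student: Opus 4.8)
The plan is to adapt the argument of Lemma 2.1 of \citet{chatterjee2011random} to the asymmetric setting, reducing the operator-norm bound to a single-column estimate and then locating the quadratic gain in the defect of the triangle inequality that the entrywise constraints defining $\mathcal{G}_{n}(\delta)$ force inside the product. Since $\Vert\mathbf{A}\mathbf{B}\Vert_{1}=\max_{1\le l\le n}\sum_{i=1}^{n}\bigl|(\mathbf{A}\mathbf{B})_{il}\bigr|$, I would fix an arbitrary column index $l$, set $\mathbf{b}=\mathbf{B}_{\cdot l}$, and study $\sum_{i}|(\mathbf{A}\mathbf{b})_{i}|=\sum_{i}\bigl|\sum_{k}\mathbf{A}_{ik}\mathbf{b}_{k}\bigr|$. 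Membership in $\mathcal{G}_{n}(\delta)$ tells me that each column of $\mathbf{A}$, and the vector $\mathbf{b}$ itself, consists of one dominant diagonal entry together with $n-1$ sign-definite off-diagonal entries whose magnitudes are of exact order $\delta/(n-1)$, while $\Vert\mathbf{A}_{\cdot k}\Vert_{1}\le 1$ and $\Vert\mathbf{b}\Vert_{1}\le 1$.

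The crude submultiplicative estimate is $\sum_{i}|(\mathbf{A}\mathbf{b})_{i}|\le\sum_{i}\sum_{k}|\mathbf{A}_{ik}|\,|\mathbf{b}_{k}|=\sum_{k}|\mathbf{b}_{k}|\,\Vert\mathbf{A}_{\cdot k}\Vert_{1}\le 1$, so the entire content of the lemma is the size of the defect
\[
G_{l}:=\sum_{i=1}^{n}\Bigl(\sum_{k}|\mathbf{A}_{ik}\mathbf{b}_{k}|-\Bigl|\sum_{k}\mathbf{A}_{ik}\mathbf{b}_{k}\Bigr|\Bigr)\ge 0,
\]
after which $\Vert\mathbf{A}\mathbf{B}\Vert_{1}\le 1-G_{l}$ for every $l$ and it suffices to prove $G_{l}\ge\frac{2(n-2)}{n-1}\delta^{2}$. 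The elementary identity driving everything is that, for any finite collection of reals, the row defect $\sum_{k}|t_{k}|-\bigl|\sum_{k}t_{k}\bigr|$ equals exactly twice the smaller of the total positive mass and the total negative mass in that row; so I only need to exhibit, in each row, a guaranteed block of terms whose common sign is opposite to that of the dominant terms.

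Here the sign pattern does the work. For a row index $i\neq l$, the two terms routed through a diagonal factor, $\mathbf{A}_{il}\mathbf{b}_{l}$ and $\mathbf{A}_{ii}\mathbf{b}_{i}$, share one sign, whereas each of the $n-2$ remaining terms $\mathbf{A}_{ik}\mathbf{b}_{k}$ with $k\neq i,l$ is a product of two off-diagonal entries and therefore carries the opposite sign together with magnitude at least $(\delta/(n-1))^{2}$. These cross terms form the minority block, so the row defect is at least $2(n-2)(\delta/(n-1))^{2}$; summing over the $n-1$ rows $i\neq l$ yields $G_{l}\ge(n-1)\cdot 2(n-2)\,\delta^{2}/(n-1)^{2}=\frac{2(n-2)}{n-1}\delta^{2}$, which is the claim. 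The main obstacle is making this sign-and-magnitude bookkeeping airtight: I must verify that the two diagonal-routed terms really do dominate, so that the off-diagonal products are genuinely the minority mass (this is where the lower bound on the diagonal entries and the normalization $\Vert\cdot\Vert_{1}\le 1$ enter), and---crucially for the NTU model---carry the estimate through without any symmetry of $\mathbf{A}$, tracking signs entry by entry rather than invoking a spectral or quadratic-form shortcut that would be available under TU.
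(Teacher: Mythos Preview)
Your proposal re-derives from scratch the sign-cancellation argument underlying Lemma~2.1 of \citet{chatterjee2011random}, and the outline is correct. The paper's own proof is a two-line reduction rather than a fresh argument: it notes that $\Vert\mathbf{A}\mathbf{B}\Vert_{1}=\Vert(\mathbf{A}\mathbf{B})^{\top}\Vert_{\infty}=\Vert\mathbf{B}^{\top}\mathbf{A}^{\top}\Vert_{\infty}$, observes that $\mathbf{A}^{\top},\mathbf{B}^{\top}$ satisfy precisely the $\ell_{\infty}$-based entrywise conditions of the original Chatterjee--Diaconis--Sly class, and then cites their Lemma~2.1 directly. So the paper simply outsources the computation you carry out, and in particular the asymmetry of $\mathbf{A}$ that you flag as a special difficulty under NTU is handled for free by the transpose trick rather than by tracking signs entry by entry.

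One remark on the obstacle you flag. You do not actually need to show that the off-off block is the minority mass; trying to prove $P_{i}\le N_{i}$ will fail in general, since you have no row-wise upper bound on the off-diagonal entries of $\mathbf{A}$. What you need is only that \emph{both} sign-blocks clear the threshold $(n-2)\delta^{2}/(n-1)^{2}$, because the row defect is $2\min(P_{i},N_{i})$. You already have $P_{i}\ge(n-2)\bigl(\delta/(n-1)\bigr)^{2}$ from the off-diagonal magnitude bound; and the through-diagonal block satisfies
\[
N_{i}=|\mathbf{A}_{ii}|\,|\mathbf{B}_{il}|+|\mathbf{A}_{il}|\,|\mathbf{B}_{ll}|\ \ge\ \delta\cdot\frac{\delta}{n-1}+\frac{\delta}{n-1}\cdot\delta\ =\ \frac{2\delta^{2}}{n-1}\ \ge\ \frac{(n-2)\delta^{2}}{(n-1)^{2}},
\]
using the lower bound on the diagonal entries. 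Hence the minority question never arises, and the bookkeeping you anticipate is shorter than you expect.
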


\begin{lem}[Deviation bound]
\label{lem:dev_bound} If Assumptions~\ref{assu:model}, \ref{assu:compact_support_and_sampling}, and \ref{assu:ass_f} hold, then for any bounded array $(\lambda_{ij})$,
$\Pr\left\{ \max_{i}(n-1)^{-1}\left|\sum_{j\neq i}\lambda_{ij}(y_{ij}-p_{ij,0})\right|>C_{1}\sqrt{6\log n/(n-1)}\right\} \leq2n^{-2}$.
\end{lem}

Define $\bm{\eta}_{0}=\sum_{k=1}^{n}(\hat{\alpha}_{k}(\beta_{0})-\alpha_{k0})\frac{\partial\mathbf{J}_{11}(\bm{\alpha}_{0},\beta_{0})}{\partial\alpha_{k}}[\widehat{\bm{\alpha}}(\beta_{0})-\bm{\alpha}_{0}]$, and let $\bm{\rho}_{0}$ denote the third-order Taylor remainder of $\mathbf{m}_{1}(\widehat{\bm{\alpha}}(\beta_{0}),\beta_{0})$ around $\bm{\alpha}_{0}$, defined explicitly in (\ref{eq:rho0_def}) of the Supplemental Material.

\begin{lem}[Higher-order expansion of $\widehat{\bm\alpha}$]
\label{yi_bound} If Assumptions~\ref{assu:model}, \ref{assu:compact_support_and_sampling}, and \ref{assu:ass_f} hold, then:
\begin{enumerate}[label=(\alph*), font=\upshape]
  \item \label{yi-bound-a} $\Vert\widehat{\bm{\alpha}}(\beta_{0})-\bm{\alpha}_{0}\Vert_{\infty}=O_{p}\left(\sqrt{\log n/n}\right)$.
  \item \label{yi-bound-b} $\mathrm{plim}_{n\to\infty}\frac{1}{2\sqrt{N}}\mathbf{J}_{21,0}\mathbf{J}_{11,0}^{-1}\bm{\eta}_{0}=B_{0}^{(1)}$.
  \item \label{yi-bound-c} $\left\Vert \widehat{\bm{\alpha}}(\beta_{0})-\bm{\alpha}_{0}+\mathbf{J}_{11,0}^{-1}\mathbf{m}_{1,0}+\frac{1}{2}\mathbf{J}_{11,0}^{-1}\bm{\eta}_{0}\right\Vert _{\infty}=O_{p}\left((\log n)^{3/2}/n^{3/2}\right)$.
\end{enumerate}
\end{lem}

Define $S_{n}(\beta)\coloneqq N^{-1}m_{2}(\widehat{\bm\alpha}(\beta),\beta)$. Recall $\bar{S}_{n}(\beta)$ from \eqref{eq:def_S_n_bar}.

\begin{lem}[Uniform convergence]
\label{lem:bound_sn_s} If Assumptions~\ref{assu:model}--\ref{assu:mm_iden} hold, then
\[
\sup_{\beta\in\mathbb{B}}\left\Vert S_{n}(\beta)-\bar{S}_{n}(\beta)\right\Vert _{2}\stackrel{p}{\to}0.
\]
\end{lem}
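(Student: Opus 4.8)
The plan is to decompose $S_n(\beta)-\bar S_n(\beta)$ into a stochastic-fluctuation term and a term capturing the discrepancy between the plug-in estimator $\widehat{\bm\alpha}(\beta)$ and its population analogue $\bm\alpha(\beta)$, and to control each uniformly over the compact set $\mathbb B$. Write
\[
S_n(\beta)-\bar S_n(\beta) = \binom{n}{2}^{-1}\Big[m_2(\widehat{\bm\alpha}(\beta),\beta) - \mathbb{E}[m_2(\bm\alpha(\beta),\beta)\mid\mathbf x,\bm\alpha_0]\Big],
\]
and insert $\pm\, m_2(\bm\alpha(\beta),\beta)$ to split this into (i) $\binom{n}{2}^{-1}[m_2(\widehat{\bm\alpha}(\beta),\beta)-m_2(\bm\alpha(\beta),\beta)]$ and (ii) $\binom{n}{2}^{-1}[m_2(\bm\alpha(\beta),\beta)-\mathbb{E}(m_2(\bm\alpha(\beta),\beta)\mid\mathbf x,\bm\alpha_0)]$. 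Term (ii) is a centered average over dyads of bounded, conditionally independent summands $[y_{ij}-p_{ij}(\bm\alpha(\beta),\beta)]x_{ij}$; for each fixed $\beta$ it is $O_p(N^{-1/2})$ by Lemma \ref{dev_bound2}, and uniformity over $\mathbb B$ follows from a standard empirical-process / stochastic-equicontinuity argument — the summands are Lipschitz in $\beta$ with a uniform constant (by Assumptions \ref{assu:compact=000020support=000020and=000020sampling} and \ref{assu:ass_F}, since $F$ and $f$ are bounded and $\mathbb X$, $\mathbb B$ compact), so a chaining bound or a finite-$\varepsilon$-net argument combined with the pointwise deviation bound upgrades pointwise convergence to $\sup_{\beta\in\mathbb B}$.

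For term (i), I would use the mean-value representation $m_2(\widehat{\bm\alpha}(\beta),\beta)-m_2(\bm\alpha(\beta),\beta) = \mathbf J_{21}^{\circ}(\widehat{\bm\alpha}(\beta),\bm\alpha(\beta);\beta)\,(\widehat{\bm\alpha}(\beta)-\bm\alpha(\beta))$, where $\mathbf J_{21}^{\circ}$ is the integral-form Jacobian block, and bound it by $\|\mathbf J_{21}^{\circ}\|_\infty \cdot \|\widehat{\bm\alpha}(\beta)-\bm\alpha(\beta)\|_\infty$. From the explicit formula in Appendix \ref{ApendixA}, each column of $\mathbf J_{21}$ is a sum of $n-1$ bounded terms, so $\|\mathbf J_{21}^{\circ}\|_\infty = O(n)$, whence after dividing by $\binom{n}{2}\asymp n^2$ this contributes $O(n^{-1})\cdot\|\widehat{\bm\alpha}(\beta)-\bm\alpha(\beta)\|_\infty$. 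It then remains to show $\sup_{\beta\in\mathbb B}\|\widehat{\bm\alpha}(\beta)-\bm\alpha(\beta)\|_\infty \stackrel{p}{\to}0$. This I would obtain by the same mean-value trick applied to $\mathbf m_1$: since $\mathbf m_1(\widehat{\bm\alpha}(\beta),\beta)=0$ and $\mathbb{E}[\mathbf m_1(\bm\alpha(\beta),\beta)\mid\mathbf x,\bm\alpha_0]=0$, we have $\mathbf J_{11}^{\circ}\,(\widehat{\bm\alpha}(\beta)-\bm\alpha(\beta)) = -[\mathbf m_1(\bm\alpha(\beta),\beta)-\mathbb{E}\,\mathbf m_1(\bm\alpha(\beta),\beta)]$, and applying the diagonal approximation of $(\mathbf J_{11}^{\circ})^{-1}$ (Lemma \ref{lem:appro_inverse}, $\|(\mathbf J_{11}^{\circ})^{-1}\|_\infty = O(n^{-1})$ since $\mathbf J_{11}^{\circ}$ is diagonally dominant with row sums $\asymp n$) together with Lemma \ref{lem:dev_bound} gives $\|\widehat{\bm\alpha}(\beta)-\bm\alpha(\beta)\|_\infty = O_p(\sqrt{(\log n)/n})$ for each $\beta$, again upgraded to a uniform bound by the Lipschitz-in-$\beta$ property of the summands and a net argument.

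I expect the main obstacle to be the uniformity over $\mathbb B$, specifically establishing $\sup_{\beta\in\mathbb B}\|\widehat{\bm\alpha}(\beta)-\bm\alpha(\beta)\|_\infty \stackrel{p}{\to}0$ rather than merely pointwise convergence. Two subtleties arise: first, Theorem \ref{thm:=000020alpha=000020conv=000020rate} only guarantees existence and uniqueness of $\widehat{\bm\alpha}(\beta)$ on a neighborhood $\{\|\beta-\beta_0\|_2<c\}$, so strictly one should either restrict attention to that neighborhood (which suffices for consistency of $\hat\beta$, since we only need the behavior near $\beta_0$ plus the identification Assumption \ref{assu:mm_iden} to rule out far-away solutions) or argue that $\bm\alpha(\beta)$ and its estimator remain well-defined on all of $\mathbb B$ under the maintained compactness; second, the Lipschitz constants in the net argument must be shown uniform in $n$ after the appropriate normalization, which follows from Assumption \ref{assu:ass_F} but requires care in tracking how the $O(n)$ factors interact with the $O(n^{-1})$ inverse bounds. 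Once uniform consistency of $\widehat{\bm\alpha}(\cdot)$ is in hand, combining the pieces via the triangle inequality yields $\sup_{\beta\in\mathbb B}\|S_n(\beta)-\bar S_n(\beta)\|_2 = O_p(n^{-1}) + O_p(\sqrt{(\log n)/n}) \stackrel{p}{\to}0$, completing the proof.
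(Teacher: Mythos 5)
Your decomposition and tools---inserting $m_{2}(\bm{\alpha}(\beta),\beta)$, the integral-form mean-value Jacobians, the diagonal approximation of $\mathbf{J}_{11}^{-1}$ from Lemma \ref{lem:appro_inverse}, and the deviation bounds of Lemmas \ref{lem:dev_bound}--\ref{dev_bound2}---are exactly those of the paper's proof, and the argument goes through. The one issue you single out as ``the main obstacle,'' uniformity over $\mathbb{B}$, is actually free here and needs no chaining or $\varepsilon$-net argument: because $\bm{\alpha}(\beta)$ is defined by $\mathbb{E}[\mathbf{m}_{1}(\bm{\alpha}(\beta),\beta)\mid\mathbf{x},\bm{\alpha}_{0}]=0$, the centered quantities $\mathbf{m}_{1}(\bm{\alpha}(\beta),\beta)-\mathbb{E}[\mathbf{m}_{1}(\bm{\alpha}(\beta),\beta)\mid\cdot]$ and $m_{2}(\bm{\alpha}(\beta),\beta)-\mathbb{E}[m_{2}(\bm{\alpha}(\beta),\beta)\mid\cdot]$ collapse to the $\beta$-free statistics $\mathbf{m}_{1}(\bm{\alpha}_{0},\beta_{0})$ and $\sum_{i<j}(y_{ij}-p_{ij})x_{ij}$, so all $\beta$-dependence sits in the Jacobian factors, which Assumptions \ref{assu:compact=000020support=000020and=000020sampling}--\ref{assu:ass_F} bound uniformly over $\mathbb{A}\times\mathbb{B}$; this is precisely how the paper gets the supremum for free. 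Two minor bookkeeping points: the operator norm $\Vert\mathbf{J}_{21}^{\circ}\Vert_{\infty}$ is $O(n^{2})$, not $O(n)$ (each row has $n$ entries, each itself a sum of $n-1$ bounded terms), so term (i) is $O(1)\cdot\Vert\widehat{\bm{\alpha}}(\beta)-\bm{\alpha}(\beta)\Vert_{\infty}$ rather than $O(n^{-1})\cdot\Vert\cdot\Vert_{\infty}$---still $o_{p}(1)$; and your caveat that $\widehat{\bm{\alpha}}(\beta)$ is only guaranteed to exist near $\beta_{0}$ is a legitimate point that the paper's own proof also glosses over.
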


\begin{lem}[One-step score bounds]
\label{lem:one_step_bound} If Assumptions~\ref{assu:model}--\ref{assu:res_order} hold, then:
\begin{enumerate}[label=(\alph*), font=\upshape]
\item $N^{-1/2}\nabla_{\bm{\alpha}^{\top}}s_{n}(\widehat{\bm{\alpha}}(\beta_{0}),\beta_{0})[\widehat{\bm{\alpha}}(\beta_{0})-\bm{\alpha}_{0}]\stackrel{p}{\to}b_{0}$.
\item $N^{-1}\nabla_{\beta^{\top}}s_{n}(\widehat{\bm{\alpha}}(\bar{\beta}),\bar{\beta})+N^{-1}\nabla_{\bm{\alpha}^{\top}}s_{n}(\widehat{\bm{\alpha}}(\bar{\beta}),\bar{\beta})\partial\widehat{\bm{\alpha}}(\bar{\beta})/\partial\beta^{\top}+\mathrm{I}_{0}\stackrel{p}{\to}0$,
\end{enumerate}
for any $\bar\beta$ between $\beta_0$ and its $\sqrt{N}$-consistent estimator.
\end{lem}

\section{Proofs}\label{sec:proofs_main}

This appendix contains the proofs of Lemma~\ref{lem:alpha_conv_rate} and Theorems~\ref{thm:jmm_alpha_beta_consistency}--\ref{thm:norm_ape}. Proofs of the supporting lemmas stated in Appendix~\ref{sec:supporting_lemmas} are deferred to Section~\ref{sec:sm_lemma_proofs} of the Supplemental Material.

\subsection{Proof of Lemma~\ref{lem:alpha_conv_rate}}

\begin{proof}[Proof of Lemma~\ref{lem:alpha_conv_rate}]
First, suppose a solution to $\mathbf{m}_{1}(\bm{\alpha},\beta)=0$
exists. Let $\mathbf{G}^{\circ}(\bm{\alpha},\widehat{\bm{\alpha}})$ be the
matrix whose $(i,j)$th element is 
\[
[\mathbf{G}^{\circ}(\bm{\alpha},\widehat{\bm{\alpha}})]_{ij}=\int_{0}^{1}\frac{\partial r_{i}}{\partial\alpha_{j}}(t\bm{\alpha}+(1-t)\widehat{\bm{\alpha}})dt.
\]
Then, by an integral form of the mean value theorem, we have
\[
\mathbf{r}(\bm{\alpha})-\mathbf{r}(\widehat{\bm{\alpha}})={\bf G}^{\circ}(\bm{\alpha},\widehat{\bm{\alpha}})(\bm{\alpha}-\widehat{\bm{\alpha}}).
\]
Notice that for $i\neq j,$ $\partial r_{j}/\partial\alpha_{i}=-(n-1)^{-1}p_{ji}^{(0,1,0)}(\bm{\alpha},\beta)<0$;
while for each $i$, $\partial r_{i}/\partial\alpha_{i}=1-(n-1)^{-1}\sum_{j\neq i}p_{ij}^{(1,0,0)}(\bm{\alpha},\beta)>0$.
Moreover, for each $i,$ 
\[
\sum_{j=1}^{n}\left|\frac{\partial r_{j}}{\partial\alpha_{i}}\right|=\frac{\partial r_{i}}{\partial\alpha_{i}}-\sum_{j\neq i}\frac{\partial r_{j}}{\partial\alpha_{i}}\equiv1.
\]
For each $i$ and any $\bm{\alpha},$ this proves $\sum_{j=1}^{n}\lvert[\mathbf{G}^{\circ}(\bm{\alpha},\widehat{\bm{\alpha}})]_{ji}\rvert=1,$
i.e., $\rVert\mathbf{G}^{\circ}(\bm{\alpha},\widehat{\bm{\alpha}})\rVert_{1}=1.$
By Assumptions \ref{assu:compact_support_and_sampling}
and \ref{assu:ass_f}, the derivatives are uniformly bounded and
$\partial r_{j}/\partial\alpha_{i}<0$ for $i\neq j$, while
$\partial r_{i}/\partial\alpha_{i}>0$. By Assumption~\ref{assu:ass_f},
there exists a constant $\delta\in(0,1)$ such that
$[\mathbf{G}^{\circ}(\bm{\alpha},\widehat{\bm{\alpha}})]_{ii}\ge\delta$
and $[\mathbf{G}^{\circ}(\bm{\alpha},\widehat{\bm{\alpha}})]_{ij}\le-\delta/(n-1)$
for all $i\neq j$. Therefore, we have
$\mathbf{G}^{\circ}(\bm{\alpha},\widehat{\bm{\alpha}})\in\mathcal{G}_{n}(\delta)$. 

By the updating algorithm (\ref{eq:iter}), after every two updates,
we have 
\begin{align*}
\Vert\bm{\alpha}^{k+2}(\beta)-\widehat{\bm{\alpha}}(\beta)\Vert_{1}= & \ \Vert\mathbf{r}(\mathbf{r}(\bm{\alpha}^{k}(\beta)))-\mathbf{r}(\mathbf{r}(\widehat{\bm{\alpha}}(\beta)))\Vert_{1}\\
= & \ \Vert\mathbf{G}^{\circ}(\mathbf{r}(\bm{\alpha}^{k}(\beta)),\widehat{\bm{\alpha}}(\beta))(\mathbf{r}(\bm{\alpha}^{k}(\beta))-\widehat{\bm{\alpha}}(\beta))\Vert_{1}\\
= & \ \Vert\mathbf{G}^{\circ}(\mathbf{r}(\bm{\alpha}^{k}(\beta)),\widehat{\bm{\alpha}}(\beta))\mathbf{G}^{\circ}(\bm{\alpha}^{k}(\beta),\widehat{\bm{\alpha}}(\beta))(\bm{\alpha}^{k}(\beta)-\widehat{\bm{\alpha}}(\beta))\Vert_{1}\\
\leq & \ \Vert\mathbf{G}^{\circ}(\mathbf{r}(\bm{\alpha}^{k}(\beta)),\widehat{\bm{\alpha}}(\beta))\mathbf{G}^{\circ}(\bm{\alpha}^{k}(\beta),\widehat{\bm{\alpha}}(\beta))\Vert_{1}\lVert\bm{\alpha}^{k}(\beta)-\widehat{\bm{\alpha}}(\beta)\Vert_{1}\\
\leq & \ \left(1-\frac{2(n-2)}{n-1}\delta^{2}\right)\Vert\bm{\alpha}^{k}(\beta)-\widehat{\bm{\alpha}}(\beta)\Vert_{1},
\end{align*}
where the first equality holds by the fact that $\widehat{\bm{\alpha}}(\beta)=\mathbf{r}(\widehat{\bm{\alpha}}(\beta))$
which implies $\widehat{\bm{\alpha}}(\beta)$ is the fixed point of
the updating function, and the last inequality holds by Lemma~\ref{matrix_inequality}.
We write $\bar{\delta} \coloneqq 1-\frac{2(n-2)}{n-1}\delta^{2}$, and the second
inequality of Lemma~\ref{lem:alpha_conv_rate}
follows.

By this result, $\mathbf{r}(\bm{\alpha})$ is a contraction mapping
for $(\bm{\alpha},\beta)\in\mathbb{A}\times\mathbb{B}$. So, if there
exists a solution $\widehat{\bm{\alpha}}(\beta)\in\mathbb{A}$, the
solution is unique. Now we show the existence of the solution, where
the main technique is adapted from \citet{yan2016asymptotics} and
\citet{yan2019statistical}. Define a sequence of Newton iterations
$\bm{\alpha}^{(k+1)}=\bm{\alpha}^{(k)}-\mathbf{J}_{11}^{-1}(\bm{\alpha}^{(k)},\beta)\mathbf{m}_{1}(\bm{\alpha}^{(k)},\beta)$,
and choose the initial value as $\bm{\alpha}^{(0)}=\bm{\alpha}_{0}$.
Following Proposition A.1 of \citet{yan2016asymptotics}, in a convex
subset $\mathbb{D}\subset\mathbb{A}$ that contains $\bm{\alpha}_{0}$,
to obtain the existence of the solution it is sufficient to establish
three facts: (1) $\mathbf{J}_{11}(\bm{\alpha},\beta)$
is Lipschitz continuous with Lipschitz constant of order $O(n)$,
(2) $\left\Vert \mathbf{J}_{11}^{-1}(\bm{\alpha}_{0},\beta)\right\Vert _{\infty}=O(n^{-1})$,
and (3) $\left\Vert \mathbf{J}_{11}^{-1}(\bm{\alpha}_{0},\beta)\mathbf{m}_{1}(\bm{\alpha}_{0},\beta)\right\Vert _{\infty}=O(\left\Vert \beta-\beta_{0}\right\Vert _{2}).$

For the first fact, we calculate the derivative of $\mathbf{J}_{11}(\bm{\alpha},\beta)$
with respect to $\bm{\alpha}$:
\[
\begin{aligned}
\frac{\partial\mathbf{J}_{11,ij}}{\partial\alpha_{k}} = & -\mathbf{1}\{i=j=k\}\textstyle\sum_{l\neq i}p_{il}^{(2,0,0)}-\mathbf{1}\{i=j\neq k\}p_{ik}^{(1,1,0)}\\
& -\mathbf{1}\{j\neq i=k\}p_{ij}^{(1,1,0)}-\mathbf{1}\{i\neq j=k\}p_{ij}^{(0,2,0)},
\end{aligned}
\]
which implies that $\max_{i}\sum_{j,k}\left|\int_{0}^{1}\frac{\partial\mathbf{J}_{11,ij}(t\bm{\alpha}_{1}+(1-t)\bm{\alpha}_{2})}{\partial\alpha_{k}}dt\right|=O(n).$
Hence $\mathbf{J}_{11}(\bm{\alpha},\beta)$ is Lipschitz continuous
with Lipschitz constant $O(n)$. The second fact is a direct application
of the inverse approximation Lemma~\ref{lem:appro_inverse}. Finally,
the third result follows from
\begin{align*}
 & \ \left\Vert [\mathbf{J}_{11}(\bm{\alpha}_{0},\beta)]^{-1}\mathbf{m}_{1}(\bm{\alpha}_{0},\beta)\right\Vert _{\infty}\\
\leq & \ \left\Vert [\mathbf{J}_{11}(\bm{\alpha}_{0},\beta)]^{-1}\mathbf{m}_{1,0}\right\Vert _{\infty}+\left\Vert [\mathbf{J}_{11}(\bm{\alpha}_{0},\beta)]^{-1}\left[\mathbf{m}_{1}(\bm{\alpha}_{0},\beta)-\mathbf{m}_{1,0}\right]\right\Vert _{\infty}\\
\leq & \ o_{p}(1)+O\left(\left\Vert \beta-\beta_{0}\right\Vert _{2}\right),
\end{align*}
where the first inequality holds by the triangle inequality and
the second inequality is true by Lemma~\ref{lem:dev_bound} and the
Lipschitz continuity of $p(\cdot)$ under Assumption~\ref{assu:ass_f}.
In particular, for any $\beta$ in a sufficiently small neighborhood
of $\beta_{0}$, the right-hand side is $o_{p}(1)$.
Then, by an application of Proposition A.1 of \citet{yan2016asymptotics},
we have $\lim_{k\to\infty}\bm{\alpha}^{(k)}$ exists and the limit
equals $\widehat{\bm{\alpha}}(\beta)$ if $\left\Vert \beta-\beta_{0}\right\Vert _{2}<c$
for some constant $c>0$.\qedhere
\end{proof}

\subsection{Proof of Theorem~\ref{thm:jmm_alpha_beta_consistency}}

\begin{proof}[Proof of Theorem~\ref{thm:jmm_alpha_beta_consistency}]
\textit{Consistency.}\quad
Recall $S_{n}(\beta)$ and $\bar{S}_{n}(\beta)$ from Lemma~\ref{lem:bound_sn_s}.
By Assumption~\ref{assu:mm_iden}, $\hat{\beta}$ and $\beta_{0}$
are unique solutions to $S_{n}(\beta)=0$ and $\bar{S}_{n}(\beta)=0$,
respectively. By Lemma~\ref{lem:bound_sn_s},
\begin{equation}
\left\Vert \bar{S}_{n}(\hat{\beta})\right\Vert _{2}=\left\Vert \bar{S}_{n}(\hat{\beta})-S_{n}(\hat{\beta})\right\Vert _{2}\leq\sup_{\beta\in\mathbb{B}}\left\Vert S_{n}(\beta)-\bar{S}_{n}(\beta)\right\Vert _{2}\stackrel{p}{\to}0.\label{eq:sn_beta_hat_op1}
\end{equation}
Fix $\delta>0$. By Assumption~\ref{assu:mm_iden}, there exists an
$\epsilon>0$ such that $\left\Vert \beta-\beta_{0}\right\Vert _{2}\geq\delta$
implies $\left\Vert \bar{S}_{n}(\beta)\right\Vert _{2}\geq\epsilon$,
hence
\[
\Pr\left(\left\Vert \hat{\beta}-\beta_{0}\right\Vert _{2}\geq\delta\right)\leq\Pr\left(\left\Vert \bar{S}_{n}(\hat{\beta})\right\Vert _{2}\geq\epsilon\right)\leq\Pr\left(\sup_{\beta\in\mathbb{B}}\left\Vert S_{n}(\beta)-\bar{S}_{n}(\beta)\right\Vert _{2}\geq\epsilon\right)\to0
\]
by (\ref{eq:sn_beta_hat_op1}).

We turn to the proof of the convergence of $\widehat{\bm{\alpha}}$
to $\bm{\alpha}_{0}$ in the $\ell_{\infty}$ norm. By the integral
type mean-value theorem, we have 
\begin{align*}
\widehat{\bm{\alpha}}-\bm{\alpha}_{0}= & -\left[\mathbf{J}_{11}^{\circ}(\widehat{\bm{\alpha}},\bm{\alpha}_{0})\right]^{-1}\mathbf{m}_{1}(\bm{\alpha}_{0},\hat{\beta})\\
= & -\left[\mathbf{J}_{11}^{\circ}(\widehat{\bm{\alpha}},\bm{\alpha}_{0})\right]^{-1}\mathbf{m}_{1,0}-\left[\mathbf{J}_{11}^{\circ}(\widehat{\bm{\alpha}},\bm{\alpha}_{0})\right]^{-1}\left(\mathbf{m}_{1}(\bm{\alpha}_{0},\hat{\beta})-\mathbf{m}_{1,0}\right)
\end{align*}
Following the proof of Lemma~\ref{yi_bound}, we have $\Vert[\mathbf{J}_{11}^{\circ}(\widehat{\bm{\alpha}},\bm{\alpha}_{0})]^{-1}\Vert_{\infty}=O(n^{-1})$
and $\Vert\mathbf{m}_{1,0}\Vert_{\infty}=O_{p}(\sqrt{n\log n})$,
hence $\left\Vert \left[\mathbf{J}_{11}^{\circ}(\widehat{\bm{\alpha}},\bm{\alpha}_{0})\right]^{-1}\mathbf{m}_{1,0}\right\Vert _{\infty}\stackrel{p}{\to}0$.
Thus, we only need to show that $O(n^{-1})\cdot\Vert\mathbf{m}_{1}(\bm{\alpha}_{0},\hat{\beta})-\mathbf{m}_{1,0}\Vert_{\infty}\stackrel{p}{\to}0$.
Notice that
\begin{align*}
\Vert\mathbf{m}_{1}(\bm{\alpha}_{0},\hat{\beta})-\mathbf{m}_{1,0}\Vert_{\infty}
& = \max_{1\leq i\leq n}\left|\textstyle\sum_{j\neq i}\left[p_{ij}(\bm{\alpha}_{0},\hat{\beta})-p_{ij,0}\right]\right|\\
& \leq \max_{1\leq i\leq n}\left\Vert \textstyle\sum_{j\neq i}p_{ij}^{(0,0,1)}(\bm{\alpha}_{0},\bar{\beta})x_{ij}\right\Vert _{2}\cdot\Vert\hat{\beta}-\beta_{0}\Vert_{2} = o_{p}(n),
\end{align*}
where we use a Taylor expansion of $p_{ij}(\bm{\alpha}_{0},\beta)$
around $\beta_{0}$ ($\bar{\beta}$ is the mean value which may vary
with $i$) and the fact that $p_{ij}^{(0,0,1)}$ is bounded by
Assumption~\ref{assu:ass_f}.

\textit{Asymptotic normality.}\quad
By Lemma~\ref{lem:limits_exist},
$N^{-1}\mathrm{J}_{n,0}\stackrel{p}{\to}\mathrm{J}_{0}$,
a finite positive-definite matrix. Combined with
$\hat{\beta}\stackrel{p}{\to}\beta_{0}$,
$\lVert\widehat{\bm{\alpha}}-\bm{\alpha}_{0}\rVert_{\infty}=o_{p}(1)$
from the consistency part, and the continuity of
$N^{-1}\mathrm{J}_{n}$ in $(\bm{\alpha},\beta)$, we have
$N^{-1}\mathrm{J}_{n}(\widehat{\bm{\alpha}}(\bar\beta),\bar{\beta})\stackrel{p}{\to}\mathrm{J}_{0}$
for any $\bar{\beta}$ between $\hat{\beta}$ and $\beta_{0}$.

By a first-order Taylor expansion of $m_{n}(\hat{\beta})=m_{2}(\widehat{\bm{\alpha}}(\hat{\beta}),\hat{\beta})$
around $\beta_{0}$, we have 
\[
m_{n}(\hat{\beta})-m_{n}(\beta_{0})=\mathrm{J}_{n}(\widehat{\bm{\alpha}}(\bar\beta),\bar{\beta})(\hat{\beta}-\beta_{0}),
\]
where $\bar{\beta}$ is the mean-value between $\hat{\beta}$ and
$\beta_{0}$. By $m_{n}(\hat{\beta})=0$, we obtain 
\begin{align}
\sqrt{N}(\hat{\beta}-\beta_{0})= & -[N^{-1}\mathrm{J}_{n}(\widehat{\bm{\alpha}}(\bar\beta),\bar{\beta})]^{-1}\frac{1}{\sqrt{N}}m_{2}(\widehat{\bm{\alpha}}(\beta_{0}),\beta_{0})\nonumber \\
= & -\mathrm{J}_{0}^{-1}\left\{ \frac{1}{\sqrt{N}}\sum_{i=1}^{n}\sum_{j>i}\left[y_{ij}-p_{ij}(\widehat{\bm{\alpha}}(\beta_{0}),\beta_{0})\right]x_{ij}\right\} +o_{p}(1)\label{eq:beta_hat_linear_expansion}
\end{align}
by the definition of $\mathrm{J}_{0}$. The bracketed term depends on $\widehat{\bm{\alpha}}(\beta_{0})$ and requires expansion before a CLT applies. Setting $\bm{\zeta}=\widehat{\bm{\alpha}}(\beta_{0})-\bm{\alpha}_{0}$, a third-order Taylor expansion of $m_{2}(\widehat{\bm{\alpha}}(\beta_{0}),\beta_{0})$ in $\bm\alpha$ around $\bm\alpha_{0}$ gives
\begin{align}
 & \ \frac{1}{\sqrt{N}}\sum_{i=1}^{n}\sum_{j>i}\left[y_{ij}-p_{ij}(\widehat{\bm{\alpha}}(\beta_{0}),\beta_{0})\right]x_{ij}\nonumber\\
= & \ \frac{1}{\sqrt{N}}m_{2,0}+\frac{1}{\sqrt{N}}\mathbf{J}_{21,0}\bm{\zeta}
 -\frac{1}{2\sqrt{N}}\sum_{k=1}^{n}\zeta_{k}\sum_{i=1}^{n}\sum_{j>i}\frac{\partial^{2}p_{ij}(\bm{\alpha}_{0},\beta_{0})}{\partial\alpha_{k}\partial\bm{\alpha}^{\top}}\bm{\zeta}\, x_{ij}\nonumber\\
 & \ -\frac{1}{6\sqrt{N}}\sum_{k=1}^{n}\sum_{l=1}^{n}\zeta_{k}\zeta_{l}\sum_{i=1}^{n}\sum_{j>i}\frac{\partial^{3}p_{ij}(\bar{\bm{\alpha}},\beta_{0})}{\partial\alpha_{k}\partial\alpha_{l}\partial\bm{\alpha}^{\top}}\bm{\zeta}\, x_{ij}\nonumber\\
\eqqcolon & \ \left(I\right)+\left(II\right)+\left(III\right)+\left(IV\right).\label{thirdorder_ta}
\end{align}

Since $p_{ij}$ depends only on $\alpha_{i}$ and $\alpha_{j}$,
and $\sup_{i}|\hat{\alpha}_{i}(\beta_{0})-\alpha_{i0}|=O_{p}(\sqrt{(\log n)/n})$ by Lemma~\ref{yi_bound},
the third-order remainder satisfies
$(IV)=O_{p}(N^{-1/2}\cdot N\cdot(\log n/n)^{3/2})=o_{p}(1)$
under Assumptions~\ref{assu:compact_support_and_sampling} and~\ref{assu:ass_f}.

For $(III)$, substituting the asymptotic linear approximation of
$\widehat{\bm{\alpha}}(\beta_{0})-\bm{\alpha}_{0}$,
its $k$th entry equals
$-\frac{1}{2\sqrt{N}}\mathrm{Tr}[\mathbf{J}_{11,0}^{-1}\mathbf{V}_{11,0}(\mathbf{J}_{11,0}^{-1})^{\top}\mathbf{R}_{k,0}]+o_{p}(1)$,
where $\mathbf{R}_{k}$ for $k=1,\dots,K$ is defined in (\ref{bias_r}).
Recalling the definition of $B_{k0}^{(2)}$ in (\ref{eq:bias_B0}), it follows
that $(III)=-B_{0}^{(2)}+o_{p}(1)$, where
$B_{0}^{(2)}\coloneqq(B_{10}^{(2)},\dots,B_{K0}^{(2)})^{\top}$.

We directly combine the rest of terms, $(I)\text{ and }(II)$, to obtain
\begin{align}
 & \sqrt{N}(\hat{\beta}-\beta_{0})-\mathrm{J}_{0}^{-1}B_{0}\nonumber\\
= & -\mathrm{J}_{0}^{-1}\left\{ \frac{1}{\sqrt{N}}m_{2,0}+\frac{1}{\sqrt{N}}\mathbf{J}_{21,0}[\widehat{\bm{\alpha}}(\beta_{0})-\bm{\alpha}_{0}]\right\}+\mathrm{J}_{0}^{-1}B_{0}^{(2)}-\mathrm{J}_{0}^{-1}B_{0}+o_{p}(1)\nonumber\\
= & -\mathrm{J}_{0}^{-1}\left\{ \frac{1}{\sqrt{N}}m_{2,0}-\frac{1}{\sqrt{N}}\mathbf{J}_{21,0}\mathbf{J}_{11,0}^{-1}\mathbf{m}_{1,0}\right\}+o_{p}(1),
\label{eq:asym_expansion}
\end{align}
where we use the facts that $B_{0}^{(1)}\coloneqq(B_{10}^{(1)},\dots,B_{K0}^{(1)})^{\top}$;
the third-order expansion $\widehat{\bm{\alpha}}(\beta_{0})-\bm{\alpha}_{0}=-\mathbf{J}_{11,0}^{-1}\mathbf{m}_{1,0}-\frac{1}{2}\mathbf{J}_{11,0}^{-1}\bm{\eta}_{0}-\frac{1}{6}\mathbf{J}_{11,0}^{-1}\bm{\rho}_{0}$ with $\Vert\mathbf{J}_{11,0}^{-1}\bm{\rho}_{0}\Vert_{\infty}=O((\log n)^{3/2}/n^{3/2})$ by Lemma~\ref{yi_bound}\ref{yi-bound-c}; and $\frac{1}{2\sqrt{N}}\mathbf{J}_{21,0}\mathbf{J}_{11,0}^{-1}\bm{\eta}_{0}-B_{0}^{(1)}=o_{p}(1)$ by Lemma~\ref{yi_bound}\ref{yi-bound-b}.
To apply the CLT, we verify the Lindeberg condition. Define
\begin{align}
\frac{1}{\sqrt{N}}\sum_{i=1}^{n}\sum_{j>i}\xi_{ij} \coloneqq & -\mathrm{J}_{0}^{-1}\frac{1}{\sqrt{N}}\left\{ \sum_{i=1}^{n}\sum_{j>i}(y_{ij}-p_{ij,0})x_{ij}-\mathbf{J}_{21,0}\mathbf{J}_{11,0}^{-1}\mathbf{m}_{1,0}\right\} \nonumber\\
= & -\mathrm{J}_{0}^{-1}\frac{1}{\sqrt{N}}\sum_{i=1}^{n}\sum_{j>i}(y_{ij}-p_{ij,0})\tilde{x}_{ij},
\label{eq:weight_sum_y_p}
\end{align}
where $\tilde{x}_{ij}$ collects the two multipliers of $(y_{ij}-p_{ij,0})$ from the definitions in Appendix~\ref{sec:appendix_defs} and is uniformly bounded by Assumptions~\ref{assu:compact_support_and_sampling} and~\ref{assu:ass_f}. Since the centered Bernoulli $y_{ij}-p_{ij,0}$ are independent across dyads and bounded in $[-1,1]$, the Lindeberg condition holds, yielding 
$
\sqrt{N}(\hat{\beta}-\beta_{0})-\mathrm{J}_{0}^{-1}B_{0}\stackrel{d}{\to}\mathcal{N}(0,\Omega_{0}),
$
where $\Omega_{0}$ is defined in (\ref{limit_var_mm}).
\end{proof}

\subsection{Proof of Theorem~\ref{thm:os_norm}}

\begin{proof}[Proof of Theorem~\ref{thm:os_norm}]
By definition, $\hat{\beta}_{\mathrm{OS}}=\hat{\beta}+\mathrm{I}_{n}(\widehat{\bm{\alpha}},\hat{\beta})^{-1}s_{n}(\widehat{\bm{\alpha}},\hat{\beta})$ with the JMM estimator $(\widehat{\bm{\alpha}},\hat{\beta})$. A first-order Taylor expansion of $s_{n}(\widehat{\bm\alpha},\hat\beta)$ around $\beta_{0}$, followed by a first-order expansion of $s_{n}(\widehat{\bm\alpha}(\beta_{0}),\beta_{0})$ around $\bm{\alpha}_{0}$, together with $N^{-1}\mathrm{I}_{n}(\widehat{\bm{\alpha}},\hat{\beta})\stackrel{p}{\to}\mathrm{I}_{0}$, yields
\begin{align}
 & \ \sqrt{N}(\hat{\beta}_{\mathrm{OS}}-\beta_{0})\nonumber\\
= & \ \frac{1}{\sqrt{N}}\mathrm{I}_{0}^{-1}s_{n}(\widehat{\bm{\alpha}}(\beta_{0}),\beta_{0})\nonumber\\
 & \ +\mathrm{I}_{0}^{-1}\left[\frac{1}{N}\nabla_{\beta^{\top}}s_{n}(\widehat{\bm{\alpha}}(\bar{\beta}),\bar{\beta})+\frac{1}{N}\nabla_{\bm{\alpha}^{\top}}s_{n}(\widehat{\bm{\alpha}}(\bar{\beta}),\bar{\beta})\frac{\partial\widehat{\bm{\alpha}}(\bar{\beta})}{\partial\beta^{\top}}+\mathrm{I}_{0}\right]\sqrt{N}(\hat{\beta}-\beta_{0})+o_{p}(1)\nonumber\\
= & \ \frac{1}{\sqrt{N}}\mathrm{I}_{0}^{-1}s_{n,0}+\frac{1}{\sqrt{N}}\mathrm{I}_{0}^{-1}\nabla_{\bm{\alpha}^{\top}}s_{n}(\bar{\bm{\alpha}},\beta_{0})(\widehat{\bm{\alpha}}(\beta_{0})-\bm{\alpha}_{0})\nonumber\\
 & \ +\mathrm{I}_{0}^{-1}\left[\frac{1}{N}\nabla_{\beta^{\top}}s_{n}(\widehat{\bm{\alpha}}(\bar{\beta}),\bar{\beta})+\frac{1}{N}\nabla_{\bm{\alpha}^{\top}}s_{n}(\widehat{\bm{\alpha}}(\bar{\beta}),\bar{\beta})\frac{\partial\widehat{\bm{\alpha}}(\bar{\beta})}{\partial\beta^{\top}}+\mathrm{I}_{0}\right]\sqrt{N}(\hat{\beta}-\beta_{0})+o_{p}(1).
\label{asym_repre_os}
\end{align}
By Lemma~\ref{lem:one_step_bound},
we have $\frac{1}{\sqrt{N}}\nabla_{\bm{\alpha}^{\top}}s_{n}(\bar{\bm{\alpha}},\beta_{0})(\widehat{\bm{\alpha}}(\beta_{0})-\bm{\alpha}_{0})\stackrel{p}{\to}b_{0}$ and
\[
\frac{1}{N}\nabla_{\beta^{\top}}s_{n}(\widehat{\bm{\alpha}}(\bar{\beta}),\bar{\beta})+\frac{1}{N}\nabla_{\bm{\alpha}^{\top}}s_{n}(\widehat{\bm{\alpha}}(\bar{\beta}),\bar{\beta})\frac{\partial\widehat{\bm{\alpha}}(\bar{\beta})}{\partial\beta^{\top}}+\mathrm{I}_{0}=o_{p}(1).
\]
Hence, using the result that $\sqrt{N}(\hat{\beta}-\beta_{0})=O_{p}(1)$
by Theorem~\ref{thm:jmm_alpha_beta_consistency}, we simplify
(\ref{asym_repre_os}) as 
\begin{align}
\sqrt{N}(\hat{\beta}_{\mathrm{OS}}-\beta_{0})= & \ \frac{1}{\sqrt{N}}\mathrm{I}_{0}^{-1}s_{n,0}+\mathrm{I}_{0}^{-1}b_{0}+o_{p}(1)\nonumber \\
= & \ \frac{1}{\sqrt{N}}\mathrm{I}_{0}^{-1}\sum_{i=1}^{n}\sum_{j>i}s_{ij,0}+\mathrm{I}_{0}^{-1}b_{0}+o_{p}(1),\label{eq:asym_repre_os1}
\end{align}
where $s_{ij,0}$ is dyad $(i,j)$'s contribution
to the asymptotic representation. Then, by the Lindeberg--Feller CLT,
as in the proof of Theorem~\ref{thm:jmm_alpha_beta_consistency},
we have the stated asymptotic normality.
\end{proof}

\subsection{Proof of Theorem~\ref{thm:bagging_normality}}

\begin{proof}
We define $\hat{\beta}_{T_{n}}=\frac{1}{T_{n}}\sum_{s=1}^{T_{n}}\hat{\beta}_{\mathrm{OS-SJ}}^{(s)}$
as the average of all possible OS-SJ estimators. Let $\mathcal{F}_{n}$
be the $\sigma$-algebra generated by all observed information. It
is clear that $\hat{\beta}_{T_{n}}$ is $\mathcal{F}_{n}$-measurable.
Let $\mathbb{E}^{*}$ represent the expectation over randomness from
the random splits conditional on $\mathcal{F}_{n}$. Our proof contains two immediate results: (i) $\sqrt{N}(\hat{\beta}_{T_{n}}-\beta_{0})\stackrel{d}{\to}\mathcal{N}(0,\mathrm{I}_{0}^{-1})$
as $n\to\infty$; (ii) $\sqrt{N}(\hat{\beta}_{\mathrm{BG}}-\hat{\beta}_{T_{n}})\stackrel{p}{\to}0$
as $n\to\infty$ and $\tilde{T}_{n}\to\infty$. Theorem~\ref{thm:bagging_normality}
follows by a combination of these two results.

\noindent\textbf{Step (i). }By (\ref{eq:asym_repre_os1}), we have
\[
\sqrt{N}(\hat{\beta}_{\mathrm{OS}}-\beta_{0})=\mathrm{I}_{0}^{-1}b_{0}+\frac{1}{\sqrt{N}}\mathrm{I}_{0}^{-1}\sum_{(i,j)\in\mathcal{I}_{n}\times\mathcal{I}_{n};j>i}s_{ij,0}+\mathcal{R}(\mathbf{y},\mathbf{x},\bm{\alpha}_{0}),
\]
where $\mathcal{R}(\mathbf{y},\mathbf{x},\bm{\alpha}_{0})$ is a residual
term of order $o_{p}(1)$, as shown in the proof of Lemma~\ref{lem:one_step_bound}.
Thus, the one-step estimators based on sub-networks are (using $N/4$ to approximate the sub-network dyad count $\binom{n/2}{2}$, with the $O(n^{-1})$ discrepancy absorbed into $\mathcal{R}$)
\[
\sqrt{N/4}(\hat{\beta}_{\mathrm{OS,1}}^{(t)}-\beta_{0})=\mathrm{I}_{0}^{-1}b_{0}+\frac{1}{\sqrt{N/4}}\mathrm{I}_{0}^{-1}\sum_{(i,j)\in\mathcal{I}_{1,n}^{(t)}\times\mathcal{I}_{1,n}^{(t)};j>i}s_{ij,0}+\mathcal{R}(\mathbf{y}_{1}^{(t)},\mathbf{x}_{1}^{(t)},\bm{\alpha}_{0,1}^{(t)}),
\]
\[
\sqrt{N/4}(\hat{\beta}_{\mathrm{OS,2}}^{(t)}-\beta_{0})=\mathrm{I}_{0}^{-1}b_{0}+\frac{1}{\sqrt{N/4}}\mathrm{I}_{0}^{-1}\sum_{(i,j)\in\mathcal{I}_{2,n}^{(t)}\times\mathcal{I}_{2,n}^{(t)};j>i}s_{ij,0}+\mathcal{R}(\mathbf{y}_{2}^{(t)},\mathbf{x}_{2}^{(t)},\bm{\alpha}_{0,2}^{(t)}),
\]
where $\bm{\alpha}_{0,1}^{(t)}$ is the sub-vector of $\bm{\alpha}_{0}$
indexed by $\mathcal{I}_{1,n}^{(t)}$ and similarly for $\bm{\alpha}_{0,2}^{(t)}$.
Hence, we have
\begin{equation}
\sqrt{N}(\hat{\beta}_{\mathrm{OS-SJ}}^{(t)}-\beta_{0})=\frac{2}{\sqrt{N}}\mathrm{I}_{0}^{-1}\sum_{(i,j)\in\mathcal{I}_{1,n}^{(t)}\times\mathcal{I}_{2,n}^{(t)}}s_{ij,0}+\mathcal{R}^{(t)}(\mathbf{y},\mathbf{x},\bm{\alpha}_{0}),\label{eq:os_sj_t}
\end{equation}
with $\mathcal{R}^{(t)}(\mathbf{y},\mathbf{x},\bm{\alpha}_{0}) \coloneqq 2\mathcal{R}(\mathbf{y},\mathbf{x},\bm{\alpha}_{0})-\left[\mathcal{R}(\mathbf{y}_{1}^{(t)},\mathbf{x}_{1}^{(t)},\bm{\alpha}_{0,1}^{(t)})+\mathcal{R}(\mathbf{y}_{2}^{(t)},\mathbf{x}_{2}^{(t)},\bm{\alpha}_{0,2}^{(t)})\right]$,
which is also $o_{p}(1)$ for each fixed $t$. Because the $T_{n}$ equal
splits are averaged uniformly,
\[
\frac{1}{T_{n}}\sum_{t=1}^{T_{n}}\mathcal{R}^{(t)}(\mathbf{y},\mathbf{x},\bm{\alpha}_{0})
=\mathbb{E}^{*}\!\left[\mathcal{R}^{(1)}(\mathbf{y},\mathbf{x},\bm{\alpha}_{0})\mid\mathcal{F}_{n}\right].
\]
By Jensen's inequality,
\begin{align*}
\left\Vert \frac{1}{T_{n}}\sum_{t=1}^{T_{n}}\mathcal{R}^{(t)}\right\Vert _{2}
\leq & \ 2\left\Vert \mathcal{R}(\mathbf{y},\mathbf{x},\bm{\alpha}_{0})\right\Vert _{2}\\
& +\mathbb{E}^{*}\!\left[\left\Vert \mathcal{R}(\mathbf{y}_{1}^{(1)},\mathbf{x}_{1}^{(1)},\bm{\alpha}_{0,1}^{(1)})\right\Vert _{2}+\left\Vert \mathcal{R}(\mathbf{y}_{2}^{(1)},\mathbf{x}_{2}^{(1)},\bm{\alpha}_{0,2}^{(1)})\right\Vert _{2}\mid\mathcal{F}_{n}\right].
\end{align*}
The full-sample remainder is $o_{p}(1)$ by (\ref{eq:asym_repre_os1}),
and the same remainder calculations as in the proof of Lemma~\ref{lem:one_step_bound}
apply to a generic half-sample because the primitive bounds there depend
only on the uniform constants in the assumptions and on the subsample
size, which is $n/2\to\infty$. Therefore,
\[
\frac{1}{T_{n}}\sum_{t=1}^{T_{n}}\mathcal{R}^{(t)}(\mathbf{y},\mathbf{x},\bm{\alpha}_{0})=o_{p}(1).
\]
Then, taking the average of (\ref{eq:os_sj_t}) over all $1\leq t\leq T_{n}$
yields 
\begin{align}
\sqrt{N}(\hat{\beta}_{T_{n}}-\beta_{0})= & \ \frac{2}{\sqrt{N}}\mathrm{I}_{0}^{-1}\frac{1}{T_{n}}\sum_{t=1}^{T_{n}}\sum_{(i,j)\in\mathcal{I}_{1,n}^{(t)}\times\mathcal{I}_{2,n}^{(t)}}s_{ij,0}+\frac{1}{T_{n}}\sum_{t=1}^{T_{n}}\mathcal{R}^{(t)}(\mathbf{y},\mathbf{x},\bm{\alpha}_{0})\nonumber \\
= & \ \frac{2}{\sqrt{N}}\mathrm{I}_{0}^{-1}\sum_{i=1}^{n}\sum_{j\neq i}\frac{\binom{n-2}{n/2-1}}{\binom{n}{n/2}}s_{ij,0}+o_{p}(1)\nonumber \\
= & \ \frac{1}{\sqrt{N}}\times\frac{n}{n-1}\mathrm{I}_{0}^{-1}\sum_{i=1}^{n}\sum_{j>i}s_{ij,0}+o_{p}(1)\stackrel{d}{\to}\mathcal{N}(0,\mathrm{I}_{0}^{-1}),\label{eq:oracle_normality}
\end{align}
where the second equality holds because for each $(i,j),i\neq j$,
there are $\binom{n-2}{n/2-1}$ different splits containing them. This
proves the first result.

\noindent\textbf{Step (ii). }Conditional on $\mathcal{F}_{n}$, random
draws $\hat{\beta}_{\mathrm{OS-SJ}}^{(t)}$ for $t=1,\dots,\tilde{T}_{n}$
are independent and uniformly distributed over $\{\hat{\beta}_{\mathrm{OS-SJ}}^{(s)}\}_{s=1}^{T_{n}}$.
Thus, $\mathbb{E}^{*}[\hat{\beta}_{\mathrm{OS-SJ}}^{(t)}]=\hat{\beta}_{T_{n}}$
and $\mathbb{E}^{*}[\|\hat{\beta}_{\mathrm{OS-SJ}}^{(t)}-\hat{\beta}_{T_{n}}\|_{2}^{2}]=\frac{1}{T_{n}}\sum_{s=1}^{T_{n}}\|\hat{\beta}_{\mathrm{OS-SJ}}^{(s)}-\hat{\beta}_{T_{n}}\|_{2}^{2} \eqqcolon \sigma_{n}^{2}$.
Note that $\sigma_{n}^{2}$ is $\mathcal{F}_{n}$-measurable and $\sigma_{n}^{2}=O_{p}(N^{-1})$
by (\ref{eq:os_sj_t}) and (\ref{eq:oracle_normality}). By
Markov's inequality, for any $\epsilon>0$ 
\[
\Pr(\|\sqrt{N}(\hat{\beta}_{\mathrm{BG}}-\hat{\beta}_{T_{n}})\|_{2}\geq\epsilon|\mathcal{F}_{n})\leq\frac{N\mathbb{E}^{*}[\|\hat{\beta}_{\mathrm{BG}}-\hat{\beta}_{T_{n}}\|_{2}^{2}]}{\epsilon^{2}}=\frac{N\sigma_{n}^{2}}{\tilde{T}_{n}\epsilon^{2}}.
\]
Hence, $\Pr(\|\sqrt{N}(\hat{\beta}_{\mathrm{BG}}-\hat{\beta}_{T_{n}})\|_{2}\geq\epsilon)\leq\mathbb{E}[\tilde{T}_{n}^{-1}\epsilon^{-2}N\sigma_{n}^{2}]=O(\tilde{T}_{n}^{-1})=o(1)$
as $n\to\infty$ and $\tilde{T}_{n}\to\infty$ for any $\epsilon>0$.
This proves the second result.

Combining Step (i) and Step (ii), we have
\[
\sqrt{N}(\hat{\beta}_{\mathrm{BG}}-\beta_{0})=\sqrt{N}(\hat{\beta}_{\mathrm{BG}}-\hat{\beta}_{T_{n}})+\sqrt{N}(\hat{\beta}_{T_{n}}-\beta_{0})=\sqrt{N}(\hat{\beta}_{T_{n}}-\beta_{0})+o_{p}(1)\stackrel{d}{\to}\mathcal{N}(0,\mathrm{I}_{0}^{-1})
\]
 as $\tilde{T}_{n}\to\infty$ and $n\to\infty$.
\end{proof}

\subsection{Proof of Theorem~\ref{thm:norm_ape}}

\begin{proof}[Proof of Theorem~\ref{thm:norm_ape}]
We decompose  $\hat{\delta}-\delta_{0}=\left(\hat{\delta}-\bar{\Delta}_{n}\right)+\left(\bar{\Delta}_{n}-\delta_{0}\right).$
The second term is a U-statistic:
\[
\bar{\Delta}_{n}-\delta_{0}=N^{-1}\sum_{i=1}^{n}\sum_{j>i}\left[\Delta_{ij}(\alpha_{i0},\alpha_{j0},\beta_{0})-\mathbb{E}\Delta_{ij}(\alpha_{i0},\alpha_{j0},\beta_{0})\right]
\]
 with kernel $\Delta_{ij}(\alpha_{i0},\alpha_{j0},\beta_{0})-\mathbb{E}\Delta_{ij}(\alpha_{i0},\alpha_{j0},\beta_{0})$.
By Theorem 12.3 of \citet{van2000asymptotic}, we have
\begin{equation}
\sqrt{n}(\bar{\Delta}_{n}-\delta_{0})\stackrel{d}{\to}\mathcal{N}(0,4\Sigma_{\delta}),\label{eq:ape_norm_ustat}
\end{equation}
where  $\Sigma_{\delta}=\mathrm{Cov}(\Delta_{ij}(\alpha_{i0},\alpha_{j0},\beta_{0}),\Delta_{ik}(\alpha_{i0},\alpha_{k0},\beta_{0}))$.
Next, for the first term, notice that $\widehat{\bm{\alpha}}\equiv\widehat{\bm{\alpha}}(\hat{\beta})$
and we can decompose it as
\begin{align*}
\text{\ensuremath{\sqrt{N}}}\left(\hat{\delta}-\bar{\Delta}_{n}\right)= & \ \frac{1}{\sqrt{N}}\sum_{i=1}^{n}\sum_{j>i}\left[\Delta_{ij}(\hat{\alpha}_{i}(\hat{\beta}),\hat{\alpha}_{j}(\hat{\beta}),\hat{\beta})-\Delta_{ij}(\alpha_{i0},\alpha_{j0},\beta_{0})\right]\\
= & \ \frac{1}{\sqrt{N}}\sum_{i=1}^{n}\sum_{j>i}\left[\Delta_{ij}(\hat{\alpha}_{i}(\hat{\beta}),\hat{\alpha}_{j}(\hat{\beta}),\hat{\beta})-\Delta_{ij}(\hat{\alpha}_{i}(\beta_{0}),\hat{\alpha}_{j}(\beta_{0}),\beta_{0})\right]\\
 & \ +\frac{1}{\sqrt{N}}\sum_{i=1}^{n}\sum_{j>i}\left[\Delta_{ij}(\hat{\alpha}_{i}(\beta_{0}),\hat{\alpha}_{j}(\beta_{0}),\beta_{0})-\Delta_{ij}(\alpha_{i0},\alpha_{j0},\beta_{0})\right]\\
=: & \ U_{1}+U_{2},
\end{align*}
where $U_{1}$ captures the variation from $\hat{\beta}$ and $U_{2}$
captures the variation from $\widehat{\bm{\alpha}}(\beta_{0}).$

For $U_{1}$, a first-order Taylor expansion around $\beta_{0}$ yields
\begin{align}
U_{1}=\  & \frac{1}{\sqrt{N}}\biggl\{ \sum_{i=1}^{n}\sum_{j>i}\frac{\partial\Delta_{ij}}{\partial\beta^{\top}}(\hat{\alpha}_{i}(\bar{\beta}),\hat{\alpha}_{j}(\bar{\beta}),\bar{\beta})\nonumber\\
 &\qquad +\sum_{i=1}^{n}\sum_{j\neq i}\frac{\partial\Delta_{ij}}{\partial\alpha_{i}}(\hat{\alpha}_{i}(\bar{\beta}),\hat{\alpha}_{j}(\bar{\beta}),\bar{\beta})\frac{\partial\hat{\alpha}_{i}}{\partial\beta^{\top}}(\bar{\beta})\biggr\} (\hat{\beta}-\beta_{0})\nonumber \\
=\  & \bigl\{ \Delta_{\beta}(\widehat{\bm{\alpha}}(\bar{\beta}),\bar{\beta})^{\top}-\Delta_{\bm{\alpha}}(\widehat{\bm{\alpha}}(\bar{\beta}),\bar{\beta})^{\top}\mathbf{J}_{11}(\widehat{\bm{\alpha}}(\bar{\beta}),\bar{\beta})^{-1}\mathbf{J}_{12}(\widehat{\bm{\alpha}}(\bar{\beta}),\bar{\beta})\bigr\}\nonumber\\
 &\qquad \times\sqrt{N}(\hat{\beta}-\beta_{0})\nonumber \\
=\  & (\Delta_{\beta,0}^{\top}-\Delta_{\bm{\alpha},0}^{\top}\mathbf{J}_{11,0}^{-1}\mathbf{J}_{12,0})\sqrt{N}(\hat{\beta}-\beta_{0})+o_{p}(1),\label{eq:u1}
\end{align}
where $\bar{\beta}$ lies in the segment between $\hat{\beta}$ and
$\beta_{0}$ and the last equality uses the fact that $\bar{\beta}\stackrel{p}{\to}\beta_{0}$
and $\left\Vert \widehat{\bm{\alpha}}(\bar{\beta})-\bm{\alpha}_{0}\right\Vert _{\infty}\stackrel{p}{\to}0.$

For $U_{2}$, a third-order Taylor expansion yields the following, with
$\bm{\zeta}=\widehat{\bm{\alpha}}(\beta_{0})-\bm{\alpha}_{0}$ as before:
\begin{align}
U_{2}= & \ \sqrt{N}\Delta_{\bm{\alpha}}^{\top}\bm{\zeta}+\frac{1}{2\sqrt{N}}\sum_{k=1}^{n}\zeta_{k}\sum_{i=1}^{n}\sum_{j>i}\frac{\partial^{2}\Delta_{ij}(\bm{\alpha}_{0},\beta_{0})}{\partial\alpha_{k}\partial\bm{\alpha}^{\top}}\bm{\zeta}\nonumber \\
 & \ +\frac{1}{6\sqrt{N}}\sum_{k=1}^{n}\sum_{l=1}^{n}\zeta_{k}\zeta_{l}\sum_{i=1}^{n}\sum_{j>i}\frac{\partial^{3}\Delta_{ij}(\bar{\bm{\alpha}},\beta_{0})}{\partial\alpha_{k}\partial\alpha_{l}\partial\bm{\alpha}^{\top}}\bm{\zeta}.\label{eq:u2}
\end{align}
Recall that $\mathbf{R}_{k}^{\mu}$ and $\mathbf{G}_{k}^{\mu}$ are defined analogously to $\mathbf{R}_{k}$ and $\mathbf{G}_{k}$ in Appendix~\ref{sec:appendix_defs}.
Similarly to the proof of Theorem~\ref{thm:jmm_alpha_beta_consistency},
we can show that the second term of (\ref{eq:u2}) converges in probability
to a bias term $\lim_{n\to\infty}\frac{1}{2\sqrt{N}}\mathrm{Tr}\left[\mathbf{J}_{11,0}^{-1}\mathbf{V}_{11,0}\left(\mathbf{J}_{11,0}^{-1}\right)^{\top}\mathbf{R}_{k,0}^{\mu}\right]$ in (\ref{eq:b_alpha_b_beta}).
The last term of (\ref{eq:u2}) is $o_{p}(1)$ (equivalent to the
limit of part (IV) of (\ref{thirdorder_ta})).
By Lemma~\ref{yi_bound}\ref{yi-bound-b},
$$\sqrt{N}e_k^\top \Delta_{\bm\alpha,0}^\top\mathbf{J}_{11,0}^{-1}\bm{\eta}_{0}\stackrel{p}{\to} \lim_{n\to\infty}\frac{1}{2\sqrt{N}}\mathrm{Tr}[\mathbf{J}_{11,0}^{-1}\mathbf{V}_{11,0}\mathbf{J}_{11,0}^{-1}\mathbf{G}_{k,0}^\mu].$$
Additionally, from the
proof of Theorem~\ref{thm:jmm_alpha_beta_consistency},
we have
\begin{equation}
\sqrt{N}(\hat{\beta}-\beta_{0})=-\mathrm{J}_{0}^{-1}\big\{ \frac{1}{\sqrt{N}}m_{2,0}-\frac{1}{\sqrt{N}}\mathbf{J}_{21,0}\mathbf{J}_{11,0}^{-1}\mathbf{m}_{1,0}-B_0\big\} +o_{p}(1)\label{eq:b61}.
\end{equation}
By Lemma~\ref{yi_bound}\ref{yi-bound-c},
\begin{equation}
\left\Vert \bm{\zeta}+\mathbf{J}_{11,0}^{-1}\mathbf{m}_{1,0}+\frac{1}{2}\mathbf{J}_{11,0}^{-1}\bm{\eta}_{0}\right\Vert _{\infty}=O_{p}((\log n)^{3/2}/n^{3/2}).\label{eq:b62}
\end{equation}
Substituting (\ref{eq:b61}) and (\ref{eq:b62}) into (\ref{eq:u1})
and (\ref{eq:u2}) respectively, we have
\begin{align*}
 & \ \sqrt{N}(\hat{\delta}-\bar{\Delta}_{n})-B_{\beta}-B_{\alpha}\\
= & \ -(\Delta_{\beta,0}^{\top}-\Delta_{\bm{\alpha},0}^{\top}\mathbf{J}_{11,0}^{-1}\mathbf{J}_{12,0})\mathrm{J}_{0}^{-1}\left\{ \frac{1}{\sqrt{N}}m_{2,0}-\frac{1}{\sqrt{N}}\mathbf{J}_{21,0}\mathbf{J}_{11,0}^{-1}\mathbf{m}_{1,0}\right\} \\
 & \ -\sqrt{N}\Delta_{\bm{\alpha},0}^{\top}\mathbf{J}_{11,0}^{-1}\mathbf{m}_{1,0}+o_{p}(1)\\
= & \ -\frac{1}{\sqrt{N}}(\Delta_{\beta,0}^{\top}-\Delta_{\bm{\alpha},0}^{\top}\mathbf{J}_{11,0}^{-1}\mathbf{J}_{12,0})\mathrm{J}_{0}^{-1}m_{2,0}\\
 & \ +\frac{1}{\sqrt{N}}\left[(\Delta_{\beta,0}^{\top}-\Delta_{\bm{\alpha},0}^{\top}\mathbf{J}_{11,0}^{-1}\mathbf{J}_{12,0})\mathrm{J}_{0}^{-1}\mathbf{J}_{21,0}-N\Delta_{\bm{\alpha},0}^{\top}\right]\mathbf{J}_{11,0}^{-1}\mathbf{m}_{1,0}+o_{p}(1),
\end{align*}
where $B_{\beta}$ is defined in (\ref{eq:b_alpha_b_beta}).

Finally, by the Lindeberg--Feller CLT, we have
\begin{equation}
\sqrt{N}(\hat{\delta}-\bar{\Delta}_{n})-B_{\beta}-B_{\alpha}\stackrel{d}{\to}\mathcal{N}(0,\Sigma_{\Delta}).\label{eq:ape_uncon_norm}
\end{equation}
Combining (\ref{eq:ape_norm_ustat}), (\ref{eq:ape_uncon_norm}),
and the fact that $\hat{\delta}-\bar{\Delta}_{n}$ is uncorrelated
with $\bar{\Delta}_{n}-\delta_{0}$ asymptotically, we have
\[
\left(\frac{\Sigma_{\Delta}}{N}+\frac{4\Sigma_{\delta}}{n}\right)^{-1/2}\left(\hat{\delta}-\delta_{0}-\frac{1}{\sqrt{N}}B_{\beta}-\frac{1}{\sqrt{N}}B_{\alpha}\right)\stackrel{d}{\to}\mathcal{N}(0,I_{K}).
\]

Since the asymptotic normality of the plug-in estimator $\hat{\delta}$
has already been established, the asymptotic normality of $\hat{\delta}_{\mathrm{BG}}$
follows by an argument analogous to the proof of Theorem~\ref{thm:bagging_normality},
and is therefore omitted for brevity.
\end{proof}

%\newpage{}
\putbib
\end{bibunit}

\newpage{}

\onehalfspacing \normalsize

\renewcommand{\theequation}{A.\arabic{equation}}
\renewcommand{\thethm}{A.\arabic{thm}}
\renewcommand{\thelem}{A.\arabic{lem}}
\renewcommand{\theprop}{A.\arabic{prop}}
\renewcommand{\therem}{A.\arabic{rem}}
\setcounter{equation}{0}
\setcounter{thm}{0}
\setcounter{lem}{0}
\setcounter{prop}{0}
\setcounter{rem}{0}
\setcounter{page}{1}

\begin{bibunit}

\appendix
\setcounter{section}{0}

\begin{center}
{\Large\bfseries Supplement to ``Bagging the Network''}

\bigskip

{\large Ming Li \qquad Zhentao Shi \qquad Yapeng Zheng\textsuperscript{\textcolor{black}{\normalsize *}}}

\bigskip

% {\large April 28, 2026}
\end{center}
\begingroup
\renewcommand{\thefootnote}{}%
\footnotetext{\textsuperscript{\textcolor{black}{*}}Li: Department of Economics and Risk Management Institute, National University of Singapore, \texttt{mli@nus.edu.sg}. Shi (corresponding author): Department of Economics, The Chinese University of Hong Kong, \texttt{zhentao.shi@cuhk.edu.hk}. Zheng: Department of Economics, The Chinese University of Hong Kong, \texttt{yapengzheng@link.cuhk.edu.hk}.}
\endgroup
\setcounter{footnote}{0}

\bigskip

\noindent This supplement contains material that complements the main text. Section~\ref{sec:sm_lemma_proofs} provides proofs of the supporting lemmas stated in the appendix and formalizes the multi-network extension of Remark~\ref{rem:multi-networks} (Section~\ref{subsec:sm_multi_networks}). Section~\ref{sec:sm_misspecification} develops the full analysis of link function misspecification, including the identification assumption, formal theorems, and proofs. Section~\ref{sec:sm_simulations} reports extended Monte Carlo simulation results: the non-concavity challenge of direct MLE, extended NTU results on fixed-effect recovery, APE estimation, link-function misspecification, and sparser networks, and a brief summary of analogous TU results. Section~\ref{sec:sm_data} provides detailed data descriptions and summary statistics for both empirical applications.

\section{Proofs of Supporting Lemmas}\label{sec:sm_lemma_proofs}

We write ``$a_n\asymp b_n$'' to denote $a_n=O(b_n)$ and $b_n=O(a_n)$, and use $C_1,C_2,\dots$ for positive finite constants. All probabilities are conditional on $\bm\alpha$ and $\mathbf{x}$ unless stated otherwise.

\subsection{Inverse Approximation and Diagonal Bounds}
We adapt Theorem 1 of \citet{yan2019approximating} to the NTU framework
to analytically approximate the inverse of the Jacobian matrix $\mathbf{J}_{11}(\bm{\alpha},\beta)$
and bound the approximation errors. Similar techniques have been used
to prove asymptotic normality in network estimation problems; see,
for example, \citet{yan2013central}, \citet{graham2017econometric},
and \citet{yan2019statistical}. We prove that $\mathbf{J}_{11}(\bm{\alpha},\beta)$
is non-singular for $n$ large enough and $\mathbf{J}_{11}^{-1}(\bm{\alpha},\beta)$
is well approximated by a diagonal matrix.
\begin{proof}[Proof of Lemma~\ref{lem:appro_inverse}]
Following \citet{yan2019approximating}, let $I_{n}$ be the $n\times n$ identity matrix and define
$\mathbf{F}=(f_{ij})_{n\times n}=\mathbf{A}^{-1}-\mathbf{B}$, $\mathbf{U}=I_{n}-\mathbf{A}\mathbf{B}$,
$\mathbf{W}=\mathbf{B}\mathbf{U}$, which satisfy
$\mathbf{F}=\mathbf{F}\mathbf{U}+\mathbf{W}$.
Direct calculation gives $u_{ij}=(\delta_{ij}-1)a_{ij}/a_{jj}$ and
$w_{ij}=(\delta_{ij}-1)a_{ij}/(a_{ii}a_{jj})$, so that
$\max(|w_{ij}|,|w_{ij}-w_{ik}|)\leq M/[m^{2}(n-1)^{2}]$
for all $i,j,k$.

The recursion $f_{ij}=\sum_{k}f_{ik}(\delta_{kj}-1)a_{kj}/a_{jj}+w_{ij}$
and the identity $1\equiv\sum_{k\neq\theta}a_{k\theta}/a_{\theta\theta}+\Delta_{\theta}/a_{\theta\theta}$
yield, after the algebraic manipulation detailed in equations (13)--(17) of \citet{yan2019approximating} (obtained by maximizing over all cardinalities $\lambda$ of the index set),
\[
f_{i\theta}-f_{i\xi}\leq\frac{M/[m^{2}(n-1)^{2}]}{C(n,m,M)},
\]
where $f_{i\theta}\coloneqq\max_{k}f_{ik}$, $f_{i\xi}\coloneqq\min_{k}f_{ik}$, and
$C(n,m,M)\asymp 1$ when $m/M\asymp 1$.
Moreover, for each fixed $i$,
\[
\sum_{k=1}^{n}f_{ik}a_{ki}=\sum_{k=1}^{n}\left([\mathbf{A}^{-1}]_{ik}-\frac{\delta_{ik}}{a_{ii}}\right)a_{ki}=1-1=0,
\]
so $f_{i\xi}<0<f_{i\theta}$ and therefore $\max_{k}|f_{ik}|\leq f_{i\theta}-f_{i\xi}$.
We thus obtain $\lVert\mathbf{F}\rVert_{\max}=O(n^{-2}).\qedhere$
\end{proof}
Based on Lemma~\ref{lem:appro_inverse}, we prove that $\mathbf{J}_{11}(\bm{\alpha},\beta)$
is non-singular for $(\bm{\alpha},\beta)\in\mathbb{A}\times\mathbb{B}$
and large $n$.
\begin{lem}
\label{lem:nonsingular_J}If Assumptions \ref{assu:model}, \ref{assu:compact_support_and_sampling}, and \ref{assu:ass_f}
hold, for $n$ large enough, the Jacobian matrix
$\mathbf{J}_{11}(\bm{\alpha},\beta)$ is invertible for all $(\bm{\alpha},\beta)\in\mathbb{A}\times\mathbb{B}$.
\end{lem}
\begin{proof}
We partition $\mathbf{J}_{11}(\bm{\alpha},\beta)$ into a block matrix
as 
\[
\mathbf{J}_{11}(\bm{\alpha},\beta)=\begin{pmatrix}\left[\mathbf{J}_{11}(\bm{\alpha},\beta)\right]_{(1:n-1)\times(1:n-1)} & \left[\mathbf{J}_{11}(\bm{\alpha},\beta)\right]_{(1:n-1)\times n}\\
\left[\mathbf{J}_{11}(\bm{\alpha},\beta)\right]_{n\times(1:n-1)} & \left[\mathbf{J}_{11}(\bm{\alpha},\beta)\right]_{nn}
\end{pmatrix},
\]
where the subscript denotes the specific rows/columns that each sub-matrix
includes. Recall that $\left[\mathbf{J}_{11}(\bm{\alpha},\beta)\right]_{ii}=\sum_{j\neq i}\left[\mathbf{J}_{11}(\bm{\alpha},\beta)\right]_{ji}$.
The first sub-matrix $\left[\mathbf{J}_{11}(\bm{\alpha},\beta)\right]_{(1:n-1)\times(1:n-1)}$
is strictly diagonally dominant with all negative entries, hence it
is non-singular. Lemma~\ref{lem:appro_inverse} demonstrates that
its inverse can be approximated by $\mathrm{diag}\left(\left[\mathbf{J}_{11}(\bm{\alpha},\beta)\right]_{11}^{-1},\dots,\left[\mathbf{J}_{11}(\bm{\alpha},\beta)\right]_{n-1n-1}^{-1}\right)$
with maximum entry-wise error of $O(n^{-2})$. Under Assumptions \ref{assu:compact_support_and_sampling}
and \ref{assu:ass_f}, $\left[\mathbf{J}_{11}(\bm{\alpha},\beta)\right]_{ii}\asymp-n,\ \left[\mathbf{J}_{11}(\bm{\alpha},\beta)\right]_{ij}\asymp-1,\ j\neq i$.
Write $\mathbf{J}\coloneqq\mathbf{J}_{11}(\bm{\alpha},\beta)$ and partition it as
$\mathbf{J}=\bigl(\begin{smallmatrix}\mathbf{J}_{1:n-1,1:n-1}&\mathbf{J}_{1:n-1,n}\\\mathbf{J}_{n,1:n-1}&\mathbf{J}_{nn}\end{smallmatrix}\bigr)$.
Then
\begin{align*}
 & \ \mathbf{J}_{n,1:n-1}\mathbf{J}_{1:n-1,1:n-1}^{-1}\mathbf{J}_{1:n-1,n}\\
= & \ \mathbf{J}_{n,1:n-1}\mathrm{diag}\bigl(\mathbf{J}_{11}^{-1},\dots,\mathbf{J}_{n-1\,n-1}^{-1}\bigr)\mathbf{J}_{1:n-1,n}\\
 & \ +O(n^{-2})\times\mathbf{J}_{n,1:n-1}\mathbf{1}\mathbf{1}^{\top}\mathbf{J}_{1:n-1,n}\\
= & \ \sum_{i\neq n}\frac{\mathbf{J}_{ni}\mathbf{J}_{in}}{\sum_{j\neq i}\mathbf{J}_{ji}}
+O(n^{-2})\Bigl(\sum_{i\neq n}\mathbf{J}_{ni}\Bigr)\Bigl(\sum_{i\neq n}\mathbf{J}_{in}\Bigr)=O(1).
\end{align*}
Thus, the Schur complement satisfies
$\mathbf{J}_{nn}-\mathbf{J}_{n,1:n-1}\mathbf{J}_{1:n-1,1:n-1}^{-1}\mathbf{J}_{1:n-1,n}\asymp -n-O(1)\neq0$
for $n$ large enough. By the block-determinant formula,
\[
\mathrm{det}(\mathbf{J})
=\mathrm{det}(\mathbf{J}_{1:n-1,1:n-1})
\times\bigl(\mathbf{J}_{nn}-\mathbf{J}_{n,1:n-1}\mathbf{J}_{1:n-1,1:n-1}^{-1}\mathbf{J}_{1:n-1,n}\bigr)\neq0
\]
for $n$ large enough. Hence, $\mathbf{J}_{11}(\bm{\alpha},\beta)$
is invertible for large $n$.
\end{proof}
For the inverse of $\mathbf{J}_{11}(\bm{\alpha},\beta)$, it is straightforward
to verify that $-\mathbf{J}_{11}(\bm{\alpha},\beta)$ satisfies conditions
in Lemma~\ref{lem:appro_inverse}. Let $\mathbf{T}(\bm{\alpha},\beta)=\left[\mathrm{diag}\left(\mathbf{J}_{11}(\bm{\alpha},\beta)\right)\right]^{-1}$.
Applying Lemma~\ref{lem:appro_inverse} to $-\mathbf{J}_{11}(\bm{\alpha},\beta)$,
we have $\lVert\left[-\mathbf{J}_{11}(\bm{\alpha},\beta)\right]^{-1}+\mathbf{T}(\bm{\alpha},\beta)\lVert_{\mathrm{max}}=O(n^{-2})$
under Assumptions \ref{assu:model}--\ref{assu:ass_f}. All of these results could be applied to $\mathbf{J}_{11}^{\circ}(\hat{\bm{\alpha}},\bm{\alpha})$,
where $\mathbf{T}^{\circ}(\hat{\bm{\alpha}},\bm{\alpha})$ denotes
the diagonal approximation for $\left[\mathbf{J}_{11}^{\circ}(\hat{\bm{\alpha}},\bm{\alpha})\right]^{-1}$. The diagonal approximation technique extends from $\mathbf{J}_{11}$
to the information matrix $\mathbf{I}_{11}$, and combined with dyadic
locality, yields the profiling-weight bounds used in the proof of
Theorem~\ref{thm:os_norm}. 

Let $v_{k}(\bm{\alpha},\beta)\coloneqq\mathbf{I}_{12}(\bm{\alpha},\beta)e_{k}$
denote the $k$th column of $\mathbf{I}_{12}(\bm{\alpha},\beta)$, so
that $w_{k}(\bm{\alpha},\beta)=\mathbf{I}_{11}(\bm{\alpha},\beta)^{-1}v_{k}(\bm{\alpha},\beta)$.

\begin{proof}[Proof of Lemma~\ref{lem:I11_diag_approx}]
By Assumption~\ref{assu:ass_f}, the information kernels satisfy
$[\mathbf{I}_{11}]_{ij}\asymp1$ for $i\neq j$
(cf.\ Appendix~\ref{sec:appendix_defs}),
so $\mathbf{D}_{ii}=[\mathbf{I}_{11}]_{ii}\asymp n$
uniformly. Write $\mathbf{I}_{11}=\mathbf{D}^{1/2}\mathbf{Q}\mathbf{D}^{1/2}$
where $\mathbf{Q}=\mathbf{D}^{-1/2}\mathbf{I}_{11}\mathbf{D}^{-1/2}$.
Then $\mathbf{I}_{11}^{-1}=\mathbf{D}^{-1/2}\mathbf{Q}^{-1}\mathbf{D}^{-1/2}$,
whence
\[
\lVert\mathbf{I}_{11}^{-1}\rVert_{\infty}
\leq\lVert\mathbf{D}^{-1/2}\rVert_{\infty}^{2}\lVert\mathbf{Q}^{-1}\rVert_{\infty}
=O(n^{-1})
\]
by Assumption~\ref{assu:res_order}. For the entrywise bound, note that
$\mathbf{Q}_{ii}=1$ and, for $i\neq j$,
$|\mathbf{Q}_{ij}|=|[\mathbf{I}_{11}]_{ij}|/\sqrt{\mathbf{D}_{ii}\mathbf{D}_{jj}}=O(n^{-1})$.
Using the identity $\mathbf{Q}^{-1}-I_{n}=-(\mathbf{Q}-I_{n})\mathbf{Q}^{-1}$,
\[
\lVert\mathbf{Q}^{-1}-I_{n}\rVert_{\mathrm{max}}
\leq\lVert\mathbf{Q}-I_{n}\rVert_{\mathrm{max}}\cdot\lVert\mathbf{Q}^{-1}\rVert_{1}
=O(n^{-1}).
\]
Consequently,
$\mathbf{I}_{11}^{-1}-\mathbf{D}^{-1}=\mathbf{D}^{-1/2}(\mathbf{Q}^{-1}-I_{n})\mathbf{D}^{-1/2}$,
so
\[
\lVert\mathbf{I}_{11}^{-1}-\mathbf{D}^{-1}\rVert_{\mathrm{max}}
\leq\lVert\mathbf{Q}^{-1}-I_{n}\rVert_{\mathrm{max}}\cdot\max_{i}\mathbf{D}_{ii}^{-1}
=O(n^{-1})\cdot O(n^{-1})=O(n^{-2}).\qedhere
\]
\end{proof}
\begin{lem}[Dyadic locality]\label{lem:dyadic_locality}
If Assumptions~\ref{assu:model}, \ref{assu:compact_support_and_sampling}, and \ref{assu:ass_f} hold, then uniformly over $(\bm{\alpha},\beta)\in\mathbb{A}\times\mathbb{B}$
and all $k\in\{1,\dots,K\}$,
\[
\lVert v_{k}(\bm{\alpha},\beta)\rVert_{\infty}=O(n)\text{ and }
\left\Vert \frac{\partial v_{k}(\bm{\alpha},\beta)}{\partial\beta}\right\Vert _{\infty}=O(n).
\]
Moreover, for every $j\in\{1,\dots,n\}$,
\[
\left|\left[\frac{\partial v_{k}}{\partial\alpha_{j}}\right]_{j}\right|=O(n)\text{ and }
\sup_{i\neq j}\left|\left[\frac{\partial v_{k}}{\partial\alpha_{j}}\right]_{i}\right|=O(1).
\]
Further, for any deterministic vector $u\in\mathbb{R}^{n}$ with
$\lVert u\rVert_{\infty}\leq C$,
\[
\left\Vert \frac{\partial\mathbf{I}_{11}}{\partial\beta}u\right\Vert _{\infty}=O(n),
\]
and, for each $j$,
\[
\left|\left[\frac{\partial\mathbf{I}_{11}}{\partial\alpha_{j}}u\right]_{j}\right|=O(n)
\text{ and }\sup_{i\neq j}\left|\left[\frac{\partial\mathbf{I}_{11}}{\partial\alpha_{j}}u\right]_{i}\right|=O(1).
\]
\end{lem}
\begin{proof}
Since $[v_{k}]_{i}=[\mathbf{I}_{12}]_{ik}=\sum_{l\neq i}\frac{p_{il}^{(1,0,0)}p_{il}^{(0,0,1)}x_{il,k}}{p_{il}(1-p_{il})}$,
Assumption~\ref{assu:ass_f} gives $\lVert v_{k}\rVert_{\infty}=O(n)$.
Differentiating with respect to $\beta$ and using the bound on
third-order derivatives of $p$ yields
$\lVert\partial_{\beta}v_{k}\rVert_{\infty}=O(n)$.

Fix $j$. If $i=j$, every term in $[v_{k}]_{j}=\sum_{l\neq j}(\cdot)$
depends on $\alpha_{j}$, so $|[\partial_{\alpha_{j}}v_{k}]_{j}|\leq\sum_{l\neq j}|\partial_{\alpha_{j}}(\cdot)|=O(n)$.
If $i\neq j$, only the single summand $l=j$ in $[v_{k}]_{i}$ depends
on $\alpha_{j}$, giving $|[\partial_{\alpha_{j}}v_{k}]_{i}|=O(1)$.

For $\partial_{\beta}\mathbf{I}_{11}$, the $i$th component of
$(\partial_{\beta}\mathbf{I}_{11})u$ sums $n$ bounded terms (by
Assumption~\ref{assu:ass_f}) times $\lVert u\rVert_{\infty}$,
yielding $O(n)$. For $\partial_{\alpha_{j}}\mathbf{I}_{11}$, dyadic
locality implies that when $i=j$ both $\partial_{\alpha_{j}}[\mathbf{I}_{11}]_{jj}$
and the off-diagonal derivatives $\partial_{\alpha_{j}}[\mathbf{I}_{11}]_{j\ell}$
for $\ell\neq j$ contribute, giving a total of $O(n)$. When $i\neq j$,
at most two entries ($[\mathbf{I}_{11}]_{ii}$ through the $l=j$ summand,
and $[\mathbf{I}_{11}]_{ij}$) have nonzero $\alpha_{j}$-derivatives,
each bounded by Assumption~\ref{assu:ass_f}, so the contribution is $O(1).\qedhere$
\end{proof}
\begin{proof}[Proof of Lemma~\ref{prop:assu5_primitive}]
Since $w_{k}=\mathbf{I}_{11}^{-1}v_{k}$, Lemmas~\ref{lem:I11_diag_approx}
and~\ref{lem:dyadic_locality} give
$\lVert w_{k}\rVert_{\infty}\leq\lVert\mathbf{I}_{11}^{-1}\rVert_{\infty}\lVert v_{k}\rVert_{\infty}=O(n^{-1})\cdot O(n)=O(1)$.

Differentiating $\mathbf{I}_{11}w_{k}=v_{k}$ with respect to a generic
parameter $\theta\in\{\beta,\alpha_{1},\dots,\alpha_{n}\}$,
\[
\partial_{\theta}w_{k}=\mathbf{I}_{11}^{-1}\underbrace{\left(\partial_{\theta}v_{k}-(\partial_{\theta}\mathbf{I}_{11})w_{k}\right)}_{r_{k,\theta}}.
\]

\textit{Case $\theta=\beta$.} By Lemma~\ref{lem:dyadic_locality}
and $\lVert w_{k}\rVert_{\infty}=O(1)$,
$\lVert r_{k,\beta}\rVert_{\infty}=O(n)$.
Hence $\lVert\partial_{\beta}w_{k}\rVert_{\infty}\leq\lVert\mathbf{I}_{11}^{-1}\rVert_{\infty}\lVert r_{k,\beta}\rVert_{\infty}=O(n^{-1})\cdot O(n)=O(1)$.

\textit{Case $\theta=\alpha_{j}$.} By Lemma~\ref{lem:dyadic_locality}
and $\lVert w_{k}\rVert_{\infty}=O(1)$,
\[
|[r_{k,j}]_{j}|=O(n),\qquad\sup_{i\neq j}|[r_{k,j}]_{i}|=O(1),\qquad\lVert r_{k,j}\rVert_{1}=O(n).
\]
Decomposing via the diagonal approximation in Lemma~\ref{lem:I11_diag_approx},
\[
[\partial_{\alpha_{j}}w_{k}]_{i}
=\mathbf{D}_{ii}^{-1}[r_{k,j}]_{i}
+[(\mathbf{I}_{11}^{-1}-\mathbf{D}^{-1})r_{k,j}]_{i}.
\]
For $i=j$: $\mathbf{D}_{jj}^{-1}|[r_{k,j}]_{j}|+\lVert\mathbf{I}_{11}^{-1}-\mathbf{D}^{-1}\rVert_{\mathrm{max}}\lVert r_{k,j}\rVert_{1}=O(n^{-1})\cdot O(n)+O(n^{-2})\cdot O(n)=O(1)$. For $i\neq j$: $\mathbf{D}_{ii}^{-1}|[r_{k,j}]_{i}|+\lVert\mathbf{I}_{11}^{-1}-\mathbf{D}^{-1}\rVert_{\mathrm{max}}\lVert r_{k,j}\rVert_{1}=O(n^{-1})\cdot O(1)+O(n^{-2})\cdot O(n)=O(n^{-1}).\qedhere$
\end{proof}
\begin{rem}\label{rem:assu5_additive}
Under additive models $p(\alpha_{i},\alpha_{j},x_{ij}^{\top}\beta)=F(\alpha_{i}+\alpha_{j}+x_{ij}^{\top}\beta)$
with $F'>0$, we have $p^{(1,0,0)}=p^{(0,1,0)}=F'$, which gives
$[\mathbf{I}_{11}]_{ij}=[F'(\cdot)]^{2}/[p_{ij}(1-p_{ij})]>0$ for
$i\neq j$ and $[\mathbf{I}_{11}]_{ii}=\sum_{j\neq i}[\mathbf{I}_{11}]_{ji}$. Lemma \ref{lem:appro_inverse} implies that $\|\mathbf{I}_{11}^{-1}-\mathbf{D}^{-1}\|_{\max}=O(n^{-2})$ and in consequence,
$$
\|\mathbf{Q}^{-1}\|_1\leq \|\mathbf{D}^{1/2}\mathbf{D}^{-1}\mathbf{D}^{1/2}\|_1+\|\mathbf{D}^{1/2}(\mathbf{I}_{11}^{-1}-\mathbf{D}^{-1})\mathbf{D}^{1/2}\|_1=O(1).
$$Hence, Assumption~\ref{assu:res_order} is automatically
satisfied for additive logit, probit, and other smooth link functions.
\end{rem}

\subsection{Well-definedness of limiting objects}\label{subsec:sm_limits}

\begin{proof}[Proof of Lemma~\ref{lem:limits_exist}]
Throughout this proof, $p_{ij,0}^{(r_1,r_2,r_3)}= p^{(r_1,r_2,r_3)}(\alpha_{i0},\alpha_{j0},X_{ij}^{\top}\beta_{0})$ per the convention, and $X_{ij}=h(X_i,X_j)$ per Assumption~\ref{assu:iid_sampling}. By Assumptions~\ref{assu:compact_support_and_sampling}--\ref{assu:ass_f}, these quantities are uniformly bounded and $p_{ij,0}^{(1,0,0)}\geq c_{2}>0$.

\smallskip

\noindent\textit{Part (a), existence of $\mathrm{J}_0$ and its invertibility.}
Using the decomposition from the proof of
Theorem~\ref{thm:jmm_alpha_beta_consistency},
\[
N^{-1}\mathrm{J}_{n,0}
=N^{-1}\mathrm{J}_{22,0}
-N^{-1}\mathbf{J}_{21,0}\mathbf{T}\mathbf{J}_{12,0}
-N^{-1}\mathbf{J}_{21,0}(\mathbf{J}_{11,0}^{-1}-\mathbf{T})\mathbf{J}_{12,0},
\]
where $\mathbf{T}\coloneqq[\mathrm{diag}(\mathbf{J}_{11,0})]^{-1}$
satisfies $\lVert\mathbf{J}_{11,0}^{-1}-\mathbf{T}\rVert_{\max}=O(n^{-2})$
by Lemma~\ref{lem:appro_inverse}. We handle the three terms in turn.

\emph{Term 1}: $(N^{-1}\mathrm{J}_{22,0})_{kl}
=-N^{-1}\sum_{i<j}p_{ij,0}^{(0,0,1)}X_{ij,k}X_{ij,l}$
is a U-statistic of order $2$ with bounded symmetric kernel. By the
law of large numbers for U-statistics
\citep[e.g.,][Theorem 12.3]{van2000asymptotic},
\[
(N^{-1}\mathrm{J}_{22,0})_{kl}
\stackrel{p}{\to}-\mathbb{E}\bigl[p^{(0,0,1)}(A_{1},A_{2},h(X_{1},X_{2})^{\top}\beta_{0})\,
h(X_{1},X_{2})_{k}\,h(X_{1},X_{2})_{l}\bigr],
\]
with $(A_1,X_1),(A_2,X_2)\sim\nu$ independent, a finite matrix.

\emph{Term 2}: As shown in the proof of
Theorem~\ref{thm:jmm_alpha_beta_consistency},
$-N^{-1}\mathbf{J}_{21,0}\mathbf{T}\mathbf{J}_{12,0}$ admits the
representation
\[
(-N^{-1}\mathbf{J}_{21,0}\mathbf{T}\mathbf{J}_{12,0})_{kl}
=\frac{2}{n}\sum_{i=1}^{n}\frac{\bar{A}_{i,k,n}\,\bar{B}_{i,l,n}}{\bar{C}_{i,n}},
\]
where $\bar{A}_{i,k,n}\coloneqq(n-1)^{-1}\sum_{j\neq i}p_{ij,0}^{(1,0,0)}X_{ij,k}$,
$\bar{B}_{i,l,n}\coloneqq(n-1)^{-1}\sum_{j\neq i}p_{ij,0}^{(0,0,1)}X_{ij,l}$,
and $\bar{C}_{i,n}\coloneqq(n-1)^{-1}\sum_{j\neq i}p_{ij,0}^{(1,0,0)}$.
Conditional on $(\alpha_{i0},X_{i})$, the quantities
$\{(\alpha_{j0},X_{j})\}_{j\neq i}$ are i.i.d.\ from $\nu$, so by the
conditional LLN
\[
\bar{A}_{i,k,n}\stackrel{p}{\to}
a_{k}(\alpha_{i0},X_{i})\coloneqq
\mathbb{E}\bigl[p^{(1,0,0)}(\alpha_{i0},A',h(X_{i},X')^{\top}\beta_{0})
h(X_{i},X')_{k}\mid\alpha_{i0},X_{i}\bigr],
\]
and similarly $\bar{B}_{i,l,n}\stackrel{p}{\to} b_{l}(\alpha_{i0},X_{i})$
and $\bar{C}_{i,n}\stackrel{p}{\to} c(\alpha_{i0},X_{i})$, with
$c(\alpha_{i0},X_{i})\geq c_{2}>0$ uniformly by
Assumption~\ref{assu:ass_f}. Because all integrands are uniformly
bounded and continuous on the compact
$\mathbb{A}\times\mathbb{X}_0\times\mathbb{A}\times\mathbb{X}_0$, a
uniform law of large numbers (Hoeffding's inequality applied
to the bounded summands combined with a covering argument over the
compact index set) gives the uniform convergence
\[
\max_{1\leq i\leq n}\left|\frac{\bar{A}_{i,k,n}\bar{B}_{i,l,n}}{\bar{C}_{i,n}}
-\frac{a_{k}(\alpha_{i0},X_{i})b_{l}(\alpha_{i0},X_{i})}{c(\alpha_{i0},X_{i})}\right|\stackrel{p}{\to}0.
\]
Combined with the LLN over i.i.d.\ $\{(\alpha_{i0},X_{i})\}$,
\[
(-N^{-1}\mathbf{J}_{21,0}\mathbf{T}\mathbf{J}_{12,0})_{kl}
\stackrel{p}{\to}
2\,\mathbb{E}\!\left[\frac{a_{k}(A,X)b_{l}(A,X)}{c(A,X)}\right],
\]
a finite number since the integrand is bounded above (by
$\sup|a_{k}|\sup|b_{l}|/c_{2}<\infty$).

\emph{Term 3 (residual)}: Write $-\mathbf{J}_{11,0}=\mathbf{D}+\mathbf{O}$
with $\mathbf{D}=\mathrm{diag}(D_{i})$, $D_{i}=\sum_{j\neq i}
p_{ij,0}^{(1,0,0)}\asymp n$, and $\mathbf{O}_{ij}=p_{ij,0}^{(0,1,0)}$
for $i\neq j$, $\mathbf{O}_{ii}=0$. The Neumann expansion yields
$\mathbf{J}_{11,0}^{-1}-\mathbf{T}=\mathbf{D}^{-1}\mathbf{O}\mathbf{D}^{-1}+\mathbf{E}$,
where $\lVert\mathbf{E}\rVert_{\infty}=O(n^{-2})$ by
Lemma~\ref{matrix_inequality}. The $\mathbf{E}$-contribution to
$-N^{-1}\mathbf{J}_{21,0}(\mathbf{J}_{11,0}^{-1}-\mathbf{T})\mathbf{J}_{12,0}$
is $O(n^{-1})=o(1)$ by the same scale accounting as in the proof of
Theorem~\ref{thm:jmm_alpha_beta_consistency}. For the leading term,
using $(\mathbf{J}_{21,0})_{ki}/D_{i}=-\bar{A}_{i,k,n}/\bar{C}_{i,n}$
and $(\mathbf{J}_{12,0})_{jl}/D_{j}=-\bar{B}_{j,l,n}/\bar{C}_{j,n}$,
\begin{align*}
-N^{-1}(\mathbf{J}_{21,0}\mathbf{D}^{-1}\mathbf{O}\mathbf{D}^{-1}\mathbf{J}_{12,0})_{kl}
& = -\frac{1}{N}\sum_{i\neq j}\frac{\bar{A}_{i,k,n}\bar{B}_{j,l,n}}{\bar{C}_{i,n}\bar{C}_{j,n}}\,p_{ij,0}^{(0,1,0)}\\
& \stackrel{p}{\to}-\mathbb{E}\!\left[\frac{a_{k}(A_{1},X_{1})b_{l}(A_{2},X_{2})}{c(A_{1},X_{1})c(A_{2},X_{2})}\,p^{(0,1,0)}_{12}\right],
\end{align*}
a finite number, by the uniform convergence of
$\bar{A},\bar{B},\bar{C}$ and the LLN over i.i.d.\ $\{(\alpha_{i0},X_{i})\}$.
Combining Terms 1--3 establishes the existence of $\mathrm{J}_{0}$
as a finite matrix.
Note from \eqref{eq:def_S_n_bar} and the chain
rule that
$
\nabla_{\beta^{\top}}\bar{S}_{n}(\beta_{0})=N^{-1}\mathrm{J}_{n,0},
$
which is of full rank by Assumption~\ref{assu:mm_iden}. 
Therefore $\mathrm{J}_{0}=\mathrm{plim}_{n\to\infty}N^{-1}\mathrm{J}_{n,0}$ is invertible.

\smallskip

\noindent\textit{Part (a), existence of $\mathrm{I}_0$.}
The proof is identical to that for $\mathrm{J}_{0}$, with
$\mathbf{I}_{11,0},\mathbf{I}_{12,0},\mathrm{I}_{22,0}$ replacing
$-\mathbf{J}_{11,0},-\mathbf{J}_{12,0},-\mathrm{J}_{22,0}$ and
Lemma~\ref{lem:I11_diag_approx} replacing
Lemma~\ref{lem:appro_inverse}. The entrywise scaling and the bounds
in Assumption~\ref{assu:ass_f} carry over unchanged.

\smallskip

\noindent\textit{Part (b), existence of $B_0$ and $b_0$.} The trace expression for $B_{k0}$ in \eqref{eq:bias_B0} is analyzed by substituting $\mathbf{J}_{11,0}^{-1}=\mathbf{T}+(\mathbf{J}_{11,0}^{-1}-\mathbf{T})$ twice and using the $O(n^{-2})$ entrywise residual bound; this reduces the trace to that of $\mathbf{T}\mathbf{V}_{11,0}\mathbf{T}^{\top}(\mathbf{G}_{k,0}+\mathbf{R}_{k,0})$ plus $o(1)$, whose diagonal entries are bounded continuous functions of the node-level averages $\bar{A},\bar{B},\bar{C}$. The scaled trace therefore equals $n^{-1}\sum_{i=1}^{n}\phi_{k}(\alpha_{i0},X_{i})+o(1)$ for a bounded continuous $\phi_{k}$, converging to $\mathbb{E}[\phi_{k}(A,X)]$ by the LLN. The argument for $b_{k0}$ is identical, using the bounds on $\mathbf{W}_{k,0}$ from Lemma~\ref{prop:assu5_primitive}.

\smallskip

\noindent\textit{Part (c), positive definiteness of $\Omega_0$ and $\mathrm{I}_0$.} We prove $\mathrm{I}_{0}\succ 0$; the argument for $\Omega_{0}$ is analogous. Suppose $v^{\top}\mathrm{I}_{0}v=0$ for some nonzero $v\in\mathbb{R}^{K}$. By the Frisch--Waugh--Lovell residualization from the discussion following Assumption~\ref{assu:res_order},
\[
v^{\top}\mathrm{I}_{0}v=\lim_{n\to\infty}\frac{1}{N}\sum_{i<j}p_{ij,0}(1-p_{ij,0})(v^{\top}x_{ij}-w_{i,v}-w_{j,v})^{2}=0,
\]
where $w_{i,v}$ are the profiling weights for $v^{\top}\beta$. Since $p_{ij,0}(1-p_{ij,0})\geq c_{1}(1-c_{1})>0$ by Assumption~\ref{assu:ass_f}, this forces $v^{\top}x_{ij}=w_{i,v}+w_{j,v}$ in $L^{2}(\mu_{2})$, i.e., $v^{\top}x_{ij}$ is additively separable in the limit, contradicting Assumption~\ref{assu:mm_iden} since a non-degenerate covariate distribution guarantees variation in $\bar{S}_{n}(\beta_{0}+tv)$ orthogonal to additive node components. Hence $v=0$.
\end{proof}

\subsection{Convergence Bounds}

\begin{proof}[Proof of Lemma~\ref{matrix_inequality}]
This is equivalent to proving $\Vert\mathbf{B}^{\top}\mathbf{A}^{\top}\Vert_{\infty}\leq1-\frac{2(n-2)}{n-1}\delta^{2},$
which is a direct application of Lemma 2.1 of \citet{chatterjee2011random}.
\end{proof}

\subsection{Deviation Bounds}

Lemma~\ref{lem:dev_bound} provides
a bound on the deviation of the weighted sum of centered Bernoulli
random variables, $\sum_{j\neq i}\lambda_{ij}(y_{ij}-p_{ij,0})$, and
is used extensively in the proofs.

Let $\left\{ \l_{ij}\right\} _{i,j=1}^{n}$ denote a sequence of bounded
constants that satisfy $\max_{i,j}|\lambda_{ij}|<C_{1}$. In what follows, we apply the mean value theorem for vector-valued
functions in its integral form, as in \citet{chatterjee2011random}.
For example,
\begin{align*}
\mathbf{m}_{1}(\widehat{\bm{\alpha}},\beta)-\mathbf{m}_{1}(\bm{\alpha},\beta)
&=\left[\int_{0}^{1}\mathbf{J}_{11}(\bm{\alpha}+t(\widehat{\bm{\alpha}}-\bm{\alpha}),\beta)\,dt\right](\widehat{\bm{\alpha}}-\bm{\alpha})\\
&=:\mathbf{J}_{11}^{\circ}(\widehat{\bm{\alpha}},\bm{\alpha};\beta)(\widehat{\bm{\alpha}}-\bm{\alpha}).
\end{align*}
We write $\mathbf{J}_{11}^{\circ}(\widehat{\bm{\alpha}},\bm{\alpha};\beta)$
as $\mathbf{J}_{11}^{\circ}(\widehat{\bm{\alpha}},\bm{\alpha})$ whenever
there is no confusion, and other integral form Jacobian matrices are
defined similarly.
\begin{proof}[Proof of Lemma~\ref{lem:dev_bound}]
First, notice that $|\lambda_{ij}(y_{ij}-p_{ij,0})|<2C_{1}$ because
$y_{ij}-p_{ij,0}\in(-1,1)$; in addition, $y_{ij}$'s are independent
Bernoulli random variables with expectations $p_{ij,0}$. By Hoeffding's
inequality (see Theorem 2.8 of \citealp{boucheron2013concentration})
for the sum of bounded and independent random variables, we have 
\[
\Pr\left(\frac{1}{n-1}\left|\sum_{j\neq i}\lambda_{ij}(y_{ij}-p_{ij,0})\right|>t\right)\leq2\exp\left(-\frac{(n-1)t^{2}}{2C_{1}^{2}}\right).
\]
Letting $t=C_{1}\sqrt{6(n-1)^{-1}\log n}$, we obtain
\[
\Pr\left(\frac{1}{n-1}\left|\sum_{j\neq i}\lambda_{ij}(y_{ij}-p_{ij,0})\right|>C_{1}\sqrt{\frac{6\log n}{n-1}}\right)\leq2n^{-\frac{3(n-1)}{n-1}}=2n^{-3}.
\]
By Boole's inequality,
\[
\Pr\left(\max_{1\leq i\leq n}\frac{1}{n-1}\left|\sum_{j\neq i}\lambda_{ij}(y_{ij}-p_{ij,0})\right|>C_{1}\sqrt{\frac{6\log n}{n-1}}\right)\leq n\cdot2n^{-3}=2n^{-2}.\qedhere
\]
\end{proof}
Using Lemma~\ref{lem:dev_bound}, we can bound the estimation error
of $\widehat{\bm{\alpha}}(\beta_{0})-\bm{\alpha}_{0}$, which guarantees
that our moment estimator is consistent for $\bm{\alpha}_{0}$ when
$\beta_{0}$ is known. This result can be strengthened to prove the
second part of Theorem~\ref{thm:jmm_alpha_beta_consistency},
which we do in Appendix~\ref{sec:proofs_main}.
\begin{lem}
\label{dev_bound2} If Assumptions \ref{assu:model}, \ref{assu:compact_support_and_sampling}, and \ref{assu:ass_f}
hold, then we have
\begin{equation*}
\Pr\left\{ \left|\frac{1}{N}\sum_{i=1}^{n}\sum_{j>i}\lambda_{ij}(y_{ij}-p_{ij,0})\right|>C_{1}\sqrt{\frac{2\log N}{N}}\right\} \leq\frac{4}{n(n-1)}.
\end{equation*}
\end{lem}
\begin{proof}
Similar to the proof of Lemma~\ref{lem:dev_bound}, by Hoeffding's
inequality, we have 
\[
\Pr\left(\frac{1}{N}\left|\sum_{i=1}^{n}\sum_{j>i}\lambda_{ij}(y_{ij}-p_{ij,0})\right|>t\right)\leq2\exp\left(-\frac{Nt^{2}}{2C_{1}^{2}}\right).
\]
Letting $t=C_{1}\sqrt{\frac{2\log N}{N}}$, we obtain 
\[
\Pr\left(\frac{1}{N}\left|\sum_{i=1}^{n}\sum_{j>i}\lambda_{ij}(y_{ij}-p_{ij,0})\right|>C_{1}\sqrt{\frac{2\log N}{N}}\right)\leq2N^{-1}=\frac{4}{n(n-1)}.\qedhere
\]
\end{proof}
\begin{proof}[Proof of Lemma~\ref{yi_bound}]
\noindent\textbf{Part \ref{yi-bound-a}.} The rest of the proof is conditional on the following event, which happens
with probability at least $1-2n^{-2}$ by Lemma~\ref{lem:dev_bound}:
\[
\mathcal{E}_{n} \coloneqq \left\{ \max_{1\leq i\leq n}\frac{1}{n-1}\left|\sum_{j\neq i}(y_{ij}-p_{ij,0})\right|\leq\sqrt{\frac{6\log n}{n-1}}=O\left(\sqrt{\frac{\log n}{n}}\right)\right\} .
\]
Since $\mathbf{m}_{1}(\widehat{\bm\alpha}(\beta_{0}),\beta_{0})=0$ by definition, a first-order Taylor expansion of $\mathbf{m}_{1}$ around $\bm\alpha_{0}$ gives $\widehat{\bm\alpha}(\beta_{0}) - \bm\alpha_{0} = -\tilde{\mathbf{J}}^{-1}\mathbf{m}_{1,0}$, where $\tilde{\mathbf{J}}\coloneqq\mathbf{J}_{11}^{\circ}(\widehat{\bm\alpha}(\beta_{0}),\bm\alpha_{0})$ and its diagonal approximation (per Lemma~\ref{lem:appro_inverse}) is $\tilde{\mathbf{T}}\coloneqq\mathbf{T}^{\circ}(\widehat{\bm\alpha}(\beta_{0}),\bm\alpha_{0})$. By Lemma~\ref{lem:appro_inverse} and the triangle inequality,
\begin{align*}
\Vert\widehat{\bm{\alpha}}(\beta_{0})-\bm{\alpha}_{0}\Vert_{\infty}
&= \Vert\tilde{\mathbf{J}}^{-1}\mathbf{m}_{1,0}\Vert_{\infty}\\
&\leq \Vert\tilde{\mathbf{T}}\Vert_{\infty}\Vert\mathbf{m}_{1,0}\Vert_{\infty}
+\Vert\tilde{\mathbf{J}}^{-1}-\tilde{\mathbf{T}}\Vert_{\infty}\Vert\mathbf{m}_{1,0}\Vert_{\infty}.
\end{align*}
For the first part, $\tilde{\mathbf{T}}$
is diagonal with entries $O(n^{-1})$
uniformly; by Lemma~\ref{lem:dev_bound},
\[
\Vert\tilde{\mathbf{T}}\Vert_{\infty}\Vert\mathbf{m}_{1,0}\Vert_{\infty}
=O(n^{-1})\cdot\max_{1\leq i\leq n}\left|\textstyle\sum_{j\neq i}(y_{ij}-p_{ij,0})\right|
=O\!\left(\sqrt{\log n/n}\right).
\]
For the second part, by Lemma~\ref{lem:appro_inverse},
$\Vert\tilde{\mathbf{J}}^{-1}-\tilde{\mathbf{T}}\Vert_{\infty}
\leq n\Vert\tilde{\mathbf{J}}^{-1}-\tilde{\mathbf{T}}\Vert_{\max}=O(n^{-1})$,
so
\[
\Vert\tilde{\mathbf{J}}^{-1}-\tilde{\mathbf{T}}\Vert_{\infty}\Vert\mathbf{m}_{1,0}\Vert_{\infty}
=O\!\left(\sqrt{\log n/n}\right).
\]
 Combining these two results, we have 
$
\Vert\widehat{\bm{\alpha}}(\beta_{0})-\bm{\alpha}_{0}\Vert_{\infty}=O\left(\sqrt{\log{n}/n}\right).
$

  \noindent\textbf{Parts \ref{yi-bound-b} and \ref{yi-bound-c}.} By the
third-order Taylor expansion, which is also used in the proof of
Lemma 6 of \citet{graham2017econometric}, we have
\begin{align}
&\mathbf{m}_{1}(\widehat{\bm{\alpha}}(\beta_{0}),\beta_{0})-\mathbf{m}_{1,0}\nonumber\\
=\ &  \mathbf{J}_{11,0}[\widehat{\bm{\alpha}}(\beta_{0})-\bm{\alpha}_{0}]
  +\frac{1}{2}\left[\sum_{k=1}^{n}(\hat{\alpha}_{k}(\beta_{0})-\alpha_{k0})\frac{\partial\mathbf{J}_{11}(\bm{\alpha}_{0},\beta_{0})}{\partial\alpha_{k}}\right][\widehat{\bm{\alpha}}(\beta_{0})-\bm{\alpha}_{0}]
\nonumber \\
 & +\frac{1}{6}\left[\sum_{k=1}^{n}\sum_{l=1}^{n}(\hat{\alpha}_{k}(\beta_{0})-\alpha_{k0})(\hat{\alpha}_{l}(\beta_{0})-\alpha_{l0})\frac{\partial^{2}\mathbf{J}_{11}(\bar{\bm{\alpha}}^{kl},\beta_{0})}{\partial\alpha_{k}\partial\alpha_{l}}\right][\widehat{\bm{\alpha}}(\beta_{0})-\bm{\alpha}_{0}],
\label{eq:alpha_ta}
\end{align}
where for each $(k,l)$, $\bar{\bm{\alpha}}^{kl}=\bm{\alpha}_{0}+s_{kl}(\widehat{\bm{\alpha}}(\beta_{0})-\bm{\alpha}_{0})$
for some $s_{kl}\in(0,1)$. 

Recall the definition of $\bm{\eta}_0$ and define the $n\times1$ vector
\begin{equation}
\bm{\rho}_{0}\coloneqq\left[\sum_{k=1}^{n}\sum_{l=1}^{n}(\hat{\alpha}_{k}(\beta_{0})-\alpha_{k0})(\hat{\alpha}_{l}(\beta_{0})-\alpha_{l0})\frac{\partial^{2}\mathbf{J}_{11}(\bar{\bm{\alpha}}^{kl},\beta_{0})}{\partial\alpha_{k}\partial\alpha_{l}}\right][\widehat{\bm{\alpha}}(\beta_{0})-\bm{\alpha}_{0}].\label{eq:rho0_def}
\end{equation}
Let $\delta_i\coloneqq\hat{\alpha}_i(\beta_0)-\alpha_{i0}$, by part \ref{yi-bound-a}, we have $\sup_{i}|\delta_i|=O_p(\sqrt{\log n/n})$.
Because only the $k$th row and the $k$th
column of $\mathbf{J}_{11}(\bm{\alpha},\beta)$ contain functions
of $\alpha_{k}$, by a direct calculation we write the entries of
$\Lambda_{k} \coloneqq \frac{\partial\mathbf{J}_{11}(\bm{\alpha}_{0},\beta_{0})}{\partial\alpha_{k}}$
as 
\[
(\Lambda_{k})_{kl}=-p_{kl,0}^{(1,1,0)}\ (l\neq k),\quad (\Lambda_{k})_{lk}=-p_{kl,0}^{(2,0,0)}\ (l\neq k),\quad (\Lambda_{k})_{kk}=-\textstyle\sum_{p\neq k}p_{kp,0}^{(2,0,0)},
\]
with all other entries equal to zero.
Hence, let $\Lambda=\sum_{k=1}^{n}(\hat{\alpha}_{k}(\beta_{0})-\alpha_{k0})\frac{\partial\mathbf{J}_{11}(\bm{\alpha}_{0},\beta_{0})}{\partial\alpha_{k}}$
whose entries are 
\[
\begin{aligned}\Lambda_{ij} & =-(\hat{\alpha}_{i}(\beta_{0})-\alpha_{i0})p_{ij,0}^{(1,1,0)}-(\hat{\alpha}_{j}(\beta_{0})-\alpha_{j0})p_{ji,0}^{(2,0,0)},\ i\neq j,\\
\Lambda_{ii} & =-(\hat{\alpha}_{i}(\beta_{0})-\alpha_{i0})\sum_{j\neq i}p_{ij,0}^{(2,0,0)}-\sum_{k\neq i}(\hat{\alpha}_{k}(\beta_{0})-\alpha_{k0})p_{ik,0}^{(1,1,0)}.
\end{aligned}
\]
Then, the $i$th element of $\bm\eta_{0}$ can be calculated as
\[
\begin{aligned}\eta_{i,0}= & \ \Lambda_{ii}\cdot(\hat{\alpha}_{i}(\beta_{0})-\alpha_{i0})+\sum_{j\neq i}\Lambda_{ij}\cdot(\hat{\alpha}_{j}(\beta_{0})-\alpha_{j0})\\
= & \ -\sum_{j\neq i}p_{ij,0}^{(2,0,0)}(\hat{\alpha}_{i}(\beta_{0})-\alpha_{i0})^{2}\\
 & \ \ -2\sum_{j\neq i}p_{ij,0}^{(1,1,0)}(\hat{\alpha}_{i}(\beta_{0})-\alpha_{i0})(\hat{\alpha}_{j}(\beta_{0})-\alpha_{j0})\\
 & \ \ -\sum_{j\neq i}p_{ji,0}^{(2,0,0)}(\hat{\alpha}_{j}(\beta_{0})-\alpha_{j0})^{2}.
\end{aligned}
\]
It is clear that $\|\bm\eta_{0}\|_\infty=O_p(\log n/n)$ by $\|\hat{\bm\alpha}(\beta_0)-\bm\alpha_0\|_\infty=O_p(\sqrt{\log n/n})$.
Recall the symmetric matrix $\mathbf{M}_{i,n}$ defined in (\ref{eq:M_in}).
Then, we have $\eta_{i,0}=\bm\delta^\top\mathbf{M}_{i,n,0}\bm\delta.$
Recall from Appendix~\ref{sec:appendix_defs} that $\varphi_{k,0}^\top = e_k^\top\mathbf{J}_{21,0}\mathbf{J}_{11,0}^{-1}$ and $\mathbf{G}_{k,0}=\sum_{i=1}^n \varphi_{k,i,0}\mathbf{M}_{i,n,0}$.
Hence, $e_k^\top\mathbf{J}_{21,0}\mathbf{J}_{11,0}^{-1}\bm\eta_{0}
=\varphi_{k,0}^\top\bm\eta_{0}
=\sum_{i=1}^n \varphi_{k,i,0}\,\eta_{i,0}
=\bm\delta^\top\mathbf{G}_{k,0}\bm\delta,$ where $e_{k}\in\mathbb{R}^{K}$ is the $k$th canonical basis vector.
Substituting $\bm\delta=-\mathbf{J}_{11,0}^{-1}\mathbf{m}_{1,0}+\mathbf{r}_n$ with $\|\mathbf{r}_n\|_\infty=O_p(\log n/n)$, we have
\begin{align*}
\frac{1}{\sqrt{N}}e_k^\top\mathbf{J}_{21,0}\mathbf{J}_{11,0}^{-1}\bm\eta_{0}
&=\frac{1}{\sqrt{N}}\mathbf{m}_{1,0}^{\top}(\mathbf{J}_{11,0}^{-1})^\top\mathbf{G}_{k,0}\mathbf{J}_{11,0}^{-1}\mathbf{m}_{1,0} \\
&\quad -\frac{2}{\sqrt{N}}\mathbf{m}_{1,0}^{\top}(\mathbf{J}_{11,0}^{-1})^\top\mathbf{G}_{k,0}\mathbf{r}_n
+\frac{1}{\sqrt{N}}\mathbf{r}_n^\top\mathbf{G}_{k,0}\mathbf{r}_n.
\end{align*}
By the definition of $\mathbf{G}_{k}$ in Appendix~\ref{sec:appendix_defs} and bounded second-order derivatives, $\mathbf{G}_{k,0}$ has
uniformly bounded row sums, so the last two terms are $o_p(1)$. Therefore,
\begin{align*}
\frac{1}{2\sqrt{N}}e_k^\top\mathbf{J}_{21,0}\mathbf{J}_{11,0}^{-1}\bm\eta_{0}
&=\frac{1}{2\sqrt{N}}\mathbf{m}_{1,0}^{\top}(\mathbf{J}_{11,0}^{-1})^\top\mathbf{G}_{k,0}\mathbf{J}_{11,0}^{-1}\mathbf{m}_{1,0}+o_p(1)\\
&\stackrel{p}{\to} \lim_{n\to\infty}\frac{1}{2\sqrt{N}}\mathrm{Tr}[\mathbf{J}_{11,0}^{-1}\mathbf{V}_{11,0}(\mathbf{J}_{11,0}^{-1})^\top\mathbf{G}_{k,0}]=B_{k0}^{(1)}.
\end{align*}

Next, uniformly for $i\in\mathcal{I}_n$, we have:
\begin{align*}
|\rho_{i,0}|
&\leq C\sum_{j\neq i}\bigl(|\delta_i|^3+|\delta_i|^2|\delta_j|+|\delta_i||\delta_j|^2+|\delta_j|^3\bigr)\\
&\leq 8C(n-1)(\sup_i|\delta_i|)^{3}
=O\!\left((\log n)^{3/2}/\sqrt{n}\right),
\end{align*}
by bounded third-order derivatives.
Hence,
\[
\Vert\mathbf{J}_{11,0}^{-1}\bm{\rho}_{0}\Vert_{\infty}
\leq\Vert\mathbf{J}_{11,0}^{-1}\Vert_{\infty}\Vert\bm{\rho}_{0}\Vert_{\infty}
=O\left(\frac{1}{n}\right)\cdot O_p\left(\frac{(\log n)^{3/2}}{\sqrt{n}}\right)
=O_p\left(\frac{(\log n)^{3/2}}{n^{3/2}}\right).
\]
Finally, by (\ref{eq:alpha_ta}), we have
\[
\left\Vert \widehat{\bm{\alpha}}(\beta_{0})-\bm{\alpha}_{0}+\mathbf{J}_{11,0}^{-1}\mathbf{m}_{1,0}+\frac{1}{2}\mathbf{J}_{11,0}^{-1}\bm{\eta}_{0}\right\Vert _{\infty}
=\left\Vert \frac{1}{6}\mathbf{J}_{11,0}^{-1}\bm{\rho}_{0}\right\Vert_{\infty}
=O_p\left(\frac{(\log n)^{3/2}}{n^{3/2}}\right).\qedhere
\]
\end{proof}

\subsection{Uniform Convergence of the Concentrated Moment Equation}

Recall the concentrated moment equation and its population counterpart:
\[
S_{n}(\beta) \coloneqq N^{-1}m_{2}(\widehat{\bm{\alpha}}(\beta),\beta)\text{ and }\bar{S}_{n}(\beta) \coloneqq N^{-1}\mathbb{E}[m_{2}(\bm{\alpha}(\beta),\beta)|\mathbf{x},\bm{\alpha}_{0}],
\]
where $\bm{\alpha}(\beta)$ is the unique solution to
$\mathbb{E}[\mathbf{m}_{1}(\bm{\alpha},\beta)|\mathbf{x},\bm{\alpha}_{0}]=0.$

\begin{proof}[Proof of Lemma~\ref{lem:bound_sn_s}]
By the mean-value theorem, we have
$
\widehat{\bm{\alpha}}(\beta)-\bm{\alpha}(\beta)=-[\mathbf{J}_{11}^{\circ}(\widehat{\bm{\alpha}}(\beta),\bm{\alpha}(\beta))]^{-1}\mathbf{m}_{1,0}.
$
Decomposing $S_{n}(\beta)-\bar{S}_{n}(\beta)$ via the diagonal approximation
$\mathbf{T}^{\circ}=[\mathrm{diag}(\mathbf{J}_{11}^{\circ})]^{-1}$
from Lemma~\ref{lem:appro_inverse} gives
\begin{align*}
S_{n}(\beta)-\bar{S}_{n}(\beta)
=\  &\underbrace{N^{-1}\textstyle\sum_{i}\sum_{j>i}(y_{ij}-p_{ij,0})x_{ij}}_{R_{1}}-\underbrace{N^{-1}\mathbf{J}_{21}^{\circ}\mathbf{T}^{\circ}\mathbf{m}_{1,0}}_{R_{2}}\\
 &+\underbrace{N^{-1}\mathbf{J}_{21}^{\circ}[\mathbf{T}^{\circ}-(\mathbf{J}_{11}^{\circ})^{-1}]\mathbf{m}_{1,0}}_{R_{3}}.
\end{align*}
By Lemmas~\ref{dev_bound2} and~\ref{lem:dev_bound},
$\lVert R_{1}\rVert_{\infty}=O_{p}(\sqrt{(\log N)/N})$.
Since $\mathbf{T}^{\circ}$ is diagonal with entries $O(n^{-1})$ and
$\lVert\mathbf{J}_{21}^{\circ}\rVert_{\max}=O(n)$,
$\lVert R_{2}\rVert_{\infty}=O_{p}(\sqrt{(\log n)/n})$.
For $R_{3}$, using $\lVert\mathbf{T}^{\circ}-(\mathbf{J}_{11}^{\circ})^{-1}\rVert_{\max}=O(n^{-2})$,
$\lVert R_{3}\rVert_{\infty}=O_{p}(\sqrt{(\log n)/n})$.
All bounds hold uniformly in $\beta\in\mathbb{B}$.
\end{proof}

\subsection{One-Step Score Bounds}

Let $s_{n}(\bm{\alpha},\beta)=s_{2}(\bm{\alpha},\beta)-\sum_{i=1}^{n}s_{1i}(\bm{\alpha},\beta)w_{i}(\bm{\alpha},\beta)$,
where $w_{i}(\bm{\alpha},\beta)$ is the $i$th column of $\mathbf{I}_{12}(\bm{\alpha},\beta)^{\top}\mathbf{I}_{11}(\bm{\alpha},\beta)^{-1}$.
Taking derivatives, we have
\begin{align*}
\nabla_{\bm{\alpha}^{\top}}s_{n}&=\mathbf{H}_{12}^{\top}-\mathbf{I}_{12}^{\top}\mathbf{I}_{11}^{-1}\mathbf{H}_{11}-\textstyle\sum_{i=1}^{n}s_{1i}\,\partial_{\bm{\alpha}^{\top}} w_{i}\text{ and }\\
\nabla_{\beta^{\top}}s_{n}&=\mathrm{H}_{22}-\mathbf{I}_{12}^{\top}\mathbf{I}_{11}^{-1}\mathbf{H}_{12}-\textstyle\sum_{i=1}^{n}s_{1i}\,\partial_{\beta^{\top}} w_{i},
\end{align*}
where all matrices are evaluated at $(\bm{\alpha},\beta)$.

\begin{proof}[Proof of Lemma~\ref{lem:one_step_bound}]
Each row of $\mathbf{H}_{12,0}+\mathbf{I}_{12,0}$ is a weighted sum of
$(y_{ij}-p_{ij,0})$ terms, so by Lemma~\ref{yi_bound} and the continuous mapping theorem (CMT),
\[
N^{-1/2}\left\Vert \mathbf{H}_{12}(\widehat{\bm{\alpha}}(\beta_{0}),\beta_{0})
+\mathbf{I}_{12}(\widehat{\bm{\alpha}}(\beta_{0}),\beta_{0})\right\Vert _{\max}=o_{p}(1).
\]
Similarly,
\[
N^{-1/2}\left\Vert \mathbf{I}_{12}(\widehat{\bm{\alpha}}(\beta_{0}),\beta_{0})^{\top}
\mathbf{I}_{11}(\widehat{\bm{\alpha}}(\beta_{0}),\beta_{0})^{-1}
\left[\mathbf{H}_{11}(\widehat{\bm{\alpha}}(\beta_{0}),\beta_{0})
+\mathbf{I}_{11}(\widehat{\bm{\alpha}}(\beta_{0}),\beta_{0})\right]\right\Vert_{\max}=o_{p}(1).
\]
Combining these two bounds,
\begin{equation}
N^{-1/2}\lVert\mathbf{H}_{12}(\widehat{\bm{\alpha}}(\beta_{0}),\beta_{0})^{\top}
-\mathbf{I}_{12}^{\top}\mathbf{I}_{11}^{-1}\mathbf{H}_{11}\rVert_{\max}=o_{p}(1),
\label{eq:bound_h_i}
\end{equation}
evaluated at $(\widehat{\bm{\alpha}}(\beta_{0}),\beta_{0})$.
From the exact mean-value representation in the proof of Lemma~\ref{yi_bound},
$
\widehat{\bm{\alpha}}(\beta_{0})-\bm{\alpha}_{0}
=-\left[\mathbf{J}_{11}^{\circ}(\widehat{\bm{\alpha}}(\beta_{0}),\bm{\alpha}_{0})\right]^{-1}\mathbf{m}_{1,0}.
$
Lemma~\ref{lem:appro_inverse} implies that the diagonal entries of
$\left[\mathbf{J}_{11}^{\circ}(\widehat{\bm{\alpha}}(\beta_{0}),\bm{\alpha}_{0})\right]^{-1}$
are $O_p(n^{-1})$ and the off-diagonal entries are $O(n^{-2})$. Hence,
for any deterministic unit vector $\mathbf{c}$, $\mathbf{c}^{\top}(\widehat{\bm{\alpha}}(\beta_{0})-\bm{\alpha}_{0})$
is a weighted sum of conditionally independent centered dyad shocks
with total squared weight $O(n^{-1})$, so $\sqrt{n}\mathbf{c}^{\top}(\widehat{\bm{\alpha}}(\beta_{0})-\bm{\alpha}_{0})=O_{p}(1)$.
Combining this with (\ref{eq:bound_h_i}) yields
\begin{equation}
\frac{1}{\sqrt{N}}[\mathbf{H}_{12}^{\top}-\mathbf{I}_{12}^{\top}\mathbf{I}_{11}^{-1}\mathbf{H}_{11}](\widehat{\bm{\alpha}}(\beta_{0})-\bm{\alpha}_{0})=o_{p}(1).\label{eq:h_i_alpha}
\end{equation}
Next, similarly to the process of finding the bias term in the proof
of Theorem~\ref{thm:jmm_alpha_beta_consistency}, we have
\begin{align*}
 & \ -\frac{1}{\sqrt{N}}\sum_{i=1}^{n}s_{1i}(\widehat{\bm{\alpha}}(\beta_{0}),\beta_{0})\frac{\partial w_{ki}(\widehat{\bm{\alpha}}(\beta_{0}),\beta_{0})}{\partial\bm{\alpha}^{\top}}(\widehat{\bm{\alpha}}(\beta_{0})-\bm{\alpha}_{0})\\
= & \ \frac{1}{\sqrt{N}}\mathbf{s}_{1,0}^{\top}\mathbf{W}_{k,0}\mathbf{J}_{11,0}^{-1}\mathbf{m}_{1,0}+o_{p}(1),
\end{align*}
where $[\mathbf{W}_{k}(\bm{\alpha},\beta)]_{ij}=\frac{\partial w_{ki}(\bm{\alpha},\beta)}{\partial\alpha_{j}}$
and the equality holds by $\lVert\widehat{\bm{\alpha}}(\beta_{0})-\bm{\alpha}_{0}\Vert_{\infty}=o_{p}(1)$
and the CMT.

The asymptotic bias $b_{0} \coloneqq (b_{10},\dots,b_{K0})^{\top}$ for the
one-step estimator is defined as
\begin{equation}
b_{k0}=\lim_{n\to\infty}\frac{1}{\sqrt{N}}\mathrm{Tr}[\mathbf{J}_{11,0}^{-1}\text{Cov}(\mathbf{m}_{1,0},\mathbf{s}_{1,0})\mathbf{W}_{k,0}],\quad k=1,\dots,K,
\end{equation}
where the entries of the $n\times n$ covariance matrix $\mathrm{Cov}(\mathbf{m}_{1,0},\mathbf{s}_{1,0})$
are
\begin{align}
[\mathrm{Cov(\mathbf{m}_{1,0},\mathbf{s}_{1,0})}]_{ij}= & \frac{p_{ji,0}^{(1,0,0)}\mathrm{Var}(y_{ij})}{p_{ji,0}(1-p_{ij,0})}=p_{ji,0}^{(1,0,0)},\quad1\leq i\neq j\leq n,\nonumber\\{}
[\mathrm{Cov(\mathbf{m}_{1,0},\mathbf{s}_{1,0})}]_{ii}= & \sum_{k\neq i}\frac{p_{ik,0}^{(1,0,0)}\mathrm{Var}(y_{ik})}{p_{ik,0}(1-p_{ik,0})}=\sum_{k\neq i}p_{ik,0}^{(1,0,0)},\quad1\leq i\leq n.\label{eq:cov_ms}
\end{align}
 
 Next, we have
\begin{align}
 & \ \frac{1}{\sqrt{N}}\mathbf{s}_{1,0}^{\top}\mathbf{W}_{k,0}\mathbf{J}_{11,0}^{-1}\mathbf{m}_{1,0}=\frac{1}{\sqrt{N}}\mathrm{Tr}(\mathbf{J}_{11,0}^{-1}\mathbf{m}_{1,0}\mathbf{s}_{1,0}^{\top}\mathbf{W}_{k,0})\nonumber \\
= & \ \frac{1}{\sqrt{N}}\mathrm{Tr}[\mathbf{J}_{11,0}^{-1}\mathrm{Cov}(\mathbf{m}_{1,0},\mathbf{s}_{1,0})\mathbf{W}_{k,0}] \nonumber\\
& \ + \frac{1}{\sqrt{N}}\left\{\mathrm{Tr}(\mathbf{J}_{11,0}^{-1}\mathbf{m}_{1,0}\mathbf{s}_{1,0}^{\top}\mathbf{W}_{k,0})-\mathrm{Tr}[\mathbf{J}_{11,0}^{-1}\mathrm{Cov}(\mathbf{m}_{1,0},\mathbf{s}_{1,0})\mathbf{W}_{k,0}]\right\} \nonumber\\
= & \ R_{1}+R_{2}.\label{eq:bias_b}
\end{align}
Notice that $R_{1}\to b_{k0}$ by definition. By the law of large
numbers for U-statistics, $R_{2}\stackrel{p}{\to}0$ under the bounds
in Lemma~\ref{prop:assu5_primitive}. By (\ref{eq:h_i_alpha}) and (\ref{eq:bias_b}),
we have
$N^{-1/2}\nabla_{\bm{\alpha}^{\top}}s_{n}(\widehat{\bm{\alpha}}(\beta_{0}),\beta_{0})(\widehat{\bm{\alpha}}(\beta_{0})-\bm{\alpha}_{0})\stackrel{p}{\to}b_{0}$,
which proves Lemma~\ref{lem:one_step_bound}(a).

For the second result of Lemma~\ref{lem:one_step_bound}, the law of large numbers gives
$N^{-1}[\mathrm{H}_{22}-\mathbf{I}_{12}^{\top}\mathbf{I}_{11}^{-1}\mathbf{H}_{12}]\stackrel{p}{\to}-\mathrm{I}_{0}$
at $(\widehat{\bm{\alpha}}(\bar{\beta}),\bar{\beta})$.
Since $\lVert\partial_{\beta}w_{i}\rVert_{\infty}=O(1)$ by Lemma~\ref{prop:assu5_primitive}
and $N^{-1}\sum_{i}|s_{1i}|=o_{p}(1)$ by the law of large numbers,
\begin{equation}
N^{-1}\nabla_{\beta^{\top}}s_{n}(\widehat{\bm{\alpha}}(\bar{\beta}),\bar{\beta})+\mathrm{I}_{0}=o_{p}(1).\label{eq:information_equality}
\end{equation}
By (\ref{eq:bound_h_i}) and an argument analogous to the proof of the first result,
$N^{-1}\nabla_{\bm{\alpha}^{\top}}s_{n}\cdot\partial\widehat{\bm{\alpha}}/\partial\beta^{\top}=o_{p}(1)$,
which combined with (\ref{eq:information_equality}) proves Lemma~\ref{lem:one_step_bound}(b).
\end{proof}

\subsection{Extension to Multiple Networks}\label{subsec:sm_multi_networks}

We formalize the multi-network extension of Remark~\ref{rem:multi-networks}. Consider $V$ independent networks ($V$ fixed, $\min_v n_v\to\infty$) sharing a common slope $\b_0$ but with network-specific node fixed effects $\bm\alpha_{0,v}$. The link function $p(\cdot)$ may follow either the TU or NTU specification.

\paragraph{Modified assumptions.} Assumptions~\ref{assu:model}--\ref{assu:ass_f} and~\ref{assu:res_order} hold within each network with constants uniform in $v$. Assumption~\ref{assu:mm_iden} is imposed on the pooled concentrated moment
\[
\bar S_n(\beta)\coloneqq N^{-1}\sum_{v=1}^V\sum_{i<j}\mathbb E\!\left[(y_{ij,v}-p_{ij,v}(\bm\alpha_v(\beta),\beta))\,x_{ij,v}\,\big|\,\mathbf x,\bm\alpha_{0,v}\right],
\]
where $\bm\alpha_v(\beta)$ profiles the network-$v$ degree equations and $N\coloneqq\sum_v\binom{n_v}{2}$.

\begin{prop}[Multi-network extension]\label{prop:multinetwork}
Under the modified assumptions above, the conclusions of Theorems~\ref{thm:jmm_alpha_beta_consistency}--\ref{thm:bagging_normality} extend with $N=\sum_v\binom{n_v}{2}$, pooled bias $B_0=\lim\sum_v\sqrt{N_v/N}\,B_{0,v}$, and pooled information $\mathrm I_0=\lim\sum_v (N_v/N)\,\mathrm I_{0,v}$. In particular, $\sqrt N(\widehat\b_{\mathrm{BG}}-\b_0)\stackrel{d}{\to}\mathcal N(0,\mathrm I_0^{-1})$.
\end{prop}

\begin{proof}[Proof of Proposition~\ref{prop:multinetwork}]
Independence across networks makes $\mathbf{m}_1$, $\mathbf{s}_1$, $\mathbf{J}_{11}$, $\mathbf{I}_{11}$, $\mathbf{H}_{11}$ block-diagonal in $v$, so the per-network bounds in Lemmas~\ref{lem:alpha_conv_rate}--\ref{lem:one_step_bound} hold uniformly in $v$, and the proofs of Theorems~\ref{thm:jmm_alpha_beta_consistency}--\ref{thm:bagging_normality} carry over with the following adjustments.

\smallskip\noindent
\emph{JMM and OS.} Block-diagonality gives
\[
N^{-1}\mathrm{I}_n^{\mathrm{pool}}=\sum_v(N_v/N)\,N_v^{-1}\mathrm{I}_{n_v,v}\xrightarrow{p}\sum_v(\lim N_v/N)\,\mathrm{I}_{0,v}=\mathrm I_0,
\]
and analogously $\mathrm J_0$. The leading sums in the pooled JMM and OS expansions are sums of $V$ mutually independent within-network sums of bounded centered dyad-summands, so a Lyapunov CLT yields the stated Gaussian limits. The pooled biases $B_0=\lim\sum_v\sqrt{N_v/N}\,B_{0,v}$ and $b_0=\lim\sum_v\sqrt{N_v/N}\,b_{0,v}$ inherit the per-network $1/\sqrt{N_v}$ normalization through the rescaling $\sqrt{N_v}/\sqrt N$.

\smallskip\noindent
\emph{Bagging.} Halving each $\mathcal{I}_{n_v}$ reduces the pooled dyad count to $N/4$, so each half-pool $\widehat\beta_{\mathrm{OS},h}^{(t)}$ carries twice the leading bias of $\widehat\beta_{\mathrm{OS}}$; the split-network jackknife $\widehat\beta_{\mathrm{OS-SJ}}^{(t)}=2\widehat\beta_{\mathrm{OS}}-\tfrac12(\widehat\beta_{\mathrm{OS},1}^{(t)}+\widehat\beta_{\mathrm{OS},2}^{(t)})$ therefore self-cancels the leading bias for every $t$. The variance argument in the proof of Theorem~\ref{thm:bagging_normality} carries over because the per-network independent splits assign each within-network dyad to be cross-half (and thus dropped from both half-pools) with probability $\tfrac12$ independently across $t$, identical to the single-network setting; averaging over $\widetilde T_n\to\infty$ deflates the variance back to $\mathrm I_0^{-1}/N$. Hence $\sqrt{N}(\widehat\beta_{\mathrm{BG}}-\beta_0)\xrightarrow{d}\mathcal{N}(0,\mathrm{I}_0^{-1})$.\qedhere
\end{proof}

\setcounter{equation}{0}
\setcounter{thm}{0}
\setcounter{lem}{0}
\setcounter{prop}{0}
\setcounter{rem}{0}
\renewcommand{\theequation}{B.\arabic{equation}}
\renewcommand{\thethm}{B.\arabic{thm}}
\renewcommand{\thelem}{B.\arabic{lem}}
\renewcommand{\theprop}{B.\arabic{prop}}
\renewcommand{\therem}{B.\arabic{rem}}

\section{Link Function Misspecification}\label{sec:sm_misspecification}

We study dyadic network formation under possible link function misspecification in this section. The analysis applies to TU and NTU models. Suppose researchers misspecify the link function to be $q(\cdot)$ which differs from
$p\left(\cd\right)$ at points with strictly positive probability
measure. For a fixed $n$, we impose the following identification
assumption to facilitate the analysis. Let $q_{ij}(\bm{\alpha},\beta) \coloneqq q(\alpha_i,\alpha_j,x_{ij}^{\top}\beta)$
be the misspecified probability of linking between $i$ and $j$.
\begin{assumption}[Identification under Link Function Misspecification]
\label{assu:identification_model_misspecification}For sufficiently large $n$,
the normalized nonlinear function
\[
\tilde{S}_{n}(\beta) \coloneqq N^{-1}\sum_{i=1}^{n}\sum_{j>i}\left[p_{ij,0}-q_{ij}(\bm{\alpha}(\beta),\beta)\right]x_{ij}
\]
has a unique root $\beta_{n*}$, and satisfies
\[
\inf_{\beta\in\mathbb{B}:\left\Vert \beta-\beta_{n*}\right\Vert _{2}\geq\delta}\left\Vert \tilde{S}_{n}(\beta)\right\Vert _{2}>0
\]
 for all $\delta>0$, where $\bm{\alpha}(\beta)$
is the unique solution to the following system of equations
\[
\begin{pmatrix}\sum_{j\neq1}p_{1j,0}-\sum_{j\neq1}q_{1j}(\bm{\alpha},\beta),\cdots,\sum_{j\neq n}p_{nj,0}-\sum_{j\neq n}q_{nj}(\bm{\alpha},\beta)\end{pmatrix}^{\top}=0.
\]
\end{assumption}
Assumption~\ref{assu:identification_model_misspecification} is the
counterpart of Assumption~\ref{assu:mm_iden} under a misspecified
link function. Similarly to Lemma~\ref{lem:alpha_conv_rate},
the degree-matching system in Assumption~\ref{assu:identification_model_misspecification} has a unique solution with high
probability under mild conditions on $(\bm{\alpha}_{0},\beta_{0})$
and $\beta$. Thus, Assumption~\ref{assu:identification_model_misspecification}
identifies the homophily parameter under link function misspecification.
Notice that $\beta_{n*}$ depends on the true link function $p\left(\cd\right)$,
misspecified link function $q(\cdot)$, and the true parameter values.
As a result, $\beta_{n*}$ may vary with $n$. The following theorem
shows that the JMM estimator based on the misspecified link function
$q\left(\cd\right)$ is centered at $\beta_{n*}$ up to a bias, which
the split-network jackknife procedure removes asymptotically. Let
$\bm{\alpha}_{*} \coloneqq \bm{\alpha}(\beta_{n*})$ with $\bm{\alpha}\left(\cdot\right)$
satisfying Assumption~\ref{assu:identification_model_misspecification}.

In the next Theorem, $\mathrm{J}_{*}$, $B_{*}$, and $\Omega_{*}$
are defined analogously to $\mathrm{J}_{0}$, $B_{0}$, and $\Omega_{0}$
in Theorem~\ref{thm:jmm_alpha_beta_consistency},
with the pseudo values $(\bm{\alpha}_{*},\beta_{n*})$ and the
misspecified link function $q(\cdot)$. For the sandwich-form variance $\Omega_{*}$, the sandwich ``meat'' remains governed by the true DGP through
$\mathrm{Var}(y_{ij}\mid \mathbf{x},\bm{\alpha}_{0})=p_{ij,0}(1-p_{ij,0})$.  Furthermore,
 $\Omega_{*}$ can be consistently estimated by the plug-in sandwich formula.
\begin{thm}[JMM Estimation under Link Function Misspecification]
\label{thm:jmm_model_miss}If Assumptions \ref{assu:model}--\ref{assu:ass_f}
and \ref{assu:identification_model_misspecification} hold, then 
\[
\sqrt{N}(\hat{\beta}-\beta_{n*})-\mathrm{J}_{*}^{-1}B_{*}\stackrel{d}{\to}\mathcal{N}\left(0,\Omega_{*}\right).
\]
\end{thm}
Theorem~\ref{thm:jmm_model_miss} shows that the JMM estimator remains consistent for the pseudo-true value $\beta_{n*}$ (the unique solution to the pseudo-moment equations in Assumption~\ref{assu:identification_model_misspecification}) even when the link function is misspecified.

Under the link function misspecification, the one-step estimator becomes
\begin{equation}
\hat{\beta}_{\mathrm{OS}} \coloneqq \hat{\beta}-\mathrm{H}(\widehat{\bm{\alpha}},\hat{\beta})^{-1}s_{n}(\widehat{\bm{\alpha}},\hat{\beta}),\label{eq:os_miss}
\end{equation}
with the JMM estimator $(\widehat{\bm{\alpha}},\hat{\beta})$ substituted
in. Note that
\begin{equation}
\mathrm{H}(\bm{\alpha},\beta) \coloneqq \mathrm{H}_{22}(\bm{\alpha},\beta)-\mathbf{H}_{12}(\bm{\alpha},\beta)^{\top}\mathbf{H}_{11}(\bm{\alpha},\beta)^{-1}\mathbf{H}_{12}(\bm{\alpha},\beta),\label{eq:h_alpha_beta_def}
\end{equation}
is the concentrated Hessian matrix, and
\[
s_{n}(\bm{\alpha},\beta) \coloneqq s_{2}(\bm{\alpha},\beta)-\mathbf{H}_{12}(\bm{\alpha},\beta)^{\top}\mathbf{H}_{11}(\bm{\alpha},\beta)^{-1}\mathbf{s}_{1}(\bm{\alpha},\beta),
\]
is the concentrated score function. To define the pseudo-true target of
$\hat{\beta}_{\mathrm{OS}}$, write
$\bar{\mathbf{H}}_{ab}(\bm{\alpha},\beta)\coloneqq\mathbb{E}[\mathbf{H}_{ab}(\bm{\alpha},\beta)\mid\mathbf{x},\bm{\alpha}_{0}]$
for the expected Hessian blocks, and let
\[
\bar{\mathrm{H}}(\bm{\alpha},\beta)\coloneqq\bar{\mathrm{H}}_{22}-\bar{\mathbf{H}}_{12}^{\top}\bar{\mathbf{H}}_{11}^{-1}\bar{\mathbf{H}}_{12},\qquad
\bar{s}_{n}(\bm{\alpha},\beta)\coloneqq\mathbb{E}[s_{2}]-\bar{\mathbf{H}}_{12}^{\top}\bar{\mathbf{H}}_{11}^{-1}\mathbb{E}[\mathbf{s}_{1}],
\]
denote the corresponding population concentrated Hessian and score,
with arguments $(\bm{\alpha},\beta)$ suppressed for brevity.
Under the link function misspecification,
$\hat{\beta}_{\mathrm{OS}}$ in (\ref{eq:os_miss}) centers on the deterministic sequence
\begin{equation}
\beta_{n\sharp}\coloneqq\beta_{n*}-N^{-1}\mathrm{H}_{*}^{-1}\bar{s}_{n}(\bm{\alpha}_{*},\beta_{n*}),\label{eq:os_pseudo_value_miss}
\end{equation}
where $\mathrm{H}_{*}\coloneqq\lim_{n\to\infty}N^{-1}\bar{\mathrm{H}}(\bm{\alpha}_{*},\beta_{n*})$.
$\beta_{n\sharp}$ can be seen as a projection of $\beta_{n*}$ by concentrating
out the fixed effects. When the link function is correctly specified,
$(\bm{\alpha}_{*},\beta_{n*})\equiv(\bm{\alpha}_{0},\beta_{0})$,
thus $\beta_{n\sharp}\equiv\beta_{n*}\equiv\beta_{0}$ because
$\mathbb{E}[\mathbf{s}_{1,0}]=0$
and $\mathbb{E}[s_{2,0}]=0$
imply $\bar{s}_{n,0}\equiv0$.
Furthermore, our OS and BG estimators in the misspecified case share
similar asymptotic properties as their counterparts under correct specification, except that they now center on
the projected pseudo-true value $\beta_{n\sharp}$ instead of $\b_{0}$.

For the next theorem, define $b_{*}$ as $b_{0}$ but with $(\bm{\alpha}_{*},\beta_{n*})$
and the misspecified link function $q(\cdot)$. The asymptotic covariance
matrix $\Gamma_{*}$ is
\begin{equation}
\Gamma_{*} \coloneqq \lim_{n\to\infty} N^{-1}\mathrm{H}_{*}^{-1}\left[
\begin{aligned}
&\mathrm{I}_{22*}+\mathbf{H}_{12*}^{\top}\mathbf{H}_{11*}^{-1}\mathbf{I}_{11*}(\mathbf{H}_{11*}^{-1}\mathbf{H}_{12*})^{\top}\\
&-\mathbf{H}_{12*}^{\top}\mathbf{H}_{11*}^{-1}\mathbf{I}_{12*}-(\mathbf{H}_{12*}^{\top}\mathbf{H}_{11*}^{-1}\mathbf{I}_{12*})^{\top}
\end{aligned}\right](\mathrm{H}_{*}^{-1})^{\top}.\label{eq:gamma_star}
\end{equation}

\begin{thm}[OS and BG Estimation under Misspecified Link Function]
 \label{thm:os_asymptotic_results_miss}Suppose all the bounds in
Lemma~\ref{prop:assu5_primitive} hold for each element of $\mathbf{H}_{12}^{\top}\mathbf{H}_{11}^{-1}$.
If Assumptions \ref{assu:model}--\ref{assu:ass_f} and \ref{assu:identification_model_misspecification}
are satisfied, then
\begin{equation*}
\sqrt{N}(\hat{\beta}_{\mathrm{OS}}-\beta_{n\sharp})+\mathrm{H}_{*}^{-1}b_{*}  \stackrel{d}{\to} \mathcal{N}(0,\Gamma_{*})\quad\text{and}\quad
\sqrt{N}(\hat{\beta}_{\mathrm{BG}}-\beta_{n\sharp})  \stackrel{d}{\to} \mathcal{N}(0,\Gamma_{*}).
\end{equation*}
\end{thm}
Theorem~\ref{thm:os_asymptotic_results_miss} shows that $\hat{\beta}_{\mathrm{BG}}$ is robust: under correct specification it centers on $\beta_{0}$ and attains the CRLB; otherwise it centers on the projected pseudo-true value $\beta_{n\sharp}$ with no asymptotic bias. The covariances $\Omega_{*}$ and $\Gamma_{*}$ are consistently estimated by plugging $(\hat{\bm{\alpha}},\hat{\beta})$ into the formulas of Theorems~\ref{thm:jmm_model_miss} and~\ref{thm:os_asymptotic_results_miss}.

\subsection{Proofs of Theorems~\ref{thm:jmm_model_miss} and~\ref{thm:os_asymptotic_results_miss}}

\begin{proof}
Both proofs proceed in parallel to their correctly specified counterparts (Theorems~\ref{thm:jmm_alpha_beta_consistency}, \ref{thm:os_norm}, and~\ref{thm:bagging_normality}), with $(\bm{\alpha}_{0},\beta_{0})$ replaced by the pseudo-true parameters $(\bm{\alpha}_{*},\beta_{n*})$ and $(\bm{\alpha}_{*},\beta_{n\sharp})$, respectively. The existence and uniqueness of $\bm\alpha_*$ follows from Lemma~\ref{lem:alpha_conv_rate} applied to the misspecified link function $q(\cdot)$ under Assumption~\ref{assu:identification_model_misspecification}.

For Theorem~\ref{thm:jmm_model_miss}: the consistency argument is identical to that of Theorem~\ref{thm:jmm_alpha_beta_consistency}, with $\bar{S}_n(\beta)$ replaced by $\tilde{S}_n(\beta)$ from Assumption~\ref{assu:identification_model_misspecification}. The asymptotic normality follows the same Taylor expansion, except that the variance of $y_{ij}-q_{ij}(\bm\alpha_*,\beta_{n*})$ under the true DGP is $p_{ij,0}(1-p_{ij,0})$ (not $q_{ij}(1-q_{ij})$), yielding the sandwich-form variance $\Omega_*$ in place of $\Omega_0$.

For Theorem~\ref{thm:os_asymptotic_results_miss}: the one-step expansion parallels Theorem~\ref{thm:os_norm}, with the Hessian $\mathbf{H}$ replacing the information matrix $\mathbf{I}$ (since the information equality $\mathbb{E}[\mathbf{H}]=-\mathbf{I}$ fails under misspecification). The population concentrated score $\bar{s}_n(\bm\alpha_*,\beta_{n*})\neq 0$ shifts the pseudo-true target from $\beta_{n*}$ to $\beta_{n\sharp}$ as defined in (\ref{eq:os_pseudo_value_miss}). The bagging argument is unchanged.\qedhere
\end{proof}

\setcounter{equation}{0}
\setcounter{thm}{0}
\setcounter{lem}{0}
\setcounter{prop}{0}
\setcounter{rem}{0}
\setcounter{figure}{0}
\setcounter{table}{0}
\renewcommand{\theequation}{C.\arabic{equation}}
\renewcommand{\thethm}{C.\arabic{thm}}
\renewcommand{\thelem}{C.\arabic{lem}}
\renewcommand{\theprop}{C.\arabic{prop}}
\renewcommand{\therem}{C.\arabic{rem}}
\renewcommand{\thefigure}{C.\arabic{figure}}
\renewcommand{\thetable}{C.\arabic{table}}

\section{Monte Carlo Simulations}\label{sec:sm_simulations}

This section complements the baseline results in Section~\ref{sec:simulation} of the main text. Section~\ref{subsec:nonconcavity} illustrates the non-concavity challenge faced by direct MLE. Section~\ref{subsec:ntu_extended} presents extended NTU results: fixed-effect recovery, the APE estimator, link-function misspecification, and sparser networks. Section~\ref{subsec:tu_extended} briefly summarizes analogous TU results.

\subsection{Non-Concavity of MLE in the Fixed Effects}\label{subsec:nonconcavity}

The major challenge of directly invoking MLE is the non-concavity
of the log-likelihood function in the high-dimensional fixed effects $\bm{\alpha}$.
To illustrate this, Figure~\ref{fig:alpha_rmse} compares the distributions
of $\lVert\widehat{\bm{\alpha}}-\bm{\alpha}_{0}\rVert_{2}$ obtained by JMM and MLE, both initialized
at the same starting values.\footnote{MLE is implemented using the
L-BFGS-B algorithm in Python's SciPy package.} Both panels draw $\bm{\alpha}_{0}$ from
bounded supports but with structures that expose the non-concavity problem.
The left panel uses a symmetric bimodal design
$\alpha_{i0}\stackrel{i.i.d.}{\sim}0.5\,U(-3,-1)+0.5\,U(1,3)$, with a gap at zero;
the right panel considers a hub--periphery design
$\alpha_{i0}\stackrel{i.i.d.}{\sim}0.9\,U(-1.5,-0.5)+0.1\,U(3,4)$, in which a small
fraction of high-degree hubs coexists with a low-degree periphery.
In both settings the contrast is stark: JMM's RMSE distribution is
tightly concentrated near zero, whereas the bulk of MLE's distribution
shifts to substantially larger values, indicating that MLE often fails
to consistently recover $\bm{\alpha}_{0}$ from the same starting point.
The hub--periphery case is more extreme, with MLE errors reaching
nearly $10$ in some replications. These results highlight that, when
the fixed effects are high-dimensional, the non-concavity of the
log-likelihood poses a serious practical obstacle for MLE even under
bounded support, which our moment-based approach effectively circumvents.

\begin{figure}[!htbp]
\centering
\includegraphics[width=1\textwidth]{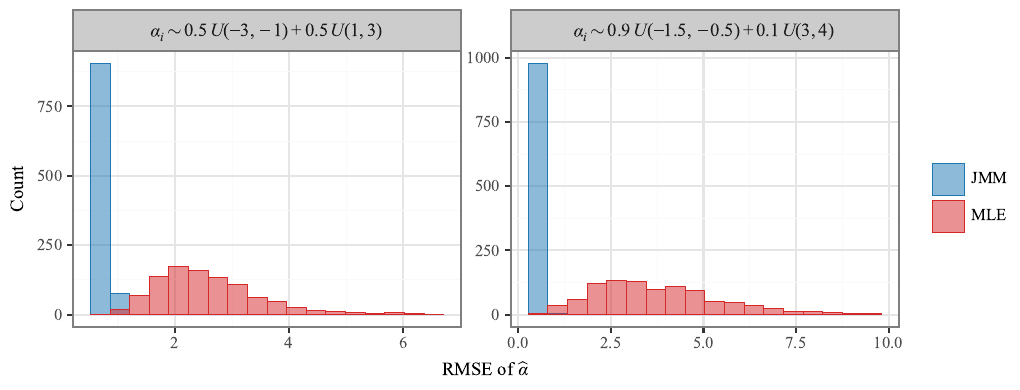}
\caption{Distributions of RMSE of $\widehat{\bm\alpha}$ under two NTU DGPs ($n=100$)}\label{fig:alpha_rmse}
\end{figure}

\subsection{Extended NTU Results}\label{subsec:ntu_extended}

This section reports further finite-sample results under the NTU specification of Section~\ref{sec:simulation}.

\paragraph{Fixed-effect recovery.} To examine estimation performance in the high-dimensional setting, where $\bm{\alpha}_{0}$ is an $n\times 1$ vector, we additionally consider $n=500$ and $1{,}000$. Figure~\ref{fig:alpha_fit} reports histograms of $\widehat{\alpha}_{i}-\alpha_{i0}$ for $i\in\mathcal{I}_{n}$. Across all sample sizes, the histograms are centered around zero. As $n$ increases, the estimation accuracy of $\widehat{\alpha}_{i}$ improves, and the dispersion of the estimation errors contracts toward zero, consistent with the theory.

\begin{figure}[!htbp]
\centering
\includegraphics[width=1\textwidth]{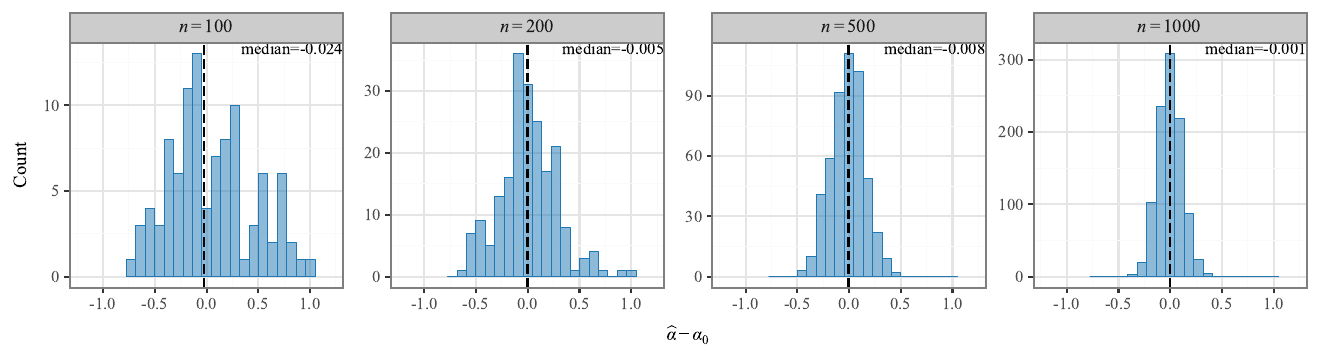}
\caption{Histograms of $\widehat{\bm{\alpha}}-\bm{\alpha}_{0}$ under NTU for different $n$}\label{fig:alpha_fit}
\end{figure}

\paragraph{APEs.} Table~\ref{tab:sim_ape} summarizes the APEs for each coordinate of $X_{ij}$ defined in (\ref{eq:ape}). Our plug-in estimator performs well with respect to RMSE and coverage probabilities. When applied to the estimation of APEs, the split-network jackknife bagging method does not yield meaningful improvement. As predicted by Theorem~\ref{thm:norm_ape}, the asymptotic bias in estimating APEs is $O(n^{-1})$, which is negligible relative to the APE estimator's $O(n^{-1/2})$ standard error.

\begin{table}[!htbp]
\centering
\begin{centering}
\caption{NTU estimation results for the APEs}\label{tab:sim_ape}
\par\end{centering}
\begin{threeparttable}

\begin{tabular}{lrrrrrrrr}
\toprule 
\multirow{3}{*}{} & \multicolumn{4}{c}{\emph{$n=100$}} & \multicolumn{4}{c}{\emph{$n=200$}}\tabularnewline
\cmidrule{2-9}
 & \multicolumn{2}{c}{Plug-in} & \multicolumn{2}{c}{Bagging} & \multicolumn{2}{c}{Plug-in} & \multicolumn{2}{c}{Bagging}\tabularnewline
\cmidrule{2-9}
 & $X_{ij,1}$ & $X_{ij,2}$ & $X_{ij,1}$ & $X_{ij,2}$ & $X_{ij,1}$ & $X_{ij,2}$ & $X_{ij,1}$ & $X_{ij,2}$\tabularnewline
\midrule 
Mean Bias & -0.14 & 0.39 & -0.16 & 0.14 & -0.00 & 0.19 & -0.00 & 0.09\tabularnewline
Median Bias & -0.04 & 0.46 & -0.09 & 0.21 & 0.04 & 0.15 & 0.06 & 0.08\tabularnewline
Standard Deviation & 2.75 & 2.79 & 2.89 & 2.85 & 1.32 & 1.42 & 1.37 & 1.45\tabularnewline
Mean Standard Error & 2.78 & 2.88 & 2.78 & 2.88 & 1.40 & 1.46 & 1.40 & 1.46\tabularnewline
Mean Absolute Bias & 2.18 & 2.26 & 2.28 & 2.29 & 1.04 & 1.15 & 1.08 & 1.16\tabularnewline
Median Absolute Bias & 1.78 & 1.87 & 1.91 & 1.88 & 0.84 & 0.97 & 0.89 & 0.99\tabularnewline
RMSE & 2.75 & 2.82 & 2.90 & 2.86 & 1.32 & 1.43 & 1.37 & 1.45\tabularnewline
90\% Coverage Rate & 90.8 & 90.2 & 89.2 & 89.2 & 91.8 & 90.6 & 90.8 & 90.8\tabularnewline
95\% Coverage Rate & 95.0 & 95.0 & 94.4 & 95.2 & 96.2 & 95.7 & 95.4 & 96.1\tabularnewline
\bottomrule
\end{tabular}

\begin{tablenotes}
\setlength{\itemindent}{-0.5cm}
\item \footnotesize \textit{Note:} All values have been multiplied by 100; true values of APEs are calibrated from a simulation with $n=10,000$ agents.
\end{tablenotes}
\end{threeparttable}

\end{table}

\paragraph{Link-function misspecification.} Table~\ref{tab:sim_beta_misspecify} presents the results for estimating the homophily coefficients under misspecification of the distribution of $\epsilon_{ij}$. We draw $\epsilon_{ij}$ from the standard normal distribution, but ``mistakenly'' specify the logistic link function in the estimation. Following Theorems~\ref{thm:jmm_model_miss} and~\ref{thm:os_asymptotic_results_miss}, we compare JMM to $\beta_{n*}$ and OS/BG to $\beta_{n\sharp}$ defined in (\ref{eq:os_pseudo_value_miss}), and find that the results are satisfactory. The performance of our BG estimator dominates other estimators in terms of bias, variance, and coverage probabilities, highlighting the efficacy and importance of proper bias-correction procedures.

\begin{table}[!htbp]
\centering
\begin{centering}
\caption{NTU estimation results for pseudo-true values under link function misspecification}\label{tab:sim_beta_misspecify}
\par\end{centering}
\begin{threeparttable}

\begin{tabular}{lrrrrrr}
\toprule 
\multirow{2}{*}{\emph{$n=100$}} & \multicolumn{2}{c}{JMM} & \multicolumn{2}{c}{OS} & \multicolumn{2}{c}{BG}\tabularnewline
\cmidrule{2-7}
 & $\b_{1}$ & $\b_{2}$ & $\b_{1}$ & $\b_{2}$ & $\b_{1}$ & $\b_{2}$\tabularnewline
\midrule 
Mean Bias & 5.92 & -5.34 & 5.36 & -5.14 & 0.15 & -0.02\tabularnewline
Median Bias & 5.99 & -4.83 & 5.37 & -4.91 & 0.24 & 0.11\tabularnewline
Standard Deviation & 6.16 & 15.36 & 6.16 & 15.34 & 5.98 & 14.93\tabularnewline
Mean Standard Error & 6.24 & 15.55 & 6.20 & 15.39 & 6.20 & 15.39\tabularnewline
Mean Absolute Bias & 7.03 & 12.81 & 6.66 & 12.76 & 4.78 & 11.89\tabularnewline
Median Absolute Bias & 6.29 & 10.18 & 5.91 & 10.44 & 4.02 & 9.84\tabularnewline
RMSE & 8.55 & 16.26 & 8.16 & 16.17 & 5.99 & 14.93\tabularnewline
90\% Coverage Rate & 75.8 & 88.6 & 78.0 & 88.5 & 91.7 & 90.8\tabularnewline
95\% Coverage Rate & 84.0 & 94.2 & 86.5 & 94.1 & 96.7 & 95.6\tabularnewline
\midrule
\multirow{2}{*}{\emph{$n=200$}} & \multicolumn{2}{c}{JMM} & \multicolumn{2}{c}{OS} & \multicolumn{2}{c}{BG}\tabularnewline
\cmidrule{2-7}
 & $\b_{1}$ & $\b_{2}$ & $\b_{1}$ & $\b_{2}$ & $\b_{1}$ & $\b_{2}$\tabularnewline
\midrule
Mean Bias & 2.67 & -2.69 & 2.49 & -2.54 & -0.08 & 0.01\tabularnewline
Median Bias & 2.66 & -2.62 & 2.47 & -2.32 & -0.11 & 0.24\tabularnewline
Standard Deviation & 3.06 & 7.44 & 3.04 & 7.43 & 2.99 & 7.34\tabularnewline
Mean Standard Error & 3.05 & 7.51 & 3.04 & 7.44 & 3.04 & 7.44\tabularnewline
Mean Absolute Bias & 3.28 & 6.28 & 3.16 & 6.24 & 2.38 & 5.90\tabularnewline
Median Absolute Bias & 2.89 & 5.18 & 2.78 & 5.29 & 1.98 & 5.06\tabularnewline
RMSE & 4.06 & 7.91 & 3.93 & 7.86 & 2.99 & 7.34\tabularnewline
90\% Coverage Rate & 78.5 & 88.5 & 79.3 & 89.0 & 90.0 & 91.2\tabularnewline
95\% Coverage Rate & 86.6 & 94.0 & 86.7 & 94.1 & 94.9 & 95.0\tabularnewline
\bottomrule
\end{tabular}

\begin{tablenotes}
\setlength{\itemindent}{-0.5cm}
\item \footnotesize \textit{Note:} All values have been multiplied by 100.
\end{tablenotes}
\end{threeparttable}

\end{table}

\paragraph{Sparser networks.} We examine the performance of the method in networks with fewer links on average. To this end, we lower all $\alpha_{i}$'s by one, resulting in a network density of 8.6\%. As reported in Table~\ref{tab:sim_beta_sparse}, network sparsity worsens the performance of all estimators. Nevertheless, the BG estimator continues to outperform the others across all metrics.

\begin{table}[!htbp]
\centering
\begin{centering}
\caption{NTU estimation results for $\protect\b_{0}$ under a sparser network}\label{tab:sim_beta_sparse}
\par\end{centering}
\begin{threeparttable}

\begin{tabular}{lrrrrrr}
\toprule 
\multirow{2}{*}{\emph{$n=100$}} & \multicolumn{2}{c}{JMM} & \multicolumn{2}{c}{OS} & \multicolumn{2}{c}{BG}\tabularnewline
\cmidrule{2-7}
 & $\b_{1}$ & $\b_{2}$ & $\b_{1}$ & $\b_{2}$ & $\b_{1}$ & $\b_{2}$\tabularnewline
\midrule 
Mean Bias & 4.25 & -7.68 & 4.76 & -4.24 & -0.46 & 0.70\tabularnewline
Median Bias & 4.22 & -8.26 & 4.78 & -4.77 & -0.35 & 0.01\tabularnewline
Standard Deviation & 7.45 & 18.58 & 7.45 & 18.66 & 7.08 & 17.91\tabularnewline
Mean Standard Error & 7.39 & 17.91 & 7.36 & 17.81 & 7.36 & 17.81\tabularnewline
Mean Absolute Bias & 6.90 & 16.21 & 7.11 & 15.45 & 5.69 & 14.37\tabularnewline
Median Absolute Bias & 5.86 & 14.13 & 6.17 & 13.25 & 4.97 & 11.99\tabularnewline
RMSE & 8.58 & 20.11 & 8.84 & 19.14 & 7.09 & 17.93\tabularnewline
90\% Coverage Rate & 85.8 & 85.4 & 83.5 & 87.6 & 91.5 & 90.4\tabularnewline
95\% Coverage Rate & 91.7 & 92.2 & 90.1 & 93.2 & 96.5 & 95.1\tabularnewline
\midrule
\multirow{2}{*}{\emph{$n=200$}} & \multicolumn{2}{c}{JMM} & \multicolumn{2}{c}{OS} & \multicolumn{2}{c}{BG}\tabularnewline
\cmidrule{2-7}
 & $\b_{1}$ & $\b_{2}$ & $\b_{1}$ & $\b_{2}$ & $\b_{1}$ & $\b_{2}$\tabularnewline
\midrule
Mean Bias & 2.14 & -3.91 & 2.35 & -2.45 & -0.14 & -0.04\tabularnewline
Median Bias & 1.99 & -3.95 & 2.19 & -2.44 & -0.30 & 0.02\tabularnewline
Standard Deviation & 3.65 & 8.74 & 3.65 & 8.72 & 3.55 & 8.53\tabularnewline
Mean Standard Error & 3.59 & 8.72 & 3.58 & 8.68 & 3.58 & 8.68\tabularnewline
Mean Absolute Bias & 3.37 & 7.61 & 3.47 & 7.18 & 2.85 & 6.77\tabularnewline
Median Absolute Bias & 2.79 & 6.19 & 2.95 & 5.90 & 2.48 & 5.68\tabularnewline
RMSE & 4.23 & 9.58 & 4.34 & 9.06 & 3.56 & 8.53\tabularnewline
90\% Coverage Rate & 84.0 & 86.2 & 82.3 & 87.9 & 91.4 & 90.9\tabularnewline
95\% Coverage Rate & 90.3 & 93.0 & 88.8 & 94.0 & 95.2 & 95.2\tabularnewline
\bottomrule
\end{tabular}

\begin{tablenotes}
\setlength{\itemindent}{-0.5cm}
\item \footnotesize \textit{Note:} All values have been multiplied by 100.
\end{tablenotes}
\end{threeparttable}

\end{table}

\subsection{Transferable-Utility Extended Results}\label{subsec:tu_extended}

For the TU specification, we also conducted analogous exercises at $n=100$ and $n=200$ under both logistic and probit links, covering fixed-effect recovery, the APE estimator, link-function misspecification, and sparser networks. The qualitative conclusions match those reported for NTU: the BG estimator substantially reduces the bias of JMM and OS while preserving coverage close to the nominal level, and the APE results are consistent with the $O(n^{-1})$ bias bound of Theorem~\ref{thm:norm_ape}. The detailed tables are omitted for brevity and are available from the authors upon request.

\setcounter{equation}{0}
\setcounter{thm}{0}
\setcounter{lem}{0}
\setcounter{prop}{0}
\setcounter{rem}{0}
\renewcommand{\theequation}{D.\arabic{equation}}
\renewcommand{\thethm}{D.\arabic{thm}}
\renewcommand{\thelem}{D.\arabic{lem}}
\renewcommand{\theprop}{D.\arabic{prop}}
\renewcommand{\therem}{D.\arabic{rem}}

\section{Additional Details for Empirical Applications}\label{sec:sm_data}

This section provides detailed data descriptions and summary statistics for the two empirical applications in Section~\ref{sec:empirical_applications} of the main text.

\subsection{Townsend Thai Village Networks}\label{sec:sm_thai}

The Townsend Thai monthly panel is a long-running household survey covering rural and peri-urban villages in four provinces of Thailand. We use the subsample of $V=16$ villages for which the transaction-level network data have been compiled by \citet{kinnan2024propagation}. Each village has an average of $44$ households (min 24, max 50), and the data record monthly transactions across multiple domains from September 1998 through December 2012. We aggregate monthly transactions to the annual level, yielding a pooled dyad sample of $N=\sum_{v=1}^{16}\binom{n_v}{2}=15{,}641$ observations in each year.

We construct three binary dyadic outcome variables from the transaction records:
\begin{enumerate}[label=(\roman*),itemsep=2pt,topsep=2pt]
\item a \emph{financial} link, equal to $1$ if the two households exchange gifts or informal loans/repayments during the year;
\item an \emph{operations} link, equal to $1$ if they transact in production inputs, intermediate goods, or output sales/purchases; and
\item a \emph{labor} link, equal to $1$ if one household hires labor from the other or they exchange labor.
\end{enumerate}
We treat each resulting binary pairwise indicator as $Y_{ij,v}$ in (\ref{eq:observed_link}) under the TU specification. Because our theory requires a dense-network regime, for each outcome we select the year with the highest pooled density: year $3$ for the financial network, year $2$ for the operations network, and year $3$ for the labor network.

Our covariates are three dyadic variables constructed from baseline household characteristics, together denoted $X_{ij}$:
\begin{enumerate}[label=(\roman*),itemsep=2pt,topsep=2pt]
\item $X_{1,ij}=\ln(\lVert D_i-D_j\rVert_2)$, the (ln) demographic distance, where $D_i$ is a vector of baseline household composition and head characteristics (number of males, females, adults, children, total size, mean age, and head's sex, age, and education);
\item $X_{2,ij}=\left|\ln(W_i)-\ln(W_j)\right|$, the absolute (ln) difference in baseline net worth $W_i$ (aggregated from the household financial accounts); and
\item $X_{3,ij}\in\{0,1\}$, an indicator for kinship based on the reported relationship codes in the household roster.
\end{enumerate}
Summary statistics for the three outcome variables and the covariates, pooled across villages, are reported in Table~\ref{tab:sum_stat_thai}.

\begin{table}[!htbp]
\caption{Summary statistics for the Townsend Thai village networks (pooled across $V=16$ villages)}\label{tab:sum_stat_thai}
\centering
\begin{tabular}{lrrrr}
\toprule
Variable & Mean & Std. Dev. & Min & Max \\
\midrule
Financial link & 0.0146 & 0.1201 & 0 & 1 \\
Operations link & 0.0536 & 0.2252 & 0 & 1 \\
Labor link & 0.0779 & 0.2681 & 0 & 1 \\
(ln) Demographic difference & 5.4840 & 0.6193 & 1.7132 & 7.0546 \\
(ln) Net-worth difference & 2.0698 & 4.0895 & 0 & 32.7114 \\
Kinship & 0.0330 & 0.1786 & 0 & 1 \\
\bottomrule
\end{tabular}
\end{table}

\subsection{Nyakatoke Risk-Sharing Network}\label{sec:sm_nyakatoke}

The network data of Nyakatoke, located in the Kagera Region of Tanzania,
covers a small Haya community of all 119 households. We investigate
how important wealth difference, distance, and blood or religious
ties are in deciding the formation of risk-sharing links among local residents.
The dataset includes the following variables: (i) whether or not two
households are linked in the insurance network, (ii) total USD assets
and religion of each household, (iii) kinship and distance between
households. To define the dependent variable \textit{link}, each household
was asked:

\medskip

``\textit{Can you give a list of people from inside or outside of
Nyakatoke, who you can personally rely on for help and/or that can
rely on you for help in cash, kind or labor?}''

\medskip

The data contains three answers of ``bilaterally mentioned'', ``unilaterally
mentioned'', and ``not mentioned'' between each pair of households.
Considering the question is about whether one can rely on the other
for help, we interpret both ``bilaterally mentioned'' and ``unilaterally
mentioned'' as indicating that they are connected in this undirected network.
This coding is broader than literal bilateral consent and should be
interpreted as a proxy for an underlying risk-sharing relationship.
In the context of village economies, these links are unlikely to be
formed through explicit side-payment transfers, making NTU the more
natural benchmark.

We estimate the coefficients for three regressors: \emph{wealth difference},
\emph{distance} and \emph{tie} between households. \textit{Wealth}
is defined as the total assets in USD owned by each household, including
livestock, durables and land. \textit{Distance} measures how far
away two households are located in meters. \textit{Tie} is a discrete
variable, with the value ``3'' if members of one household are parents,
children and/or siblings of members of the other household, ``2''
if nephews, nieces, aunts, cousins, grandparents and grandchildren,
``1'' if any other blood relation applies or if two households share
the same religion, and ``0'' if no blood or religious tie exists. Following
the literature, we take the natural logarithm on \textit{wealth} and \textit{distance},
and we construct the \textit{wealth difference} variable as the absolute
difference in \emph{wealth}, i.e.,
\[
X_{ij}=\left(\abs{\ln\left(\text{wealth}_{i}\right)-\ln\left(\text{wealth}_{j}\right)},\ \ln\left(\text{distance}_{ij}\right),\ \text{tie}_{ij}\right)^{\top}.
\]

Five households in the data have no information on \textit{wealth}
and/or \textit{distance}. We drop these observations, resulting in
a sample of $n=114$ households and $N=6{,}441$ dyadic observations. Table~\ref{tab:empirical_application_summary} reports
the summary statistics.

\begin{table}[!htbp]
\caption{Summary statistics for the Nyakatoke network}\label{tab:empirical_application_summary}

\vspace{0cm}
\centering{}%
\begin{tabular}{lrrrr}
\toprule 
Variables & Mean & Std. Dev. & Min & Max\tabularnewline
\midrule
link & 0.0733 & 0.2606 & 0.0000 & 1.0000\tabularnewline
(ln) wealth difference & 1.0365 & 0.8227 & 0.0004 & 5.8898\tabularnewline
(ln) distance & 6.0553 & 0.7092 & 2.6672 & 7.4603\tabularnewline
tie & 0.4260 & 0.6123 & 0.0000 & 3.0000\tabularnewline
\bottomrule
\end{tabular}
\end{table}

%\clearpage

\singlespacing

\renewcommand{\refname}{References for Supplemental Material}
\putbib
\end{bibunit}

\end{document}